\documentclass[journal,letterpaper,onecolumn,11pt]{IEEEtran}
\usepackage{hyperref}
\usepackage{tikz}
\usetikzlibrary{positioning, arrows.meta, fit, backgrounds, calc}
\usetikzlibrary{arrows.meta}
\usepackage{amsmath}
\usepackage{amsthm}
\newtheorem{assumption}{Assumption}
\hypersetup{
 colorlinks,
 linkcolor={blue!100!black},
 citecolor={blue!100!black},
 urlcolor={blue!80!black}
}
\usepackage{accents}
\newlength{\dhatheight}

\def\BibTeX{{\rm B\kern-.05em{\sc i\kern-.025em b}\kern-.08em
    T\kern-.1667em\lower.7ex\hbox{E}\kern-.125emX}}
\usepackage{stmaryrd} 

\newcommand{\bE}{\mathbb{E}}

\usepackage{booktabs}

\newcommand{\cL}{\mathcal{L}}

\newcommand{\cT}{\mathcal{T}}

\newcommand{\cV}{\mathcal{V}}

\newcommand{\cY}{\mathcal{Y}}

\newcommand{\bolda}{\mathbf{a}}
\newcommand{\boldb}{\mathbf{b}}

\newcommand{\boldt}{\mathbf{t}}
\newcommand{\boldu}{\mathbf{u}}

\newcommand{\boldx}{\mathbf{x}}
\newcommand{\boldy}{\mathbf{y}}
\newcommand{\boldz}{\mathbf{z}}

\newcommand{\boldA}{\mathbf{A}}
\newcommand{\boldB}{\mathbf{B}}

\newcommand{\boldD}{\mathbf{D}}

\newcommand{\boldT}{\mathbf{T}}

\newcommand{\boldX}{\mathbf{X}}
\newcommand{\boldY}{\mathbf{Y}}
\newcommand{\boldZ}{\mathbf{Z}}

\usepackage{tikz}
\newif\ifFULL
\FULLfalse
\usepackage[utf8]{inputenc} 
\usepackage[T1]{fontenc}
\usepackage{url}
\usepackage{bbm}
\usepackage{ifthen}
\usepackage[cmex10]{mathtools}

\usepackage{verbatim}
\usepackage{amsthm,amssymb}
\usepackage{xcolor}
\usepackage{graphicx}
\usepackage{amssymb}
\usepackage{epstopdf}

\newtheorem{theorem}{Theorem}

\newtheorem{remark}{Remark}

\newtheorem{proposition}{Proposition}

\newtheorem{lemma}{Lemma}
\newtheorem{definition}{Definition}
\newtheorem{construction}{Construction}
\newtheorem{corollary}{Corollary}
\usepackage[style=ieee,citestyle=numeric-comp,sorting=nty]{biblatex}
\usepackage{algorithm}
\usepackage{algorithmic}
\newcommand{\red}[1]{\textcolor{red}{#1}}

\begin{document}
\title{The Capacity of the Noisy Torn Paper Channel and the Noisy Shuffling Channel}


\author{%
  \IEEEauthorblockN{\textbf{Junsheng Liu} and \textbf{Netanel Raviv}}  \IEEEauthorblockA{\\~\\Department of Computer Science and Engineering\\
                    Washington University in St Louis, St Louis, MO, USA\\
                    \texttt{junsheng,netanel.raviv@wustl.edu}}
}


\maketitle

\newcommand{\ind}{\mathbf{1}}
\newcommand{\Csh}{\mathsf{C}_p}
\newcommand{\Var}{\operatorname{Var}}
\newcommand{\wt}{\operatorname{wt}}

\begin{abstract}
The torn paper channel breaks a codeword into 
fragments and returns all fragments without their original
order. 
The noisy torn paper channel additionally
introduces bit substitution errors to each fragment.
Recovering the message therefore requires accounting for both
bit errors and uncertainty about the fragment order.
We establish a capacity upper bound for the noisy torn paper
channel that matches the achievable rate of Ravi, Vahid, and
Shomorony and resolves their conjecture.
Our results also determine the capacity of the noisy shuffling channel of Shomorony and Heckel, in which all fragments have the same length, and resolve their open question.
\end{abstract}
\thispagestyle{empty}
\begin{IEEEkeywords}
Noisy torn paper channel, noisy shuffling channel, substitution errors, channel capacity.
\end{IEEEkeywords}

\section{Introduction}
\label{sec:introduction}
Consider an information sequence encoded on a long strip
of paper.
The strip is torn at random locations, and the
resulting fragments are shuffled before reaching the
receiver. 
Every bit remains intact, and the order of
bits within each fragment is preserved, but the fragments'
original positions are unknown. 
Recovering the message
therefore requires resolving uncertainty about how the
fragments relate to the original sequence. 
This synchronization difficulty
is central to \emph{torn paper coding}. 
Motivated by DNA storage~\cite{sima2023error}, Shomorony and Vahid introduced the torn paper
channel (TPC) and characterized its capacity in~\cite{shomorony2021torn}.
The torn paper channel is also related to other media susceptible to physical
breaks~\cite{wang2025secure}.

The main question following the capacity results in~\cite{shomorony2021torn}
is how to construct capacity achieving codes with efficient
encoding and decoding. 
The interleaved-pilot construction
in~\cite{shomorony2021torn} provides a structured scheme
with provable reliability, but leaves a gap between its
achievable rate and capacity. 
Other approaches to the torn paper channel include nested Varshamov--Tenengolts
codes~\cite{nassirpour2023dna} and nested hash
codes~\cite{jiao2025efficient}.
However, these works do not establish that the decoding
error probability tends to zero as the codeword length
goes to infinity.
Liu and Raviv improved the achievable rates by identifying
pilot positions locally within fragments, allowing shorter
fragments to be used for decoding~\cite{liu2026improved,Junsheng2026torn}.
Their subsequent construction attains capacity through
successive local alignment and pilot recycling: recovered
codeword bits provide references for aligning progressively
shorter fragments~\cite{liu2026capacity}. 
In parallel, another similar code construction appeared in~\cite{wang2025secure}.

DNA and other storage systems may suffer both physical fragmentation and errors in the recovered
fragments content, such as bit flips, 
motivating the noisy torn paper channel.
Ravi, Vahid, and Shomorony~\cite{ravi2024recovering}
studied two extensions of the torn paper channel.
In the first extension, called the torn paper channel with lost pieces, fragments may be lost with probabilities
depending on their lengths, while the bits in the
remaining fragments are unchanged. 
They characterized
the capacity of this channel and showed that random
coding achieves it.
In the second extension, called the \textit{noisy} torn paper channel, all fragments are retained, but their
bits undergo substitution errors through a binary
symmetric channel (BSC). 
For this noisy TPC, they used random
coding to establish an achievable rate and derived
a capacity upper bound. 
They conjectured that the achievable
rate is the capacity, leaving a matching converse
to be established.
Walter, Abu-Sini, Weinhardt, and Wachter-Zeh in~\cite{walter2026coding} developed
coding schemes for the noisy TPC that use fixed markers
or markers computed from the encoded data to help
reassemble fragments in the presence of bit errors,
and evaluated their performance for finite codeword
lengths.
These schemes provide approaches to decoding noisy
fragments, but their performance is evaluated through
simulations without a proof that the decoding error
probability tends to zero as the codeword length~$n$ tends to infinity. 
The gap between the known 
rate achievable by random encoding and capacity upper bound also remains unresolved
in general.

A closely related model is the noisy shuffling channel (NSC) of Shomorony and
Heckel~\cite{shomorony2019capacity}.
In this channel, a codeword is divided into fragments of equal length; each fragment passes through a
BSC, and the decoder observes the noisy fragments without their order.
For this channel, a gap remained between the known achievable rate and the capacity upper bound under some parameter regime~\cite{shomorony2019capacity}.

Other related fragmentation models impose different constraints on
the breaks and on the fragments available for recovery.
Motivated by forensic identification of 3D printed objects,
break-resilient codes guarantee recovery from the collection
of fragments produced by a prescribed
number of breaks~\cite{wang2024breakresilientcodesforensic3d,wang2026break,wang2025secure}.
Single-fragment forensic codes instead require recovery from
only one sufficiently large fragment, without access to the
remaining pieces~\cite{liu2025single,liu2025single2,sun2026singlefragmentforensiccodingmultidimensional}.
Adversarial torn-paper codes allow the break locations to be
chosen adversarially, subject to constraints on fragment
lengths~\cite{barlev}.
The related sliced channel~\cite{sima2023error} considers unordered collections of fixed-length strings
that may also undergo errors, providing a related setting for
studying the combined effects of errors within strings and
the loss of their order.

In this paper, we study the noisy TPC, where a codeword
undergoes bit substitution errors and is then divided into
fragments whose lengths are sampled from a prescribed
distribution; note that the original TPC as defined by~\cite{shomorony2021torn} specifies this distribution as geometric. 
The decoder receives all fragments with
the order of bits within each fragment preserved, but
without knowing their original positions in the codeword.
Our goal is to determine the capacity by accounting for
both substitution errors and the loss of fragment order.
The main difficulty is that these two sources of uncertainty
interact: bit errors can make fragments harder to place,
while uncertainty about their positions can make the bit
errors harder to resolve.
We establish a converse that addresses both difficulties
and matches the achievable rate of Ravi, Vahid, and
Shomorony~\cite{ravi2024recovering}, thereby resolving their
capacity conjecture for the noisy TPC.
The noisy shuffling channel~\cite{shomorony2019capacity} is the special case of the noisy TPC in which all fragments have the same length, so our results also
close the gap between the known achievable rate and capacity upper bound
for that channel.

Section~\ref{sec:model} defines the noisy TPC and the NSC, which is a
special case of the noisy TPC, states the assumptions on the fragment lengths, and presents our main results: the capacity of the noisy TPC, which resolves the conjecture of~\cite{ravi2024recovering}, and the
capacity of the NSC, which answers the open question
in~\cite{shomorony2019capacity}.
Section~\ref{sec:capacity-proof} proves the upper bound on the capacity of the noisy TPC by assuming a bound on the information carried by an unordered collection of noisy fragments of equal length.
Section~\ref{sec:sparse} proves this bound and uses it to derive the capacity of the NSC.
Finally, Section~\ref{sec:scope} discusses directions for future work.

\section{Model and Main Results}
\label{sec:model}
\label{sec:main}
In this section, we first describe how the BSC introduces substitution
errors into a codeword and how independently sampled
fragment lengths determine the cut locations.
We then specify the unordered multiset available to
the decoder, which must recover the message without
knowing the original positions of the noisy fragments.
To formulate this recovery requirement, we introduce
the code rate and average decoding error probability
and define capacity through rates achievable with
vanishing error as the codeword length $n\to\infty$.
Finally, we state the assumptions on the sampled
fragment lengths, present the capacity upper bound
for the noisy TPC,  provide a corollary for
independent-cut noisy TPC, and define the NSC as a special case of the noisy TPC and determine its capacity.

\subsection{The noisy torn paper channel and the noisy shuffling channel}\label{section:noisy tpc}

We consider the noisy TPC
of~\cite[Section~III]{ravi2024recovering}. A length-$n$
codeword first passes through a binary symmetric channel (BSC), and the resulting noisy
sequence is then divided into fragments.
The fragment lengths are generated sequentially from a
prescribed distribution. 
Let the codeword be $\boldx\in\{0,1\}^n$ and let
$N_{n,1},N_{n,2},\ldots$ be independent and identically distributed  positive
integer-valued random variables, whose distribution does not depend on~$\boldx$.
Their distribution may depend on $n$, but $N_{n,1},N_{n,2},\ldots$ follow the same
distribution for each fixed \(n\).
Each $N_{n,j}$ specifies the sampled length of the $j$th fragment.
Starting from the beginning of the codeword, we use the first
sampled length $N_{n,1}$ to determine where the first fragment
ends.
If bits remain after the first fragment, we use the next
independently sampled length $N_{n,2}$ to determine the next
fragment boundary and continue in the same way.
If a sampled length is at least the number of
remaining bits, all remaining bits form the final fragment and
the procedure stops. 
We do not reject or resample a length that
extends beyond the end of the codeword; instead, we truncate that
final fragment to contain exactly the remaining bits. 
In addition to fragmentation, the channel introduces independent
bit-flip errors through a BSC.
Throughout the paper, let $p\in(0,1/2)$ be fixed.
The BSC independently flips each bit with probability $p$ and leaves it unchanged with probability $1-p$.
After passing through the fragmentation process, the noisy TPC returns an unordered multiset $\cY$ of all fragments. 
The order of bits within each fragment is
preserved, and no fragment is lost. 
The decoder's task is to recover the original message from
this unordered collection of noisy fragments, despite the
unknown fragment positions and the substitution errors.
We evaluate codes by their rate and average decoding
error probability, defined as follows.

A length-$n$ noisy torn paper code consists of $m_n$ messages, a deterministic encoder
$\operatorname{Enc}:\{1,\ldots,m_n\}\to\{0,1\}^n$, and a decoder $\operatorname{Dec}$
that maps each possible multiset of noisy fragments to a message estimate. The rate of the code is $R_n=(\log m_n)/n$, where all logarithms are base two.
For $M\sim\textrm{Unif}(\{1,\ldots,m_n\})$,
the average decoding error probability is
\begin{align*}
P_{e,n}
&=\Pr\bigl(\operatorname{Dec}(\cY)\neq M\bigr)
=\frac{1}{m_n}\sum_{a=1}^{m_n}
\Pr\bigl(\operatorname{Dec}(\cY)\neq a\mid M=a\bigr).
\end{align*}
A rate $R\geq0$ is achievable if there exists a deterministic
sequence of codes satisfying $\liminf_{n\to\infty}R_n\geq R$
and $P_{e,n}\to0$ as $n\to\infty$.
The capacity $C$ is the supremum of achievable rates.
Our goal is to characterize this capacity for the noisy TPC.
The following definition specifies its fragment boundaries and
the truncation of the length for final sampled fragment.
\begin{definition}
\label{def:renewal-model}
 Let
$\{N_{n,j}\}_{j\geq1}$ be independent, identically distributed positive
integer-valued random variables, independent of the codeword
and the BSC noise.
For every integer $k\geq 0$, let $S_{n,k}$ denote the cumulative
length of the first $k$ $N_{n,j}$'s with $S_{n,0}=0$, and let $K_n$
be the first index at which the cumulative length reaches or
exceeds~$n$, i.e., 
\begin{align*}
 S_{n,k}=\sum_{j=1}^{k}N_{n,j}
\mbox{ and }
 K_n=\min\{k\geq 1:S_{n,k}\geq n\}.
\end{align*}
  The codeword is divided into $K_n$ fragments,
where $1\leq K_n\leq n$, and the length $L_{n,j}$ of the
$j$th fragment, for $j\in[K_n]\triangleq\{1,\ldots,K_n\}$, is
\begin{align*}
 L_{n,j}&=
 \begin{cases}
 N_{n,j}, & 1\leq j<K_n,\\
 n-S_{n,K_n-1}, & j=K_n.
 \end{cases}
\end{align*}
In particular, the last fragment is truncated if
$S_{n,K_n}>n$, and $\sum_{j=1}^{K_n}L_{n,j}=n$.
The breaks occur after bits $S_{n,j}$ for
$j\in[K_n-1]$.
Write $\boldT=(T_1,\ldots,T_{n-1})$ for the \emph{cut pattern},
where $T_i=1$ when a break occurs after bit $i$ and $T_i=0$
otherwise.
For each $j\in[K_n]$, the $j$th fragment occupies positions
$S_{n,j-1}+1,\ldots,\min\{S_{n,j},n\}$.
\end{definition}
An important special case of the noisy TPC is the noisy shuffling
channel (NSC), in
which all fragments have the same deterministic length.
\begin{definition}
\label{def:nsc}
The \emph{noisy shuffling channel} (NSC) is the noisy TPC in which
every sampled length equals the same deterministic positive
integer~$\ell_n$, that is, $N_{n,j}=\ell_n$ for every~$j$, where
$\ell_n$ divides~$n$. The codeword is then divided into
$K_n=n/\ell_n$ consecutive fragments of length~$\ell_n$, no fragment
is truncated, and the decoder observes the unordered multiset $\cY$ of
the noisy fragments. Codes, rates, and capacity for the NSC are
defined as for the noisy TPC, with codeword lengths~$n$ divisible
by~$\ell_n$.
\end{definition}

\begin{remark}
Other than the NSC (Definition~\ref{def:nsc}), the noisy TPC includes several additional special cases obtained by
specifying the distribution from which each fragment length is sampled. Geometric sampled lengths provide the fragmentation
rule of the torn paper channel
in~\cite{shomorony2021torn} (termed \emph{independent-cut} torn paper channel herein), where each boundary between
consecutive bits is cut independently with probability
$q_n$. 
Together with the BSC noise, this yields the
independent-cut torn paper channel with substitution errors.
Another special case, that is similar to~\cite{barlev}, is obtained by sampling successive
lengths independently and uniformly from
$\{a_n,a_n+1,\ldots,b_n\}$, where
$1\leq a_n\leq b_n$ are integers.  
These cases differ in their fragment-length distributions,
while retaining the same BSC substitution errors and an unordered multiset of noisy fragments as the decoder observation. 
\end{remark}

While our results apply for a wide range of fragment length distributions, we do follow the fragment-length
assumptions from~\cite[Section~II-A]{ravi2024recovering}, as follows. 

\begin{assumption}\label{remark:assumptions}
Let $\ell_n=\bE[N_{n,1}]$ and assume that there
is a fixed constant $\alpha\in(0,\infty)$ such that
\begin{align}
\ell_n&=\bE[N_{n,1}],
&
\frac{\log n}{\ell_n}
&\longrightarrow\alpha\in(0,\infty),
&
\bE[N_{n,1}^2]&=\Theta((\log n)^2)
\qquad \text{when}\ n\to\infty.
\label{eq:renewal-assumptions}
\end{align}
\end{assumption}

Notice that Assumption~\ref{remark:assumptions} implies that \textit{all} fragment lengths~$\{N_{n,j}\}_{j\geq1}$ satisfy~\eqref{eq:renewal-assumptions} since the fragment lengths are identically distributed. 
Additionally, the requirement about the second moment of each~$N_{n,j}$ in~\eqref{eq:renewal-assumptions} is due to technical difficulties.

\subsection{Main capacity results}
For $u\in[0,1]$, define
\begin{equation}
 r_n(u)=\frac{\bE[(uN_{n,1}-\log n)^+]}{\ell_n},
 \label{eq:reward-rate}
\end{equation}
where $\ell_n=\bE[N_{n,1}]$ as defined in~\eqref{eq:renewal-assumptions} and  $(uN_{n,1}-\log n)^+=\max\{uN_{n,1}-\log n,0\}$.  
Let $\boldX\in\{0,1\}^n$ denote a random input to the noisy TPC, independent of the BSC noise and the fragment lengths.
The following theorem bounds the normalized
mutual information of the noisy TPC in terms of its sampled
fragment-length distribution.

\begin{theorem}
\label{thm:renewal-capacity}
Consider the noisy TPC with substitution probability
$p\in(0,1/2)$, and let $\kappa=1-h_2(p)$, where $h_2(p)=-p\log p-(1-p)\log(1-p)$ is the binary entropy
function.
Under Assumption~\ref{remark:assumptions}, we have that
\begin{equation}
 \limsup_{n\to\infty}\frac1n
 \sup_{P_{\boldX}}I(\boldX;\cY)
 \le\limsup_{n\to\infty}r_n(\kappa).
 \label{eq:renewal-converse}
\end{equation}
\end{theorem}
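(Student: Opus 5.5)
We will prove the converse by genie-aided decomposition, grouping fragments by length and reducing each group to an equal-length unordered collection. This follows the outline announced in the introduction: the bound on the information carried by an unordered multiset of equal-length noisy fragments is taken as given here (it is proved later in the paper).

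The first step is to give the decoder the fragment lengths together with the cut pattern $\boldT$ up to permutation. Concretely, we reveal the multiset of lengths, which is already visible from $\cY$, and then use
\[
I(\boldX;\cY)\le I(\boldX;\cY,\boldT)=I(\boldX;\cY\mid\boldT),
\]
where the equality holds because $\boldT$ is independent of $\boldX$. Conditioned on $\boldT=\boldt$, the output $\cY$ is an unordered collection of BSC outputs of known segments. Partition the fragments into classes $\cL_\ell$ by common length $\ell$, and let $k_\ell$ be the number of fragments in class $\cL_\ell$. The chain rule plus dropping conditioning on the BSC side bounds $I(\boldX;\cY\mid\boldT=\boldt)$ by a sum over classes of $I(\boldX;\cY_\ell)$, where $\cY_\ell$ is the multiset of noisy fragments of length $\ell$. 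To justify this we use that $\cY$ is, given $\boldt$, a deterministic function of the tuple $(\cY_\ell)_\ell$, together with subadditivity $H(\cY\mid\boldX,\boldt)=\sum_\ell H(\cY_\ell\mid\boldX,\boldt)$; the equality holds because the noise is independent across fragments.

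The second step bounds each class using the equal-length lemma from Section~\ref{sec:sparse}, which we assume in the form
\[
I(\boldX;\cY_\ell)\le k_\ell\bigl(\kappa\ell-\log k_\ell\bigr)^+ +o(\cdot).
\]
The heuristic is that the $k_\ell$ segments carry at most $\kappa\ell$ bits each through the BSC, and ordering loses $\log k_\ell!\approx k_\ell\log k_\ell$ bits. For a more transparent form, split according to whether $\kappa\ell>\log n$: fragments with $\kappa\ell\le(1-\epsilon)\log n$ contribute $o(n)$ in total, while the rest contribute $\kappa\ell-\log n+O(\log\log n)$ each, provided $k_\ell\ge n^{1-\epsilon}$ for the relevant lengths. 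Lengths with few fragments need separate care. In that case $\log k_\ell$ is much smaller than $\log n$, so we instead bound the class by $\kappa\ell k_\ell$ and show that such lengths occupy $o(n)$ total bits. Showing this uses the second-moment condition $\bE[N_{n,1}^2]=\Theta((\log n)^2)$ and Chebyshev/Markov arguments on the empirical length distribution.

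The third step converts the sum into $n\,r_n(\kappa)$. By renewal theory, $K_n\approx n/\ell_n$, and the empirical count of fragments of length $\ell$ concentrates around $(n/\ell_n)\Pr(N_{n,1}=\ell)$. Summing $(\kappa\ell-\log n)^+$ over fragments then gives
\[
\frac{n}{\ell_n}\,\bE\bigl[(\kappa N_{n,1}-\log n)^+\bigr]=n\,r_n(\kappa)
\]
up to $o(n)$. Taking the expectation over $\boldT$ absorbs fluctuations: the truncated last fragment contributes $O(\log n)$ in expectation, and deviations of $K_n$ are controlled by the second moment through Wald's identity and variance bounds.

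The main obstacle is the second step. The equal-length bound must be applied with $\log k_\ell$ replaced by $\log n$ uniformly, even though $k_\ell$ varies with $\ell$ and may be far smaller than $n/\ell_n$ for atypical lengths. If this fails, I would coarsen: merge lengths into bins of width $\epsilon\log n$ and truncate each fragment in a bin to the bin's minimum length. Truncation is a processing step that can only lose information, but the discarded remainders are revealed to the decoder, and their contribution, at most $\epsilon\log n$ per fragment, amounts to $O(\epsilon n)$ bits in total. The truncated fragments in a bin then form a large equal-length class, and taking $\epsilon\to0$ at the end recovers the bound.
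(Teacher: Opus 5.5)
Your proposal is correct, and its outer steps coincide with the paper's. You reveal $\boldT$, split $\cY$ by fragment length exactly as in Lemma~\ref{lem:sets}, and use Wald's identity to bound $\bE[\sum_{j\le K_n}(\kappa L_{n,j}-\log n)^+]$ by $r_n(\kappa)\bE[S_{n,K_n}]$. No concentration of the empirical length counts is needed there; linearity suffices.

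The real difference is how you replace $\log k_\ell$ by $\log n$. The paper does this deterministically for every cut pattern with $K_n\le c_1n/\log n$. It writes $\kappa L-\lambda\log m_L=\kappa L-\lambda\log n+\lambda\log(n/m_L)$ and bounds $\sum_L m_L\log(n/m_L)=K\log(n/K)+KH(V)\le 2K\log(n/K)+K\log e$, where $V$ is the length of a uniformly chosen fragment and $H(V)\le\log\bE[V]+\log e$ (Lemma~\ref{lem:partition-counts}\ref{lemma:B}). This is $O(n\log\log n/\log n)$, so rare lengths are absorbed by the low entropy of the empirical length distribution, with no threshold on $k_\ell$ and no tail analysis. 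The second-moment assumption is then used only for $\Pr(K_n\le c_1n/\log n)\to1$ and for the overshoot of $S_{n,K_n}$.

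Your threshold split gives a more elementary argument. Every error term is linear in the counts, so you can average over $\boldT$ directly using only $\bE[K_n]=O(n/\log n)$ (from Wald). This avoids both the max-entropy lemma and the high-probability bound on $K_n$. The price is a second use of the second-moment assumption and an extra cutoff parameter.

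Three details in your Step 2 need fixing.
\begin{itemize}
\item For classes with $k_\ell\ge n^{1-\epsilon}$, the per-fragment slack is $\epsilon\log n$, not $O(\log\log n)$. Also, Proposition~\ref{prop:uniform} only provides $\lambda\log k_\ell$ with $\lambda<1$, not $\log k_\ell$. The total slack is therefore $(1-\lambda(1-\epsilon))K_n\log n$. This is $O\bigl((1-\lambda(1-\epsilon))n\bigr)$ in expectation and vanishes once your threshold exponent and the parameters of Proposition~\ref{prop:uniform} are sent to zero.
\item The claim that classes with $k_\ell<n^{1-\epsilon}$ occupy $o(n)$ bits needs a cutoff at $C\log n$. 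There are at most $C\log n$ distinct lengths $\ell\le C\log n$, so these contribute at most $C^2(\log n)^2n^{1-\epsilon}$ bits deterministically. Fragments longer than $C\log n$ carry in expectation at most $\bE[K_n]\,\bE[N_{n,1}^2]/(C\log n)=O(n/C)$ bits, by Wald's identity and $L_{n,j}\le N_{n,j}$. Then let $C\to\infty$.
\item The error terms $c_2k_\ell+2bk_\ell^{1-\delta}\log(k_\ell+1)$ are $O_\delta(k_\ell)$, so they sum to $O(K_n)$.
\end{itemize}
With these repairs, the binning fallback is unnecessary.
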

When $r_n(\kappa)$ converges as $n\to\infty$,
the converse in Theorem~\ref{thm:renewal-capacity}
matches the known achievable rate and determines
the capacity, as stated below.

\begin{corollary}
\label{cor:capacity-limit}
Under the assumptions of
Theorem~\ref{thm:renewal-capacity},
if $\lim_{n\to\infty}r_n(\kappa)$ exists, then
the capacity of the noisy TPC is $C=\lim_{n\to\infty}r_n(\kappa)$.
\end{corollary}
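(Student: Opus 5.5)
The corollary combines Theorem~\ref{thm:renewal-capacity} (converse) with the random-coding achievability of Ravi, Vahid, and Shomorony~\cite{ravi2024recovering}. So the plan has two halves: show $C\le\lim_n r_n(\kappa)$ via a Fano-type argument, and cite the achievability for the matching lower bound.

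\textbf{Converse.} Fix any achievable rate $R$ and a sequence of codes with $\liminf_n R_n\ge R$ and $P_{e,n}\to0$. Let $M$ be uniform on $\{1,\ldots,m_n\}$ and set $\boldX=\operatorname{Enc}(M)$. Then $\boldX$ is a valid random input, independent of the BSC noise and the fragment lengths. Since $M\to\boldX\to\cY\to\operatorname{Dec}(\cY)$ is a Markov chain, Fano's inequality and data processing give
\begin{align*}
\log m_n = H(M) \le I(M;\cY)+1+P_{e,n}\log m_n \le I(\boldX;\cY)+1+P_{e,n}\log m_n .
\end{align*}
Dividing by $n$ and using $\log m_n\le n$ (the encoder may be assumed injective, or else $P_{e,n}$ cannot vanish at positive rate) yields
\begin{align*}
R_n(1-P_{e,n}) \le \frac1n\sup_{P_{\boldX}}I(\boldX;\cY)+\frac1n .
\end{align*}
Taking $\liminf$ on the left and $\limsup$ on the right, then applying Theorem~\ref{thm:renewal-capacity}, gives
\begin{align*}
R\le\limsup_n r_n(\kappa)=\lim_n r_n(\kappa).
\end{align*}
The last equality uses the hypothesis that the limit exists. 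Taking the supremum over achievable $R$ gives $C\le\lim_n r_n(\kappa)$.

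\textbf{Achievability.} Here I would invoke the random-coding result of~\cite{ravi2024recovering} for the noisy TPC under the same assumptions (Assumption~\ref{remark:assumptions} is taken from that paper). That result shows that every rate below their expression is achievable. The step needing care is checking that their expression equals $\lim_n r_n(\kappa)$ in the present notation. Their rate is written as a normalized expectation of $(\kappa N-\log n)^+$, i.e., $\bE[(\kappa N_{n,1}-\log n)^+]/\ell_n$: a fragment of length $N$ carries about $\kappa N$ bits through the BSC, minus the roughly $\log n$ bits needed to place it. If they state it as a limit in $n$, or in terms of the limiting distribution of $N_{n,1}/\log n$, I would check directly that this coincides with $r_n(\kappa)$ as $n\to\infty$. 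Under the hypothesis that the limit exists, their achievable rate then reduces to $\lim_n r_n(\kappa)$, so $C\ge\lim_n r_n(\kappa)$.

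I expect the main obstacle to be this notational reconciliation rather than any new estimate. In particular, the truncated final fragment must not affect the rate. This holds because it contributes at most $O(\ell_n)=O(\log n)=o(n)$ bits. If the cited achievability is stated only for geometric lengths, I would instead re-run their random-coding argument for general renewal lengths. That argument needs only the law of large numbers for $K_n\approx n/\ell_n$ and the second-moment bound in Assumption~\ref{remark:assumptions} to control the number of fragments of each length.
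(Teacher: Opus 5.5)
Your proposal is correct and matches the paper's proof: a Fano plus data-processing converse combined with Theorem~\ref{thm:renewal-capacity} and the existence of the limit, together with the random-coding achievability of~\cite{ravi2024recovering} (Section~III-A, eqs.~(45)--(47)). The paper cites that achievability result directly for the general fragment-length model under the same assumptions, so your fallback of re-running it for non-geometric lengths is not needed. Likewise, your side assumption $\log m_n\le n$ is unnecessary, because $I(\boldX;\cY)\le n$ already keeps $R_n(1-P_{e,n})$ bounded.
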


\begin{proof}
Fano's inequality implies that every rate achievable
with vanishing average decoding error as $n\to\infty$
is at most the limiting upper bound on mutual
information per codeword bit. 
Therefore, Theorem~\ref{thm:renewal-capacity} implies that
\begin{align*}
C
&\leq
\limsup_{n\to\infty}\frac1n
\sup_{P_{\boldX}}I(\boldX;\cY)
\overset{\eqref{eq:renewal-converse}}{\leq}
\lim_{n\to\infty}r_n(\kappa).
\end{align*}
For the converse inequality, the random-coding
argument in~\cite[Section~III-A,
eqs.~(45)--(47)]{ravi2024recovering}
establishes achievability
of every nonnegative rate strictly below
$\lim_{n\to\infty}r_n(\kappa)$.
Combining the two inequalities proves the claim.
\end{proof}
\begin{corollary}
\label{cor:geometric}
Consider the independent-cut torn paper channel with substitution errors, with
break probability~$q_n$ satisfying
$q_n\log n\to\alpha\in(0,\infty)$ and substitution probability
$p\in(0,1/2)$. Let $C(\alpha,p)$ denote its capacity.
Theorem~\ref{thm:renewal-capacity}, together with the achievable
rate in~\cite[Theorem~3]{ravi2024recovering}, implies that
\begin{equation}
 C(\alpha,p)=\kappa e^{-\alpha/\kappa}.
 \label{eq:geometric-capacity}
\end{equation}
\end{corollary}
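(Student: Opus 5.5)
The plan is to apply Corollary~\ref{cor:capacity-limit} directly. We check that geometric sampled lengths satisfy Assumption~\ref{remark:assumptions}, and then evaluate $\lim_{n\to\infty} r_n(\kappa)$ in closed form.

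\textbf{Step 1: Assumption~\ref{remark:assumptions} holds.} For the independent-cut channel, $N_{n,1}\sim\mathrm{Geom}(q_n)$ on $\{1,2,\ldots\}$. Hence
\[
\ell_n=1/q_n, \qquad \bE[N_{n,1}^2]=(2-q_n)/q_n^2 .
\]
Since $q_n\log n\to\alpha$, we get $(\log n)/\ell_n=q_n\log n\to\alpha\in(0,\infty)$ and $\bE[N_{n,1}^2]=\Theta(1/q_n^2)=\Theta((\log n)^2)$. We also need the channel of Definition~\ref{def:renewal-model} with geometric lengths to coincide with the independent-cut model. This follows from memorylessness: cutting each boundary independently with probability $q_n$ produces i.i.d.\ geometric gaps, and the last fragment is truncated at $n$.

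\textbf{Step 2: compute the limit of $r_n(\kappa)$.} Write $t_n=\log n/\kappa$, so that
\[
r_n(\kappa)=q_n\kappa\,\bE[(N-t_n)^+], \qquad N=N_{n,1}.
\]
Use the tail-sum identity
\[
\bE[(N-t)^+]=\sum_{m\geq \lceil t\rceil}\Pr(N>m)+(\lceil t\rceil-t)\Pr(N>\lceil t\rceil -1)\cdot(\text{correction}).
\]
It is cleaner to bound $\bE[(N-t)^+]$ between $\sum_{m\ge \lceil t\rceil}(1-q_n)^m$ and $\sum_{m\ge \lfloor t\rfloor}(1-q_n)^m$, which gives
\[
\frac{(1-q_n)^{\lceil t_n\rceil}}{q_n}\le \bE[(N-t_n)^+]\le \frac{(1-q_n)^{\lfloor t_n\rfloor}}{q_n}.
\]
Multiplying by $q_n\kappa$, both bounds reduce to $\kappa(1-q_n)^{t_n+O(1)}$. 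Since $q_n\to0$, we have
\[
(1-q_n)^{t_n}=\exp\bigl(t_n\log_e(1-q_n)\bigr)=\exp\bigl(-q_nt_n(1+o(1))\bigr),
\]
where $q_nt_n=q_n\log n/\kappa\to\alpha/\kappa$. Also $(1-q_n)^{O(1)}\to1$. Hence $r_n(\kappa)\to\kappa e^{-\alpha/\kappa}$, and in particular the limit exists.

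One point needs care: whether $\log n$ in the assumptions is the same base-two logarithm used in $q_n\log n\to\alpha$. The paper states that all logarithms are base two, so $\alpha$ is defined consistently in both places. The natural exponential appears only through $(1-q)^t\approx e^{-qt}$, which is base-independent once $q_nt_n\to\alpha/\kappa$ is fixed.

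\textbf{Step 3: conclude.} Corollary~\ref{cor:capacity-limit} gives $C(\alpha,p)=\lim_n r_n(\kappa)=\kappa e^{-\alpha/\kappa}$. Alternatively, combine the converse of Theorem~\ref{thm:renewal-capacity} with~\cite[Theorem~3]{ravi2024recovering}, whose achievable rate is exactly this expression.

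Step 2 is the only substantive step, and it is elementary. The main thing to verify is the identification in Step 1, namely that the renewal model with geometric lengths has exactly the law of the independent-cut channel, including the handling of the final fragment.
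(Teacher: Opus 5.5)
Your proposal is correct and follows essentially the same route as the paper. You verify Assumption~\ref{remark:assumptions} from $\ell_n=1/q_n$ and $\bE[N_{n,1}^2]=(2-q_n)/q_n^2$, evaluate $r_n(\kappa)$ through the geometric tail $(1-q_n)^m$ with the floor/ceiling sandwich (the paper phrases this as memorylessness plus monotonicity of $\bE[(N_{n,1}-x)^+]$ in $x$), and conclude via Theorem~\ref{thm:renewal-capacity} and~\cite[Theorem~3]{ravi2024recovering}; routing the conclusion through Corollary~\ref{cor:capacity-limit} is equivalent, since that corollary is itself Theorem~\ref{thm:renewal-capacity} plus Fano plus the same random-coding achievability.
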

Our results also determine the capacity of the NSC in
Definition~\ref{def:nsc}. 
Its deterministic sampled lengths satisfy
$\bE[N_{n,1}^2]=\ell_n^2$, so Assumption~\ref{remark:assumptions}
reduces to $\log n/\ell_n\to\alpha\in(0,\infty)$.
We prove the following theorem in
Section~\ref{sec:sparse}.
\begin{theorem}
\label{thm:nsc}
Consider the NSC with substitution probability $p\in(0,1/2)$. Under Assumption~\ref{remark:assumptions}, we have
\begin{equation}
 \limsup_{n\to\infty}\frac{1}{n}\sup_{P_{\boldX}}I(\boldX;\cY)\leq(1-h_2(p)-\alpha)^+.
 \label{eq:nsc-converse}
\end{equation}
Together with the achievable rate of~\cite{shomorony2019capacity},
it follows that the capacity of the NSC is $(1-h_2(p)-\alpha)^+$.
\end{theorem}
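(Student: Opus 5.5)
Theorem~\ref{thm:nsc} is the special case $N_{n,j}=\ell_n$ of Theorem~\ref{thm:renewal-capacity}. For deterministic lengths,
\begin{align*}
r_n(\kappa)=\frac{(\kappa\ell_n-\log n)^+}{\ell_n}=\Bigl(\kappa-\frac{\log n}{\ell_n}\Bigr)^+\longrightarrow(\kappa-\alpha)^+=(1-h_2(p)-\alpha)^+,
\end{align*}
so \eqref{eq:nsc-converse} follows once Theorem~\ref{thm:renewal-capacity} is available. The paper, however, proves the theorem in Section~\ref{sec:sparse} through a bound on unordered collections of equal-length noisy fragments, which is the ingredient the renewal converse itself relies on. I would therefore prove the NSC bound directly. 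The stated capacity then follows by combining \eqref{eq:nsc-converse} with the achievability result of~\cite{shomorony2019capacity} and Fano's inequality, exactly as in Corollary~\ref{cor:capacity-limit}.

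For the direct argument, write $K=n/\ell_n$ and let $\boldZ=(\boldZ_1,\ldots,\boldZ_K)$ be the ordered noisy fragments. Then $\cY$ is the multiset of the $\boldZ_j$, and $I(\boldX;\cY)=I(\boldX;\boldZ)-I(\boldX;\boldZ\mid\cY)$, where $I(\boldX;\boldZ)\le n\kappa$. It therefore suffices to show
\begin{align*}
I(\boldX;\boldZ\mid\cY)\ge K\log K-o(n)\quad\text{or}\quad I(\boldX;\cY)\le o(n),
\end{align*}
using $K\log K=n\log n/\ell_n\,(1-o(1))=n\alpha+o(n)$. The identity gives $I(\boldX;\boldZ\mid\cY)=H(\boldZ\mid\cY)-H(\boldZ\mid\cY,\boldX)$.

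The term $H(\boldZ\mid\cY)$ equals the entropy of the ordering given the multiset, which is about $\log K!$ whenever the fragments are distinct. This holds with high probability, since BSC noise makes collisions of length-$\ell_n$ strings unlikely ($\ell_n=\Theta(\log n)$ and $p>0$ give collision probability roughly $K^2(p^2+(1-p)^2)^{\ell_n}$). If that estimate is too weak, I would handle repeats by splitting them off with an indicator variable.

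The hard part is upper bounding $H(\boldZ\mid\cY,\boldX)$: how much the decoder, even knowing $\boldX$, can still be uncertain of the ordering. Knowing $\boldX$ does not pin down the order when $\kappa\ell_n<\log n$, and this is exactly the source of the $(\cdot)^+$. I would not bound this term directly. Instead I would argue per fragment: $I(\boldX;\cY)\le\sum_j$ of the "useful information" of fragment $j$, where a fragment can reveal its $\ell_n$ bits' content only after its position is located, which costs about $\log n$ bits.

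Concretely, I would use a genie and counting argument. Fix any input distribution and condition on the typical noise weight. For each fragment, the number of codeword windows consistent with it (Hamming distance about $p\ell_n$) behaves like $\max\{1,K2^{-\kappa\ell_n}\cdot|\text{code}|^{\cdot}\}$. Bounding $H(\cY)-H(\cY\mid\boldX)$ via a list-size count would then give per fragment $(\kappa\ell_n-\log K)^++o(\ell_n)$. To make this rigorous, I would write $H(\cY\mid\boldX)\ge H(\boldZ\mid\boldX)-H(\text{ordering}\mid\cY,\boldX)$ and bound $H(\cY)\le\log\binom{2^{\ell_n}+K-1}{K}$, refined by the per-fragment typicality restriction.

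I expect the main obstacle to be the regime $\kappa>\alpha$. There the trivial bound $H(\cY)\le K\ell_n-\log K!$ gives only $1-\alpha$, not $\kappa-\alpha$. Extracting $\kappa$ requires combining the noise entropy $K\ell_n h_2(p)$ with the $\log K!$ loss simultaneously, i.e., showing $H(\cY\mid\boldX)\ge nh_2(p)-o(n)$ even though the ordering is forgotten. That is true because, given $\boldX$, distinct noisy fragments determine their noise patterns up to permutation, and the permutation entropy is already subtracted in $H(\cY)$. I would verify this carefully using a concentration argument on the number of fragment pairs at small Hamming distance.
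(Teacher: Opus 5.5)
Your first paragraph is already a valid, non-circular derivation. Theorem~\ref{thm:renewal-capacity} rests on Proposition~\ref{prop:uniform}, whose proof (Proposition~\ref{prop:short} and Lemma~\ref{lem:suffix}) never uses Theorem~\ref{thm:nsc}. For $N_{n,j}=\ell_n$ one indeed has $r_n(\kappa)=(\kappa-\log n/\ell_n)^+\to(\kappa-\alpha)^+$. The paper's proof is the same specialization done directly: apply Proposition~\ref{prop:uniform} with $m=K_n$ and $L=\ell_n$, divide by $n=K_n\ell_n$, use $\log K_n/\ell_n\to\alpha$, and let $\epsilon,\delta\to0$ so that $\lambda\to1$. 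All the substance therefore lies in the equal-length bound. The direct argument you offer in its place has a genuine gap: nothing in it yields a bound that holds for \emph{every} input distribution.

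Several of its steps are false as stated.

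(i) The reduction ``it suffices to show $I(\boldX;\boldZ\mid\cY)\ge K\log K-o(n)$ or $I(\boldX;\cY)\le o(n)$'' (with $\boldZ$ your ordered noisy tuple) asks for something that can fail. Take $\kappa>\alpha$, fix the first $K/2$ fragments (say all-zero), and make the rest i.i.d.\ uniform. Then $I(\boldX;\boldZ)=n\kappa/2$. Revealing which noisy strings came from the fixed half costs at most $\log\binom{K}{K/2}\le K$ bits, so $I(\boldX;\cY)\ge n\kappa/2-\log((K/2)!)-K=n(\kappa-\alpha)/2-o(n)$. Hence $I(\boldX;\boldZ\mid\cY)\le n\alpha/2+o(n)$, and neither alternative holds.

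(ii) $H(\boldZ\mid\cY)\approx\log K!$ does not follow from distinctness of the noisy fragments. It is the entropy of the posterior on the order. For a fixed input whose fragments form a good BSC code (possible when $\kappa>\alpha$), this entropy is $o(n)$.

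(iii) The claim $H(\cY\mid\boldX)\ge nh_2(p)-o(n)$ fails, e.g., when all fragments of $\boldX$ are equal. We have $H(\cY\mid\boldX)=nh_2(p)-H(\boldZ\mid\cY,\boldX)$, and here $H(\boldZ\mid\cY,\boldX)$ is linear in $n$ because knowing $\boldX$ does not tell which noisy string came from which position. Your justification fails because different matchings of noisy strings to positions give different noise patterns, not permutations of one pattern. Such inputs also lower $H(\cY)$. So a worst-case upper bound on $H(\cY)$ cannot be combined with a worst-case lower bound on $H(\cY\mid\boldX)$; the two must be controlled jointly.

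(iv) The list-size count and the concentration over close fragment pairs are random-coding heuristics. In a converse the code is arbitrary, and there is nothing to concentrate over.

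What is missing is a mechanism that couples the noise entropy with the loss of order uniformly over inputs. The paper compares every conditional output law $\Lambda_{\boldx}$ with a single reference $R$. This $R$ is the uniform mixture of $Q^{\otimes m}$ over a fixed finite family $\cV_k$ of $\lceil2^{k(\kappa+\epsilon)}\rceil$-term BSC mixtures that approximates every single-fragment output law within bounded divergence (Lemma~\ref{lem:sparse}). Lemma~\ref{lem:type} charges only $m\log e$ for the dependence created by the random permutation, which gives Proposition~\ref{prop:short}. The paper then sorts on noisy prefixes of length about $b\log m$ and shows that the reordered suffix noise stays i.i.d.\ (Lemma~\ref{lem:sorted-suffix-noise}). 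Hence each suffix bit costs at most $\kappa$ (Lemma~\ref{lem:suffix}). The term $\log|\cV_k|$ is what rigorously replaces your heuristic that locating a fragment costs about $\log K$ bits.
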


\section{Capacity of the noisy torn paper channel}
\label{sec:partition}
\label{sec:renewal}
\label{sec:capacity-proof}
The proof of Theorem~\ref{thm:renewal-capacity} relies on a bound for
a simpler setting, in which all fragments have the same length.
Proposition~\ref{prop:uniform} below bounds the information carried by
an unordered collection of such noisy fragments.
We  use the following quantity to express
the information bound for fragments of the same length.
\begin{definition}\label{definition:Csh}
For integers $m\geq1$ and $k\geq1$, let $\Csh(m,k)$ be the
supremum of $I(\boldX;\cY)$ over all joint distributions of
the ordered tuple $\boldX=(\boldX_1,\ldots,\boldX_m)$, where
$\boldX_j\in\{0,1\}^k$ for every $j\in\{1,\ldots,m\}$, and where $\cY$ is the unordered multiset obtained by applying
the BSC to these strings with independent substitution noise across all $mk$ bits.
\end{definition}
If $m=0$, then $\cY$ is the empty multiset, and if $k=0$, then $\cY$
consists of $m$ copies of the empty string. In either case, $\cY$ is
deterministic, and therefore $\Csh(0,k)=\Csh(m,0)=0$.

\begin{remark}
    The quantity $\Csh(m,k)$ is the largest mutual information between the
input and the output of the NSC (Definition~\ref{def:nsc}) with $m$
fragments of length~$k$.
The capacity of the NSC is instead an asymptotic quantity per codeword
bit: it concerns $\Csh(K_n,\ell_n)/n$ as $n\to\infty$, with $\ell_n=\Theta(\log n)$, and Theorem~\ref{thm:nsc} evaluates it.
In contrast, the proof of Theorem~\ref{thm:renewal-capacity} need not follow this scaling on fragment length, so we need bounds on $\Csh(m,k)$ that
hold for every $m$ and $k$.
\end{remark}


\begin{proposition}
\label{prop:uniform}
Fix $\epsilon>0$ and $\delta\in(0,1)$, and define
\begin{equation*}
 b=\frac{1-\delta}{\kappa+\epsilon},
 \qquad\lambda=\kappa b\in(0,1).
\end{equation*}
There is a constant $c_2=c_2(p,\epsilon,\delta)$ such that
for every $m,L\ge1$,
\begin{equation}
 \Csh(m,L)
 \le m(\kappa L-\lambda\log m)^+
       +c_2m+2b\,m^{1-\delta}\log(m+1).
 \label{eq:uniform-shuffling}
\end{equation}
\end{proposition}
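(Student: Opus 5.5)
Our plan is to bound $\Csh(m,L)$ through the decomposition
\begin{equation*}
I(\boldX;\cY)=I(\boldX;\boldY)-I(\boldX;\boldY\mid\cY),
\end{equation*}
where $\boldY=(\boldY_1,\ldots,\boldY_m)$ is the ordered BSC output and $\cY$ is a function of $\boldY$. The first term is at most $m\kappa L$ by BSC capacity.

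The main work is to lower bound $I(\boldX;\boldY\mid\cY)=H(\boldY\mid\cY)-H(\boldY\mid\boldX,\cY)$, the information about the ordering that is lost. Write $\sigma$ for a uniformly random labeling consistent with $\cY$. Then $H(\boldY\mid\cY)$ is essentially $\log$ of the number of distinct orderings of the multiset, while $H(\boldY\mid\boldX,\cY)$ measures how well $\boldX$ pins down the ordering.

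Heuristically, given $\boldX$, a noisy fragment $\boldY_j$ could have come from a codeword fragment $\boldX_i$ with $i\ne j$ with likelihood ratio about $2^{-\kappa L}$ per candidate in the typical case. Hence roughly $\min(m,2^{\kappa L}\cdot\mathrm{poly})$ positions stay plausible, and the residual uncertainty per fragment is about $\min(\log m,\ldots)$. Fragments are therefore \emph{identifiable} only when $\kappa L\gtrsim\log m$.

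To make this rigorous I would not work with likelihood ratios directly. Instead I would use a counting/volume argument. Split the fragments into ``short'' ones, with $\kappa L<\lambda\log m$ in effect, where the bound is simply $c_2 m$ and the $(\cdot)^+$ is zero, and the general case. In the general case I would bound $I(\boldX;\cY)\le H(\cY)-H(\cY\mid\boldX)$. Then:
\begin{itemize}
\item $H(\cY)\le\log\binom{2^L+m-1}{m}\le m(L-\log m)+m\log e+\ldots$, the multiset count.
\item $H(\cY\mid\boldX)\ge H(\boldY\mid\boldX)-H(\boldY\mid\cY,\boldX)\ge mLh_2(p)-H(\boldY\mid\cY,\boldX)$.
\end{itemize}
It then remains to show
\begin{equation*}
H(\boldY\mid\cY,\boldX)\le m(1-\lambda)\log m+c_2 m+2b\,m^{1-\delta}\log(m+1),
\end{equation*}
so the ordering is nearly determined given $\boldX$ only up to a $\lambda$ fraction. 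Combined, these give $m\kappa L-\lambda m\log m+O(m)$. Note that $\lambda=\kappa(1-\delta)/(\kappa+\epsilon)<1$ reflects the slack lost by choosing a threshold $b\log m$ on fragment length.

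The bound on $H(\boldY\mid\cY,\boldX)$ goes through the ordering. Given $\boldX$ and $\cY$, the remaining uncertainty is which received fragment is matched with which input block, so it suffices to bound the entropy of the induced permutation $\pi$. Call a pair $(i,j)$ \emph{confusable} if the received string in slot $j$ lies within a typical Hamming ball of $\boldX_i$. The expected number of confusable pairs per fragment is at most $1+m\,2^{-(\kappa+\epsilon)L'}$ on a sub-block of length $L'=b\log m$, since $2^{-(\kappa+\epsilon)b\log m}=m^{-(1-\delta)}$. This gives about $m^{\delta}$ confusions overall per fragment, contributing $\lambda$-type savings. Atypical events are charged $\log m$ each with exponentially small probability, which gives the $m^{1-\delta}\log(m+1)$ term.

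The main obstacle is this last step: $\boldX$ is an arbitrary joint distribution, possibly with many nearly identical blocks. Confusability then cannot be bounded uniformly over $\boldX$, so we must exploit that identical blocks also shrink $H(\cY)$. The first thing I would try is to condition on $\boldX=\boldx$, group blocks into clusters of mutually close strings, and apply the bound cluster by cluster.
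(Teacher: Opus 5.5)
There is a genuine gap, and it enters as soon as you replace $H(\cY)$ by $\log\binom{2^L+m-1}{m}$. Your upper bound then becomes $\log\binom{2^L+m-1}{m}-H(\cY\mid\boldX)$. This is affine in $P_{\boldX}$, so its supremum is attained at a deterministic input, and it would have to be small for every fixed $\boldx$. It is not. Take $\boldx_1=\cdots=\boldx_m=\boldx_0$ for a fixed $\boldx_0\in\{0,1\}^L$, with $L$ a large multiple of $\log m$, so that $\kappa L\ge\lambda\log m$ (your general regime). The noisy strings are then i.i.d.\ and pairwise distinct with high probability. Hence, given $(\boldX,\cY)$, the tuple $\boldY$ is a uniform ordering of $\cY$, and $H(\boldY\mid\cY,\boldX)\approx\log m!\approx m\log m$. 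This contradicts the inequality $H(\boldY\mid\cY,\boldX)\le m(1-\lambda)\log m+c_2m+2b\,m^{1-\delta}\log(m+1)$ that you say remains to be shown. Equivalently, $\log\binom{2^L+m-1}{m}-H(\cY\mid\boldX=\boldx)\approx m\kappa L$, which exceeds the target by about $\lambda m\log m$.

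So no estimate of the ordering entropy can rescue the chain. Comparing every conditional output law with the uniform law on multisets is intrinsically too lossy, because the inputs that make the ordering ambiguous are exactly those whose outputs are concentrated. Your cluster-by-cluster repair would have to undo this first step and trade $H(\cY)$ against $H(\cY\mid\boldX)$ inside the expectation over an arbitrary mixture of clusterings, which is the entire difficulty. Likewise, the count $1+m2^{-(\kappa+\epsilon)L'}$ of confusable pairs presumes random-like blocks.

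Your short case is not simple either. For small $\epsilon,\delta$ one has $b>1$. Then for, say, $L=\tfrac{1+b}{2}\log m$ the target is $c_2m+o(m)$, while both trivial bounds, $\log\binom{2^L+m-1}{m}\approx\tfrac{b-1}{2}m\log m$ and $m\kappa L$, are of order $m\log m$.

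The missing idea is a single reference output law that is good for all inputs at once. The paper obtains the short-fragment bound (Proposition~\ref{prop:short}) from $I(\boldX;\cY)\le\bE[D(\Lambda_{\boldX}\|R)]$. Here $\Lambda_{\boldx}$ is the law of the uniformly permuted noisy tuple, and $R$ is the uniform mixture of $Q^{\otimes m}$ over the finite family $\cV_k$ of $J_k=\lceil2^{k(\kappa+\epsilon)}\rceil$-component BSC mixtures (Construction~\ref{con:sparse-family}). Three facts make this work:
every empirical output law $\widetilde P_{\boldx}$ is within $A_0$ in relative entropy of some $Q\in\cV_k$ (Lemma~\ref{lem:sparse});
the permuted tuple loses at most $m\log e$ against the i.i.d.\ product $\widetilde P_{\boldx}^{\otimes m}$ (Lemma~\ref{lem:type});
and the mixture costs $\log|\cV_k|\le kJ_k\le2b\,m^{1-\delta}\log(m+1)$ when $k\le b\log m$.
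Repeated blocks are harmless here: for your example, the law of the permuted noisy prefixes is exactly one component of $R$. Long fragments are then handled by sorting on noisy prefixes of length $k=\min\{L,\lfloor b\log m\rfloor\}$ and charging at most $\kappa$ bits per suffix bit (Lemma~\ref{lem:suffix}). As a result, the entropy of the ordering given $\boldX$ never has to be bounded.
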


Assuming Proposition~\ref{prop:uniform}, we now show how to derive Theorem~\ref{thm:renewal-capacity}.
To obtain an upper bound on the mutual information, we allow the
decoder to know the exact cut locations in the codeword, in addition
to the unordered multiset of noisy fragments.
This can only increase the information available to the decoder, so
the resulting upper bound also applies to the noisy TPC;
see~\eqref{eq:reveal-cuts} in the proof of Lemma~\ref{lem:partition}.
The cut locations determine the length of every fragment, so we can
divide the noisy fragments into sets according to their lengths, and
let $m_L$ denote the number of fragments of length~$L$.
The fragments within each set remain unordered, so
Proposition~\ref{prop:uniform} applies to each set of $m_L$ fragments
of length~$L$.
The proof uses three auxiliary lemmas.
Lemma~\ref{lem:partition-counts} bounds the number of distinct
fragment lengths and controls the additional terms that appear when
the bounds for different lengths are added.
Lemma~\ref{lem:sets} shows that, once the decoder knows the cut
locations, the information in all noisy fragments is at most the sum
of the bounds for the sets of equal-length fragments.
Lemma~\ref{lem:partition} combines these results and averages over the
possible cut locations to obtain a mutual information bound for the
noisy TPC.
Finally, we prove Theorem~\ref{thm:renewal-capacity} under
Assumption~\ref{remark:assumptions}.

We next show that the information carried by all noisy
fragments is at most the sum of the information 
for the fragments of each length. 
\begin{lemma}
\label{lem:sets}
Let $\boldT=(T_1,\ldots,T_{n-1})$ be the cut pattern from Definition~\ref{def:renewal-model}.
Write $\boldt=(t_1,\ldots,t_{n-1})\in\{0,1\}^{n-1}$
for a fixed possible value of $\boldT$.
Fix $\boldt$ with $\Pr(\boldT=\boldt)>0$, and let
$m_L$ be the
number of fragments of length $L$ and
$\cL=\{L\in[n]:m_L>0\}$ as defined in Lemma~\ref{lem:partition-counts} in Appendix~\ref{app:bernoulli}.
Then, for every distribution of $\boldX$,
\begin{align}
I(\boldX;\cY\mid\boldT=\boldt)
&\leq\sum_{L\in\cL}\Csh(m_L,L).
\label{eq:set-information}
\end{align}
\end{lemma}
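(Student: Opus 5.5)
Since the cut pattern is independent of the codeword and of the BSC noise, conditioning on $\boldT=\boldt$ does not change the input distribution or the noise law. The plan is to fix $\boldt$ and bound $I(\boldX;\cY\mid\boldT=\boldt)$ for the conditional joint law. This law is just $P_{\boldX}$ combined with a deterministic fragmentation at the positions given by $\boldt$.

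\textbf{Step 1: split $\cY$ by length.} Given $\boldt$, every fragment length is known, so we can write $\cY$ as a deterministic function of the tuple $(\cY_L)_{L\in\cL}$. Here $\cY_L$ is the unordered multiset of noisy fragments of length $L$. Conversely, each $\cY_L$ is recovered from $\cY$ by selecting the elements of length $L$. Hence
\[
I(\boldX;\cY\mid\boldT=\boldt)=I\bigl(\boldX;(\cY_L)_{L\in\cL}\mid\boldT=\boldt\bigr).
\]
Let $\boldX^{(L)}$ be the ordered tuple of the $m_L$ codeword substrings of length $L$, taken in their positional order. The substrings for different $L$ occupy disjoint position sets, and the BSC noise is independent across bits. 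Therefore the $\cY_L$ are conditionally independent given $\boldX$, and each $\cY_L$ depends on $\boldX$ only through $\boldX^{(L)}$.

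\textbf{Step 2: subadditivity.} By the chain rule,
\[
I\bigl(\boldX;(\cY_L)_L\bigr)=\sum_{L} I\bigl(\boldX;\cY_L\mid (\cY_{L'})_{L'<L}\bigr).
\]
Each term equals $H(\cY_L\mid \text{prev})-H(\cY_L\mid\boldX,\text{prev})$. The first entropy is at most $H(\cY_L)$. By conditional independence given $\boldX$, the second entropy equals $H(\cY_L\mid\boldX)$. So each term is at most $I(\boldX;\cY_L)$. Next, $\boldX\to\boldX^{(L)}\to\cY_L$ is a Markov chain, so $I(\boldX;\cY_L)=I(\boldX^{(L)};\cY_L)$.

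\textbf{Step 3: apply Definition~\ref{definition:Csh}.} The pair $(\boldX^{(L)},\cY_L)$ is exactly an instance of that definition with $m=m_L$ and $k=L$. The input is an ordered tuple of $m_L$ strings of length $L$ with an arbitrary joint law, and $\cY_L$ is the unordered multiset of their independent BSC outputs. Hence $I(\boldX^{(L)};\cY_L)\le\Csh(m_L,L)$, and summing over $L\in\cL$ gives \eqref{eq:set-information}.

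The main point needing care is Step 2: establishing conditional independence of the groups given $\boldX$ under $\boldT=\boldt$, and checking that $\cY$ and $(\cY_L)_L$ carry the same information. Both follow from the independence of $\boldT$ from $(\boldX,\text{noise})$ and from the fact that fragments of distinct lengths are distinguishable. The remaining steps are routine.
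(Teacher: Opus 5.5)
Your proposal is correct and follows essentially the same route as the paper. You split $\cY$ into the length classes $(\cY_L)_{L\in\cL}$, apply the chain rule with conditional independence of the classes given $\boldX$ to reach $\sum_{L} I(\boldX;\cY_L\mid\boldT=\boldt)$, and bound each term by $\Csh(m_L,L)$; your explicit Markov-chain reduction to $I(\boldX^{(L)};\cY_L)$ just spells out what the paper folds into its final appeal to Definition~\ref{definition:Csh}.
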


\begin{proof}
For each $L\in\cL$, let $\cY_L$ be the multiset of
noisy fragments of length $L$. List the distinct
lengths as $L_1<\cdots<L_{|\cL|}$.
Separating $\cY$ by fragment length provides the tuple
$(\cY_L)_{L\in\cL}$, and combining its entries
recovers~$\cY$. Therefore, 
\begin{align}
I(\boldX;\cY\mid\boldT=\boldt)
&=H((\cY_L)_{L\in\cL}\mid\boldT=\boldt)
-H((\cY_L)_{L\in\cL}\mid\boldX,\boldT=\boldt)
\notag\\
&\overset{(a)}{=}
H((\cY_L)_{L\in\cL}\mid\boldT=\boldt)
-\sum_{L\in\cL}H(\cY_L\mid\boldX,\boldT=\boldt)
\notag\\
&\overset{(b)}{=}
\sum_{i=1}^{|\cL|}
H(\cY_{L_i}\mid
\cY_{L_1},\ldots,\cY_{L_{i-1}},\boldT=\boldt)
-\sum_{i=1}^{|\cL|}
H(\cY_{L_i}\mid\boldX,\boldT=\boldt)
\notag\\
&\overset{(c)}{\leq}
\sum_{i=1}^{|\cL|}
\left[
H(\cY_{L_i}\mid\boldT=\boldt)
-H(\cY_{L_i}\mid\boldX,\boldT=\boldt)
\right]
\notag\\
&=\sum_{L\in\cL}
I(\boldX;\cY_L\mid\boldT=\boldt)
\overset{(d)}{\leq}
\sum_{L\in\cL}\Csh(m_L,L),
\notag
\end{align}
where $(a)$ follows from the conditional independence
of the multisets $\cY_L$: once the codeword and cut
locations are fixed, each $\cY_L$ depends only on
the BSC noise affecting the fragments of length $L$.
Fragments of different lengths occupy disjoint bit
positions, so their independent noise makes the
conditional entropies add exactly.
Equality~$(b)$ follows from the entropy chain rule.
Inequality~$(c)$ removes the conditioning
on the earlier multisets, which can only increase
each entropy.
Inequality~$(d)$ follows from
Definition~\ref{definition:Csh}.
\end{proof}
We now apply Proposition~\ref{prop:uniform} to the
fragments of each length and add the resulting bounds.

\begin{lemma}
\label{lem:partition}
Suppose there exists a constant~$c_1>0$ such that
$\Pr(K_n\leq c_1n/\log n)\to1$ as $n\to\infty$. 
Then
\begin{align}
\limsup_{n\to\infty}\frac1n
\sup_{P_{\boldX}}I(\boldX;\cY)
&\leq
\limsup_{n\to\infty}\frac1n
\bE\left[
\sum_{j=1}^{K_n}(\kappa L_{n,j}-\log n)^+
\right].
\label{eq:partition-converse}
\end{align}
\end{lemma}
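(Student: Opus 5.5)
We plan to reveal the cut pattern to the decoder, bound the conditional information using Lemma~\ref{lem:sets} and Proposition~\ref{prop:uniform}, and then average over $\boldT$. Since $\boldT$ is independent of $\boldX$,
\begin{align}
I(\boldX;\cY)\le I(\boldX;\cY,\boldT)=I(\boldX;\cY\mid\boldT)=\sum_{\boldt}\Pr(\boldT=\boldt)\,I(\boldX;\cY\mid\boldT=\boldt).
\label{eq:reveal-cuts}
\end{align}
For each $\boldt$, Lemma~\ref{lem:sets} gives $I(\boldX;\cY\mid\boldT=\boldt)\le\sum_{L\in\cL}\Csh(m_L,L)$. We then apply Proposition~\ref{prop:uniform} with fixed $\epsilon,\delta$ to each length class. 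This yields $\Csh(m_L,L)\le m_L(\kappa L-\lambda\log m_L)^+ + c_2 m_L + 2b\,m_L^{1-\delta}\log(m_L+1)$.

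The main term will be compared with $\sum_j(\kappa L_{n,j}-\log n)^+$. We use $m_L\le n$ and $\lambda<1$, so $\lambda\log m_L\ge\log m_L$ fails in the wrong direction. Instead we split by the size of $m_L$. If $m_L\ge n^{1-\eta}$ for a small $\eta>0$, then $\lambda\log m_L\ge\lambda(1-\eta)\log n$, so $m_L(\kappa L-\lambda\log m_L)^+\le m_L(\kappa L-\log n)^+ + m_L(1-\lambda(1-\eta))\log n$. If $m_L<n^{1-\eta}$, we use the trivial bound $\Csh(m_L,L)\le m_L L$. Since $L\le n$, there are at most $n$ distinct lengths, and these classes then contribute at most $\sum_{L:m_L<n^{1-\eta}} m_L L$. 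To control this we use Lemma~\ref{lem:partition-counts}, which bounds the number of distinct lengths. We expect the relevant count to be $O(\sqrt{n})$ or polylogarithmic in the typical regime, because $\sum_L L m_L = n$ forces few large lengths, giving $|\cL|=O(\sqrt n)$ deterministically. Lengths above $n^{\eta/2}$ contribute a small total with high probability, by Markov's inequality and the second-moment assumption. Small lengths with small counts contribute at most $|\cL|\,n^{1-\eta}n^{\eta/2}=o(n)$ if $|\cL|$ is small enough. This is the part most dependent on Lemma~\ref{lem:partition-counts}.

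The remaining error terms are handled as follows. The term $\sum_L c_2 m_L=c_2K_n$ and the term $\sum_L m_L(1-\lambda(1-\eta))\log n\le K_n(1-\lambda+\lambda\eta)\log n$ are both at most $c_1n(\cdot)$ on the event $K_n\le c_1n/\log n$. The first is $o(n)$ because it is $O(n/\log n)$. The second is $O(n(1-\lambda+\eta))$. The term $\sum_L 2b\,m_L^{1-\delta}\log(m_L+1)\le 2b|\cL|\,(K_n/|\cL|)^{1-\delta}\log(n+1)$ by concavity, which is $o(n)$ given the bound on $|\cL|$. On the complementary event of vanishing probability we use $I\le n$, so it contributes $o(n)$ after division by $n$.

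Dividing by $n$ and taking $\limsup$ gives the bound $\limsup\frac1n\bE[\sum_j(\kappa L_{n,j}-\log n)^+]+O(1-\lambda+\eta)$. Finally we let $\eta\to0$ and then $\epsilon,\delta\to0$, so that $\lambda=\kappa(1-\delta)/(\kappa+\epsilon)\to1$; the constants $c_2(p,\epsilon,\delta)$ have already disappeared in the $o(n)$ terms. We expect the main obstacle to be the uniform control of the mismatch between $\lambda\log m_L$ and $\log n$ for length classes with small counts, which relies on the counting lemma together with the second-moment assumption.
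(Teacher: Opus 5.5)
Your overall architecture is the paper's: reveal $\boldT$, apply Lemma~\ref{lem:sets} and Proposition~\ref{prop:uniform} to each length class, absorb $c_2K_n$ and the $m_L^{1-\delta}$ term via concavity and $|\cL|\le\sqrt{2n}$, pay $(1-\lambda)K_n\log n\le(1-\lambda)c_1n$, use $I\le n$ off the good event, and let $\epsilon,\delta\to0$. The gap is in the one place where you depart from the paper: the mismatch between $\lambda\log m_L$ and $\log n$ for classes with $m_L<n^{1-\eta}$.

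First, the estimate $|\cL|\,n^{1-\eta}n^{\eta/2}=o(n)$ needs $|\cL|=o(n^{\eta/2})$. The only deterministic bound you have is $|\cL|\le\sqrt{2n}$, which gives $\sqrt2\,n^{3/2-\eta/2}$. Nothing in the hypotheses makes $|\cL|$ polylogarithmic. Even counting the lengths below $n^{\eta/2}$ by $n^{\eta/2}$ only gives $n^{\eta/2}\cdot n^{1-\eta}\cdot n^{\eta/2}=n$, not $o(n)$.

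Second, the long-fragment part uses Markov's inequality and the second-moment condition. That condition is not a hypothesis of this lemma, which assumes only $\Pr(K_n\le c_1n/\log n)\to1$. You need it only because the bound $\Csh(m_L,L)\le m_LL$ loses the factor $\kappa$ against the main term $m_L(\kappa L-\log n)^+$. As a result, the total length in those classes would have to be $o(n)$, which requires a tail assumption.

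The paper avoids any case split. For every class it writes $\kappa L-\lambda\log m_L=(\kappa L-\lambda\log n)+\lambda\log(n/m_L)$ and uses $(a+d)^+\le a^++d$. The entire mismatch is then $\lambda\sum_{L\in\cL}m_L\log(n/m_L)$. Part~(B) of Lemma~\ref{lem:partition-counts} bounds this sum by $2K\log(n/K)+K\log e$. Indeed, with $V$ the length of a uniformly chosen fragment, the sum equals $K\log(n/K)+KH(V)$. Moreover, $H(V)\le\log(n/K)+\log e$ because $V$ is a positive integer with mean $n/K$ (compare with a geometric law). On $\{K_n\le c_1n/\log n\}$ this is $O(n\log\log n/\log n)=o(n)$ for every cut pattern, with no tail assumption.

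Your split can be repaired along the same lines, also without the second moment:
\begin{itemize}
\item Keep Proposition~\ref{prop:uniform} for all classes rather than switching to the trivial bound.
\item Bound $m_L\log(n/m_L)$ by $\eta m_L\log n$ when $m_L\ge n^{1-\eta}$, and by $m_L\log n$ otherwise.
\item Use $\sum_{L>T}m_L\le n/T$ and the fact that at most $T$ classes have $L\le T$. The small-count classes then contribute at most $(n/T+Tn^{1-\eta})\log n$, which is $o(n)$ for $T=n^{\eta/2}$.
\item This leaves $\lambda\eta c_1n$, which disappears as $\eta\to0$.
\end{itemize}
This repaired split is a more elementary substitute for the entropy bound~\eqref{eq:counts-entropy}, at the cost of an extra parameter $\eta$.
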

\begin{proof}
Suppose the decoder knows
the exact cut locations $\boldT$ from
Lemma~\ref{lem:sets}, in addition to $\cY$.
Then
\begin{align}
I(\boldX;\cY)
&\leq I(\boldX;\cY,\boldT)
=I(\boldX;\boldT)+I(\boldX;\cY\mid\boldT)
\overset{(\dagger)}{=}
I(\boldX;\cY\mid\boldT),
\label{eq:reveal-cuts}
\end{align}
where $(\dagger)$ follows from the
independence of $\boldX$ and $\boldT$.

Fix $\epsilon>0$ and $\delta\in(0,1)$, and use the
constants $b,\lambda,c_2$ from
Proposition~\ref{prop:uniform}.
We first consider fixed cut locations
$\boldT=\boldt$ with $K_n\leq c_1n/\log n$
(in calculations conditional on $\boldT=\boldt$,
$K_n$ and $L_{n,j}$ refer to  fixed values
determined by $\boldt$), for which we have,
\begin{align}
I(\boldX;\cY\mid\boldT=\boldt)
&\overset{\eqref{eq:set-information}}{\leq}
\sum_{L\in\cL}\Csh(m_L,L)
\notag\\
&\overset{\eqref{eq:uniform-shuffling}}{\leq}
\sum_{L\in\cL}m_L(\kappa L-\lambda\log m_L)^+
+c_2K_n
+2b\sum_{L\in\cL}m_L^{1-\delta}\log(m_L+1)
\notag\\
&\overset{\eqref{eq:counts-power}}{\leq}
\sum_{L\in\cL}m_L(\kappa L-\lambda\log m_L)^+
+c_2K_n
+2b\log(n+1)(2n)^{\delta/2}K_n^{1-\delta}
\notag\\
&\overset{(\dagger)}{\leq}
\sum_{j=1}^{K_n}(\kappa L_{n,j}-\lambda\log n)^+
+\lambda\sum_{L\in\cL}m_L\log\frac{n}{m_L}
+c_2K_n
+2b\log(n+1)(2n)^{\delta/2}K_n^{1-\delta}
\notag\\
&\overset{\eqref{eq:counts-entropy}}{\leq}
\sum_{j=1}^{K_n}(\kappa L_{n,j}-\lambda\log n)^+
+(c_2+\lambda\log e)K_n
+2\lambda K_n\log\frac{n}{K_n}
+2b\log(n+1)(2n)^{\delta/2}K_n^{1-\delta},
\label{eq:replace-length-logarithms}
\end{align}
where $(\dagger)$ follows since
$\kappa L-\lambda\log m_L
=\kappa L-\lambda\log n+\lambda\log(n/m_L)$
and $(a+d)^+\leq a^++d$ for $d\geq0$.

We next bound the last three terms in~\eqref{eq:replace-length-logarithms},
divided by~$n$, one at a time.
Since $K_n/n\leq c_1/\log n<1/e$ for a sufficiently large~$n$,
we have,
\begin{align*}
\frac{(c_2+\lambda\log e)K_n}{n}
\leq\frac{c_1(c_2+\lambda\log e)}{\log n}.
\end{align*}
For the second term, we cannot bound
$\log(n/K_n)$ by itself; instead we bound the product
$K_n\log(n/K_n)$.
The function $x\mapsto x\log(n/x)$ is nondecreasing on $(0,n/e]$.
Thus, we have
\begin{align*}
\frac{2\lambda K_n\log(n/K_n)}{n}
\leq\frac{2\lambda}{n}\cdot\frac{c_1n}{\log n}
\log\left(\frac{n}{c_1n/\log n}\right)
=\frac{2\lambda c_1\log(\log n/c_1)}{\log n}.
\end{align*}
For the third term, we use $K_n^{1-\delta}\leq(c_1n/\log n)^{1-\delta}$,
which holds since $1-\delta>0$:
\begin{align*}
\frac{2b\log(n+1)(2n)^{\delta/2}K_n^{1-\delta}}{n}
\leq\frac{2b\log(n+1)\,2^{\delta/2}n^{\delta/2}\,c_1^{1-\delta}n^{1-\delta}}
{n(\log n)^{1-\delta}}
=\frac{2^{1+\delta/2}bc_1^{1-\delta}\log(n+1)}
{n^{\delta/2}(\log n)^{1-\delta}}.
\end{align*}
Adding the three bounds shows that the last three terms
in~\eqref{eq:replace-length-logarithms}, divided by~$n$, are at most
\begin{align}
\frac{c_1(c_2+\lambda\log e)}{\log n}
+\frac{2\lambda c_1\log(\log n/c_1)}{\log n}
+\frac{2^{1+\delta/2}bc_1^{1-\delta}\log(n+1)}
{n^{\delta/2}(\log n)^{1-\delta}}
=:e_n.
\notag
\end{align}
Each of the three terms defining $e_n$ tends to
zero as $n\to\infty$, so $e_n\to0$ as $n\to\infty$.

Since~$\lambda\in(0,1)$ (see Proposition~\ref{prop:uniform}), it follows that
\begin{align}
I(\boldX;\cY\mid\boldT=\boldt)
&\overset{\eqref{eq:replace-length-logarithms}}{\leq}
\sum_{j=1}^{K_n}(\kappa L_{n,j}-\lambda\log n)^+
+ne_n
\notag\\
&\overset{(\dagger)}{\leq}
\sum_{j=1}^{K_n}(\kappa L_{n,j}-\log n)^+
+(1-\lambda)K_n\log n+ne_n
\notag\\
&\leq
\sum_{j=1}^{K_n}(\kappa L_{n,j}-\log n)^+
+(1-\lambda)c_1n+ne_n,
\label{eq:partition-small-fragment-number}
\end{align}
where $(\dagger)$ follows since
$\kappa L_{n,j}-\lambda\log n
=\kappa L_{n,j}-\log n+(1-\lambda)\log n$ and since~$(a+d)^+\le a^++d$ for every real~$a$ and every~$d\geq0$; applying this with~$d=(1-\lambda)\log n$ to each of the~$K_n$ terms and summing over~$j$ produces the term~$(1-\lambda)K_n\log n$.

We now average over all cut locations to get the final bound for
$I(\boldX;\cY)$.
\begin{align}
\frac1n I(\boldX;\cY)
&\overset{\eqref{eq:reveal-cuts}}{\leq}
\frac1n I(\boldX;\cY\mid\boldT)\overset{(\dagger)}{=}
\frac1n\sum_{\boldt}
\Pr(\boldT=\boldt)
I(\boldX;\cY\mid\boldT=\boldt)
\notag\\
&\overset{(\dagger\dagger)}{=}
\frac1n\sum_{\boldt}
\Pr(\boldT=\boldt,\,K_n\leq c_1n/\log n)
I(\boldX;\cY\mid\boldT=\boldt)
\notag\\
&\quad+
\frac1n\sum_{\boldt}
\Pr(\boldT=\boldt,\,K_n>c_1n/\log n)
I(\boldX;\cY\mid\boldT=\boldt),
\label{eq:split-cut-patterns}
\end{align}
where $(\dagger)$ follows from the definition of conditional
mutual information, and $(\dagger\dagger)$ follows from the law of
total probability by separating the cut locations according to
whether $K_n\leq c_1n/\log n$ or $K_n>c_1n/\log n$.

For the cut patterns with $K_n\leq c_1n/\log n$, we bound
$I(\boldX;\cY\mid\boldT=\boldt)$
using~\eqref{eq:partition-small-fragment-number}.
For the remaining cut patterns, we use the trivial bound
$I(\boldX;\cY\mid\boldT=\boldt)\leq H(\boldX)\leq n$.
Given $\boldT=\boldt$, the number of fragments $K_n$ and the lengths
$L_{n,j}$ are fixed, so averaging the first bound over $\boldt$ turns
its sum into an expectation.
Substituting these bounds into~\eqref{eq:split-cut-patterns} shows
\begin{align}
\frac1n I(\boldX;\cY)
&\leq
\frac1n\bE\left[
\mathbf{1}_{\{K_n\leq c_1n/\log n\}}
\sum_{j=1}^{K_n}(\kappa L_{n,j}-\log n)^+
\right]
\notag\\
&\quad+\bigl((1-\lambda)c_1+e_n\bigr)\Pr(K_n\leq c_1n/\log n)
+\Pr(K_n>c_1n/\log n)
\notag\\
&\overset{(\dagger)}{\leq}
\frac1n\bE\left[
\sum_{j=1}^{K_n}(\kappa L_{n,j}-\log n)^+
\right]
+(1-\lambda)c_1+e_n
+\Pr(K_n>c_1n/\log n),
\label{eq:partition-final-finite}
\end{align}
where $(\dagger)$ drops the indicator, which can only increase the
expectation since every summand is nonnegative, and uses
$\Pr(K_n\leq c_1n/\log n)\leq1$.

Now we prove the main statement~\eqref{eq:partition-converse}:
\begin{align}
\limsup_{n\to\infty}\frac1n
\sup_{P_{\boldX}}I(\boldX;\cY)
&\overset{\eqref{eq:partition-final-finite}}{\leq}
\lim_{\epsilon,\delta\to0}\limsup_{n\to\infty}
\left\{
\frac1n\bE\left[
\sum_{j=1}^{K_n}(\kappa L_{n,j}-\log n)^+
\right]
+(1-\lambda)c_1+e_n
+\Pr(K_n>c_1n/\log n)
\right\}
\notag\\
&\overset{(\dagger)}{=}
\lim_{\epsilon,\delta\to0}
\left\{
\limsup_{n\to\infty}\frac1n
\bE\left[
\sum_{j=1}^{K_n}(\kappa L_{n,j}-\log n)^+
\right]
+c_1\left(1-\frac{\kappa(1-\delta)}
{\kappa+\epsilon}\right)
\right\}
\notag\\
&\overset{(\dagger\dagger)}{=}
\limsup_{n\to\infty}\frac1n
\bE\left[
\sum_{j=1}^{K_n}(\kappa L_{n,j}-\log n)^+
\right],
\notag
\end{align}
where $(\dagger)$ follows since $e_n\to0$ and
$\Pr(K_n>c_1n/\log n)\to0$ as $n\to\infty$
for each fixed $\epsilon,\delta$.
Equality $(\dagger\dagger)$ follows from
$\kappa(1-\delta)/(\kappa+\epsilon)\to1$
as $\epsilon,\delta\to0$. \qedhere
\end{proof}
Lemma~\ref{lem:partition} bounds the mutual information
for the noisy TPC. 
We now use
the assumptions in~\eqref{eq:renewal-assumptions}
to prove Theorem~\ref{thm:renewal-capacity}.

\begin{proof}[Proof of Theorem~\ref{thm:renewal-capacity}]

Choose a fixed $c_1>\alpha$. We have
\begin{align}
\frac1n\bE\left[
\sum_{1\leq j\leq c_1n/\log n}N_{n,j}
\right]
&=\frac1n\sum_{1\leq j\leq c_1n/\log n}
\bE[N_{n,j}]=\frac{\ell_n}{n}
\sum_{1\leq j\leq c_1n/\log n}1\notag\\
&\overset{}{=}
\frac{\ell_n}{\log n}c_1+O(\ell_n/n)
\overset{\eqref{eq:renewal-assumptions}}{\longrightarrow}
\frac{c_1}{\alpha}>1
\qquad(n\to\infty).
\label{eq:convergence of mean}
\end{align}

Then, for all sufficiently large $n$, we have
\begin{align}
\Pr(K_n>c_1n/\log n)
&\overset{(a)}{=}
\Pr\left(
\sum_{1\leq j\leq c_1n/\log n}N_{n,j}<n
\right)
\notag\\
&\overset{(b)}{\leq}
\frac{
\Var\left(\sum_{1\leq j\leq c_1n/\log n}N_{n,j}\right)}
{\left(
\bE\left[\sum_{1\leq j\leq c_1n/\log n}N_{n,j}\right]-n
\right)^2}
\notag\\
&\overset{(c)}{\leq}
\frac{
4\Var\left(\sum_{1\leq j\leq c_1n/\log n}N_{n,j}\right)}
{(c_1/\alpha-1)^2n^2}
\notag\\
&\overset{(d)}{\leq}
\frac{4c_1\bE[N_{n,1}^2]}
{(c_1/\alpha-1)^2n\log n}
\overset{\eqref{eq:renewal-assumptions}}{=}
O\left(\frac{\log n}{n}\right)
\longrightarrow0
\qquad(n\to\infty),
\notag
\end{align}
where $(a)$ follows since more than $c_1n/\log n$
fragments are needed exactly when the sampled
lengths in the sum do not reach $n$.
Inequality~$(b)$ follows from Chebyshev's
inequality. When the sum is less than $n$,
its distance from its mean is greater than
the mean minus $n$.
Inequality~$(c)$ follows from~\eqref{eq:convergence of mean}.
In particular,
\begin{align*}
\frac1n\bE\left[
\sum_{1\leq j\leq c_1n/\log n}N_{n,j}
\right]
&\geq
\frac{c_1}{\alpha}
-\frac12\left(\frac{c_1}{\alpha}-1\right)
=1+\frac12\left(\frac{c_1}{\alpha}-1\right).
\end{align*}
Inequality~$(d)$ follows from the
independence and identical distribution of the
sampled lengths:
independence allows us to add
their variances:
\begin{align*}
\Var\left(
\sum_{1\leq j\leq c_1n/\log n}N_{n,j}
\right)
&=\sum_{1\leq j\leq c_1n/\log n}\Var(N_{n,j})\\
&=\sum_{1\leq j\leq c_1n/\log n}
\bigl(\bE[N_{n,1}^2]-\ell_n^2\bigr)
\leq
\frac{c_1n}{\log n}\bE[N_{n,1}^2].
\end{align*}

We have shown that $\Pr(K_n\leq c_1n/\log n)\to1$, so
Lemma~\ref{lem:partition} applies, and it remains to bound the expected
sum in~\eqref{eq:partition-converse}.
\begin{align}
\bE\left[
\sum_{j=1}^{K_n}(\kappa N_{n,j}-\log n)^+
\right]
&\overset{(\dagger)}{=}\sum_{j=1}^{n}\Pr(K_n\geq j)
\bE[(\kappa N_{n,1}-\log n)^+]
\notag\\
&\overset{\eqref{eq:reward-rate}}{=}
r_n(\kappa)\sum_{j=1}^{n}\Pr(K_n\geq j)\ell_n
\overset{(\dagger\dagger)}{=}
r_n(\kappa)\bE\left[\sum_{j=1}^{K_n}N_{n,j}\right],
\label{eq:sampled-fragment-expectation}
\end{align}
where $(\dagger)$ and $(\dagger\dagger)$ follow from Wald's
equation~\cite{wald1944cumulative} (see also~\cite{wikiWald}), applied to the i.i.d.\ sequences
$((\kappa N_{n,j}-\log n)^+)_{j\geq1}$ and $(N_{n,j})_{j\geq1}$,
respectively.
It applies since $K_n\leq n$ and the event
$\{K_n\geq j\}=\{S_{n,j-1}<n\}$ is determined by
$N_{n,1},\ldots,N_{n,j-1}$, and is therefore independent of $N_{n,j}$.

Every actual fragment length satisfies
$L_{n,j}\leq N_{n,j}$,
which provides
\begin{align}
\frac1n\bE\left[
\sum_{j=1}^{K_n}(\kappa L_{n,j}-\log n)^+
\right]
&\leq
\frac1n\bE\left[
\sum_{j=1}^{K_n}(\kappa N_{n,j}-\log n)^+
\right]
\notag\\
&\overset{\eqref{eq:sampled-fragment-expectation}}{=}
\frac{r_n(\kappa)}{n}
\bE\left[\sum_{j=1}^{K_n}N_{n,j}\right]
\notag\\
&\overset{(\dagger)}{\leq}
r_n(\kappa)\left(
1+\sqrt{\frac{\bE[N_{n,1}^2]}{n}}
\right)
\overset{\eqref{eq:renewal-assumptions}}{=}
r_n(\kappa)+o(1)
\qquad(n\to\infty),
\label{eq:expected-fragment-upper}
\end{align}
where $(\dagger)$ follows since
$\sum_{j=1}^{K_n}N_{n,j}
\leq n+\max_{1\leq j\leq n}N_{n,j}$
and
$\bE[\max_{1\leq j\leq n}N_{n,j}]
\leq\sqrt{\bE[\sum_{j=1}^{n}N_{n,j}^2]}
=\sqrt{n\bE[N_{n,1}^2]}$.

Applying
Lemma~\ref{lem:partition} therefore implies
\begin{align*}
\limsup_{n\to\infty}\frac1n
\sup_{P_{\boldX}}I(\boldX;\cY)
&\overset{\eqref{eq:partition-converse}}{\leq}
\limsup_{n\to\infty}\frac1n
\bE\left[
\sum_{j=1}^{K_n}(\kappa L_{n,j}-\log n)^+
\right]
\overset{\eqref{eq:expected-fragment-upper}}{\leq}
\limsup_{n\to\infty}r_n(\kappa). \qedhere
\end{align*}
\end{proof}

Next, we apply Theorem~\ref{thm:renewal-capacity} to
the independent-cut torn paper channel with
substitution errors. 
We evaluate
$r_n(\kappa)$ for this distribution to obtain the
capacity upper bound. 
The upper bound matches the
achievable rate in~\cite[Theorem~3]{ravi2024recovering},
which determines the capacity and proves
Corollary~\ref{cor:geometric}.
\begin{proof}[Proof of Corollary~\ref{cor:geometric}]
We first verify that the independent-cut torn paper channel
satisfies the fragment-length assumptions in
Assumption~\ref{remark:assumptions}.
Ravi, Vahid, and Shomorony describe the original torn paper
channel in~\cite[Section~I]{ravi2024recovering} as the special
case in which the fragment lengths are geometrically distributed: every possible cut
between two consecutive bits occurs independently with probability
$q_n$.
So, we have
$\Pr(N_{n,1}=j)=q_n(1-q_n)^{j-1},\ j\geq1,
$
and
$\ell_n=\bE[N_{n,1}]=\frac1{q_n},\bE[N_{n,1}^2]=\frac{2-q_n}{q_n^2}.$
The assumption $q_n\log n\to\alpha\in(0,\infty)$ provides
$\frac{\log n}{\ell_n}=q_n\log n\longrightarrow\alpha$ when $n\to\infty$.
Moreover, $q_n\log n\to\alpha$ when $n\to\infty$ implies
$q_n^{-1}=\Theta(\log n)$, and hence
$\bE[N_{n,1}^2]=\frac{2-q_n}{q_n^2}=\Theta((\log n)^2)$ when $n\to\infty$.
Thus, the independent-cut torn paper channel satisfies
\eqref{eq:renewal-assumptions}.

Next, we prove an explicit bound for $r_n(\kappa)$.
\begin{align}
r_n(\kappa)
&\overset{\eqref{eq:reward-rate}}{=}
q_n\bE[(\kappa N_{n,1}-\log n)^+]
=\kappa q_n\bE\left[\left(N_{n,1}-\frac{\log n}{\kappa}\right)^+\right]
\overset{(\dagger)}{=}
\kappa(1-q_n)^{\log n/\kappa}
\notag\\
&=
\kappa\exp\Bigl(\frac{q_n\log n}{\kappa}\cdot\frac{\ln(1-q_n)}{q_n}\Bigr)
\overset{(\dagger\dagger)}{\longrightarrow}
\kappa e^{-\alpha/\kappa},\qquad n\to\infty,
\label{eq:geometric-reward}
\end{align}
where, for simplicity, we assume that $\log n/\kappa$ is an integer.
Equality~$(\dagger)$ follows from the memoryless property of the
geometric distribution.
The limit~$(\dagger\dagger)$ follows from $q_n\log n\to\alpha$ and
$\ln(1-q_n)/q_n\to-1$, where $q_n\to0$ since $q_n\log n\to\alpha$.
If $\log n/\kappa$ is not an integer, let $a$ be the integer with
$a\leq\log n/\kappa<a+1$.
Since $\bE[(N_{n,1}-x)^+]$ is nonincreasing in $x$, the same
argument shows that $\kappa(1-q_n)^{a+1}\leq r_n(\kappa)\leq\kappa(1-q_n)^{a}$,
and both bounds tend to $\kappa e^{-\alpha/\kappa}$ since
$(1-q_n)^{\pm1}\to1$ when $n\to\infty$.

Theorem~\ref{thm:renewal-capacity} and Fano's inequality, as in the proof of Corollary~\ref{cor:capacity-limit}, therefore provide
\begin{align*}
C(\alpha,p)
&\leq\limsup_{n\to\infty}r_n(\kappa)
\overset{\eqref{eq:geometric-reward}}{=}
\kappa e^{-\alpha/\kappa}.
\end{align*}
On the other hand, the achievable
rate in~\cite[Theorem~3]{ravi2024recovering} provides
$
C(\alpha,p)
\geq
\kappa e^{-\alpha/\kappa}.
$
Combining the upper and lower bounds yields
$
C(\alpha,p)
=
\kappa e^{-\alpha/\kappa},
$
which proves~\eqref{eq:geometric-capacity}.
\end{proof}

\section{Capacity of the noisy shuffling channel}
\label{sec:sparse}
\label{sec:suffix}
In this section, we prove Theorem~\ref{thm:nsc}, which determines the capacity of the NSC. 
Since the NSC consists of equal-length noisy
fragments, the main step is Proposition~\ref{prop:uniform}, which bounds the information carried by an unordered collection of such fragments.
To this end, we first construct a finite collection of distributions (Construction~\ref{con:sparse-family}) that can approximate the distribution of any single noisy
string obtained from the BSC. 
Every member of the finite collection describes
a distribution of noisy strings.
Lemma~\ref{lem:sparse} shows that the actual
distribution of a noisy string can be approximated
by one member of this collection, regardless of
the probabilities of the binary strings before
the BSC. The selected member may depend on those
probabilities, but the collection itself is fixed.
We then use this collection,
together with a bound accounting for dependence among randomly
ordered noisy strings (Lemma~\ref{lem:type}), to control the information carried by the entire multiset in Proposition~\ref{prop:short}.
The bound in Proposition~\ref{prop:short} is useful when the fragments are short.
For longer fragments, we divide each fragment into a prefix and a
suffix by choosing the prefix length so that the short-fragment
bound applies.
We sort the noisy prefixes and retain their
corresponding noisy suffixes in the same order, and then bound the additional information carried by the suffixes, including any help they provide in identifying the original positions
of the prefixes. 
Combining the prefix and suffix bounds provides
an estimate valid for every number of fragments and every
fragment length as in Lemma~\ref{lem:suffix}. 
This estimate yields Proposition~\ref{prop:uniform}, and applying
Proposition~\ref{prop:uniform} to the equal-length fragments of the NSC
proves Theorem~\ref{thm:nsc} at the end of this section.

For two binary strings $\boldx,\boldy\in\{0,1\}^k$, let
$W_k(\boldy\mid\boldx)$ denote the probability that applying
the BSC to $\boldx$ provides the noisy string $\boldy$. Since each bit flips
independently with probability $p$, we have
\begin{equation}
 W_k(\boldy\mid\boldx)
 =p^{d_H(\boldx,\boldy)}(1-p)^{k-d_H(\boldx,\boldy)},
 \label{eq:bsc-kernel}
\end{equation}
where $d_H(\boldx,\boldy)$ is the Hamming distance between
$\boldx$ and $\boldy$.

We next seek an upper bound for $\Csh(m,k)$ that accounts
for the loss of fragment order. As a first step, fix a
fragment length $k\geq1$ and consider any distribution $P$ 
on the binary strings $\{0,1\}^k$ before the BSC (Remark~\ref{remark:distribution before bsc}).
The
corresponding distribution of one noisy string is denoted
by $\widetilde{P}$ ; that is, $\widetilde{P}(\boldy)$ is the probability of obtaining $\boldy$ at the output of the BSC when the input string is chosen according to~$P$, where
\[
\widetilde{P}(\boldy)
=
\sum_{\boldx\in\{0,1\}^k}
P(\boldx)W_k(\boldy\mid\boldx),
\qquad
\boldy\in\{0,1\}^k.
\]
Our immediate goal is to construct a finite collection
$\cV_k$ of distributions on noisy strings such that every
possible $\widetilde{P}$ can be approximated by some member of
$\cV_k$. 
The collection must be chosen independently of
$P$, although the member used to approximate $\widetilde{P}$
may depend on $P$.

\begin{remark}\label{remark:distribution before bsc}
In what follows, we treat each fixed tuple of equal-length binary strings, that is, each possible value of the strings before the BSC, as a probability distribution on single strings, namely its empirical distribution.
Accordingly, the distribution of strings before the BSC considered above will be taken to be the empirical distribution of a fixed tuple, rather than the distribution of a random tuple.
For a fixed tuple of binary strings, the empirical
distribution is fixed: the probability assigned to
each string is its number of occurrences divided
by the number of entries in the tuple.
Selecting one entry uniformly therefore provides one
string with this distribution, and applying the BSC
to the selected string provides the distribution $\widetilde{P}$.
We prove the approximation for any fixed distribution
$P$, so that it can later be applied to the empirical
distribution of every fixed tuple. 
Different tuples
may provide different empirical distributions, but the
same finite collection must provide an approximation
for all of them.
This empirical distribution is defined formally in Definition~\ref{def:fixed-fragment-output} and is used in this way in the proof of Proposition~\ref{prop:short}.
\end{remark}

\begin{construction}
\label{con:sparse-family}
Fix $\epsilon>0$, and for each integer $k\geq1$, let
$J_k=\lceil2^{k(\kappa+\epsilon)}\rceil$. For every ordered list
$\boldx=(\boldx_1,\ldots,\boldx_{J_k})$ of binary strings with
$\boldx_i\in\{0,1\}^k$ for all $i\in [J_k]=\{1,2,\ldots,J_k\}$, define 
\[
 Q_{\boldx}(\boldy)=\frac1{J_k}\sum_{j=1}^{J_k}
 W_k(\boldy\mid\boldx_j),
 \qquad \boldy\in\{0,1\}^k.
\]
Let $\cV_k$ be the collection of all distinct distributions
$Q_{\boldx}$ obtained from these lists.
\end{construction}

Each distribution in $\cV_k$ describes the noisy string obtained
by choosing one position in its defining list uniformly and
passing the corresponding string through the BSC.
There are $2^k$ choices for each of the $J_k$ positions in
a list, yielding $2^{kJ_k}$ possible lists. 
Since different
lists may define the same distribution, $|\cV_k|\leq 2^{kJ_k}$.
Lemma~\ref{lem:sparse} below shows that this collection contains an approximation
for all distributions of the form \(\widetilde{P}\) produced by the BSC. 
The next lemma shows that, for a fixed input string, the noisy strings
with an atypical number of bit flips have small total probability,
while every noisy string with a typical number of bit flips has
probability at most $2^{k(\kappa+\epsilon/2)}2^{-k}$.
\begin{lemma}
\label{lem:truncated-bsc}
Fix $\epsilon>0$ and choose $s$ with $0<s<p$ and
$s\log((1-p)/p)\leq\epsilon/2$.
 For $\boldx,\boldy\in\{0,1\}^k$, define
\begin{align}
 W_k^\circ(\boldy\mid\boldx)
 =W_k(\boldy\mid\boldx)
 \mathbf{1}_{\{|d_H(\boldx,\boldy)-kp|\leq ks\}}, \qquad
 \tau_k=1-\sum_{\boldy\in\{0,1\}^k}
 W_k^\circ(\boldy\mid\boldx).
 \label{eq:tau_k}
\end{align}
Then $\tau_k$  does not depend on \(\mathbf{x}\) and satisfies
$\tau_k\leq2e^{-2ks^2}$. 
Moreover, for every
$\boldx,\boldy\in\{0,1\}^k$ we have $W_k^\circ(\boldy\mid\boldx)
\leq2^{k(\kappa+\epsilon/2)}2^{-k}$.
\end{lemma}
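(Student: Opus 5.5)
We prove the three claims in turn. Each follows from a direct computation with the BSC kernel~\eqref{eq:bsc-kernel}, together with a standard concentration bound for the number of bit flips.

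\emph{Independence of $\tau_k$ from $\boldx$.} For fixed $\boldx$, the map $\boldy\mapsto\boldy\oplus\boldx$ is a bijection of $\{0,1\}^k$, and it sends $d_H(\boldx,\boldy)$ to the Hamming weight of $\boldy\oplus\boldx$. Substituting $\boldz=\boldy\oplus\boldx$ gives
\begin{equation*}
1-\tau_k=\sum_{\boldz\in\{0,1\}^k}p^{\wt(\boldz)}(1-p)^{k-\wt(\boldz)}\mathbf{1}_{\{|\wt(\boldz)-kp|\leq ks\}}
=\Pr\bigl(|D-kp|\leq ks\bigr),
\end{equation*}
where $D\sim\mathrm{Binomial}(k,p)$ counts the flips. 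This expression does not involve $\boldx$.

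\emph{Tail bound.} Since $\tau_k=\Pr(|D-kp|>ks)$ and $D$ is a sum of $k$ independent $\{0,1\}$-valued variables, Hoeffding's inequality applied to each tail gives $\Pr(D-kp>ks)\leq e^{-2ks^2}$ and $\Pr(D-kp<-ks)\leq e^{-2ks^2}$. Adding the two tails yields $\tau_k\leq2e^{-2ks^2}$.

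\emph{Pointwise bound on $W_k^\circ$.} If $|d-kp|>ks$, where $d=d_H(\boldx,\boldy)$, then $W_k^\circ(\boldy\mid\boldx)=0$ and there is nothing to prove. Otherwise $d\geq k(p-s)$, and we use the identity
\begin{equation*}
\log W_k(\boldy\mid\boldx)=d\log p+(k-d)\log(1-p)=k\log(1-p)-d\log\frac{1-p}{p}.
\end{equation*}
Because $p<1/2$, we have $\log\frac{1-p}{p}>0$, so this expression is decreasing in $d$. Taking the smallest admissible value $d=k(p-s)$ gives the upper bound
\begin{equation*}
k\Bigl[\log(1-p)-p\log\frac{1-p}{p}\Bigr]+ks\log\frac{1-p}{p}=-kh_2(p)+ks\log\frac{1-p}{p}\leq -kh_2(p)+\frac{k\epsilon}{2}.
\end{equation*}
The last step uses the choice of $s$. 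Since $-h_2(p)=\kappa-1$, we conclude $W_k^\circ(\boldy\mid\boldx)\leq2^{k(\kappa+\epsilon/2)}2^{-k}$.

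None of these steps presents a real obstacle. The only places requiring care are the sign of $\log((1-p)/p)$, which is why only the lower end of the typical window matters, and the identity $\log(1-p)-p\log\frac{1-p}{p}=-h_2(p)$.
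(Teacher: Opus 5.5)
Your proof is correct and follows the paper's argument essentially step for step. You identify $\tau_k$ with the binomial tail $\Pr(|Z-kp|>ks)$ and bound it by Hoeffding; you then write $\log W_k$ as an affine, decreasing function of $d_H(\boldx,\boldy)$ and use only the lower end $d_H(\boldx,\boldy)\geq k(p-s)$ of the window together with $s\log((1-p)/p)\leq\epsilon/2$. The explicit translation $\boldy\mapsto\boldy\oplus\boldx$ just spells out the independence of $\tau_k$ from $\boldx$, which the paper reads off from the flip count being Binomial$(k,p)$ for every $\boldx$.
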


\begin{proof}
We first verify that the required choice of $s$ exists.
Since $p\in(0,1/2)$,
the quantity $\log((1-p)/p)$ is positive and finite, so
sufficiently small $s>0$ exists to satisfy $0<s<p$ and
$s\log((1-p)/p)\leq\epsilon/2$.
Choose such $s$ using only $p$ and $\epsilon$, and keep it fixed as $k$ varies.

To calculate $\tau_k$, fix $\boldx\in\{0,1\}^k$ and let $Z$ denote the number
of bits flipped by the BSC. Each of the $k$ bits is flipped
independently with probability $p$, so $Z$ has the
Binomial$(k,p)$ distribution.
By definition, $W_k^\circ(\boldy\mid\boldx)$ keeps the original
probability $W_k(\boldy\mid\boldx)$ when
$|d_H(\boldx,\boldy)-kp|\leq ks$ and sets it to zero otherwise.
Thus, $\tau_k$ is exactly the probability that the number of
flips lies outside this range: 
\begin{align*}
 \tau_k
 =1-\sum_{\boldy\in\{0,1\}^k}
       W_k^\circ(\boldy\mid\boldx) =\sum_{\substack{\boldy\in\{0,1\}^k\\
                   |d_H(\boldx,\boldy)-kp|>ks}}
       W_k(\boldy\mid\boldx) =\Pr(|Z-kp|>ks).
\end{align*}
Applying 
Lemma~\ref{lem:bernoulli} in Appendix~\ref{app:bernoulli}  yields
\[
 \tau_k=\Pr(|Z-kp|>ks)\leq 2e^{-2ks^2}.
\]

To bound the probability of each retained noisy string, fix
$\boldx,\boldy\in\{0,1\}^k$. Taking the negative logarithm
of~\eqref{eq:bsc-kernel}  provides
\begin{align*}
-\log W_k(\boldy\mid\boldx)
&=-d_H(\boldx,\boldy)\log p
-(k-d_H(\boldx,\boldy))\log(1-p) \\
&=-d_H(\boldx,\boldy)\log p
-k\log(1-p)
+d_H(\boldx,\boldy)\log(1-p) \\
&=-k\log(1-p)
+d_H(\boldx,\boldy)\log\frac{1-p}{p} \\
&=-kp\log p-k(1-p)\log(1-p) 
+\bigl(d_H(\boldx,\boldy)-kp\bigr)
\log\frac{1-p}{p} \\
&=kh_2(p)
+\bigl(d_H(\boldx,\boldy)-kp\bigr)
\log\frac{1-p}{p}.
\end{align*}
For the case $|d_H(\boldx,\boldy)-kp|\leq ks$, $W_k^\circ(\boldy\mid\boldx)=W_k(\boldy\mid\boldx)>0$, so 
\begin{align*}
    -\log W^\circ_k(\boldy\mid\boldx)\overset{(\dagger)}\geq kh_2(p)-ks\log\frac{1-p}{p} 
\overset{(\dagger\dagger)}\geq k(h_2(p)-\frac{\epsilon}{2})
\end{align*}
 where $(\dagger)$ follows from $|d_H(\boldx,\boldy)-kp|\leq ks$, which implies $d_H(\boldx,\boldy)-kp\geq-ks$ and $(\dagger\dagger)$ follows from
$s\log((1-p)/p)\leq\epsilon/2$.

Thus, $W^\circ_k(\boldy\mid\boldx)\leq
2^{-k(h_2(p)-\epsilon/2)}$.
Using $\kappa=1-h_2(p)$
provides
\begin{align}
 W_k^\circ(\boldy\mid\boldx)
 &\leq 2^{-k(h_2(p)-\epsilon/2)}
 =2^{k(\kappa+\epsilon/2)}2^{-k}.
 \label{eq:typical-row-bound}
\end{align}
For all $\boldx,\boldy$ satisfying $|d_H(\boldx,\boldy)-kp|>ks$,
we have $W_k^\circ(\boldy\mid\boldx)=0$,
so~\eqref{eq:typical-row-bound} holds for every
$\boldx,\boldy\in\{0,1\}^k$.
\end{proof}

Now fix any distribution $P$ on $\{0,1\}^k$ and let
$\widetilde{P}^\circ(\boldy)=
\sum_{\boldx\in\{0,1\}^k}P(\boldx)W_k^\circ(\boldy\mid\boldx)$.
We now introduce auxiliary random strings
$\boldX_1,\ldots,\boldX_{J_k}$, each drawn independently from~$P$.
For every $\boldy\in\{0,1\}^k$, define
\begin{align*}
 \hat{P}(\boldy)
 &=\frac1{J_k}\sum_{j=1}^{J_k}W_k(\boldy\mid\boldX_j) \qquad
 \hat{P}^\circ(\boldy)
 =\frac1{J_k}\sum_{j=1}^{J_k}
 W_k^\circ(\boldy\mid\boldX_j).
\end{align*}
 In the proofs below, $\bE$ and
$\Var$ are taken over $\boldX_1,\ldots,\boldX_{J_k}$, the auxiliary random strings above, each drawn independently from~$P$. 


\begin{lemma}
\label{lem:sampled-mixture-variance}
For every $\boldy\in\{0,1\}^k$, we have
\begin{align*}
 \bE[\hat{P}^\circ(\boldy)]
 =\widetilde{P}^\circ(\boldy) \mbox{ and }
 \Var(\hat{P}^\circ(\boldy))
 \leq\frac{2^{k(\kappa+\epsilon/2)}}{J_k}
 2^{-k}\widetilde{P}^\circ(\boldy).
\end{align*}
\end{lemma}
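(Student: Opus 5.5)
The plan is to treat $\hat{P}^\circ(\boldy)$ as an empirical average of $J_k$ i.i.d. bounded random variables and to compute its mean and variance directly. Fix $\boldy\in\{0,1\}^k$ and define $Z_j=W_k^\circ(\boldy\mid\boldX_j)$ for $j\in[J_k]$. The $\boldX_j$ are independent and each is drawn from~$P$, so the $Z_j$ are i.i.d. Their common mean is
\[
\bE[Z_1]=\sum_{\boldx}P(\boldx)W_k^\circ(\boldy\mid\boldx)=\widetilde{P}^\circ(\boldy).
\]
Since $\hat{P}^\circ(\boldy)=\frac1{J_k}\sum_j Z_j$, linearity of expectation immediately gives $\bE[\hat{P}^\circ(\boldy)]=\widetilde{P}^\circ(\boldy)$.

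For the variance, independence gives $\Var(\hat{P}^\circ(\boldy))=\frac1{J_k}\Var(Z_1)$. We then bound $\Var(Z_1)\le\bE[Z_1^2]$. The key step is to use the pointwise bound from Lemma~\ref{lem:truncated-bsc}, namely $0\le Z_1\le 2^{k(\kappa+\epsilon/2)}2^{-k}$. Applying it to one factor of $Z_1^2=Z_1\cdot Z_1$ yields
\[
\bE[Z_1^2]\le 2^{k(\kappa+\epsilon/2)}2^{-k}\,\bE[Z_1]=2^{k(\kappa+\epsilon/2)}2^{-k}\widetilde{P}^\circ(\boldy).
\]
Dividing by $J_k$ gives the stated variance bound.

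There is no real obstacle here. The only point that needs care is using the truncated kernel $W_k^\circ$ rather than $W_k$, because the uniform pointwise bound $2^{k(\kappa+\epsilon/2)}2^{-k}$ holds only for the truncated version. This is exactly why the lemma is stated for $\hat{P}^\circ$.
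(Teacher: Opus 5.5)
Your proof is correct and follows the paper's argument: you write $\hat{P}^\circ(\boldy)$ as the average of the i.i.d.\ variables $W_k^\circ(\boldy\mid\boldX_j)$, use independence to reduce the variance to $\frac{1}{J_k}\Var\bigl(W_k^\circ(\boldy\mid\boldX_1)\bigr)$, and bound the second moment by $2^{k(\kappa+\epsilon/2)}2^{-k}\,\widetilde{P}^\circ(\boldy)$ using the pointwise bound from Lemma~\ref{lem:truncated-bsc}. The paper does the same, only spelling out the vanishing cross terms explicitly.
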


\begin{proof}
To bound the variance at a fixed noisy string $\boldy$, we
first compute the expectation of $\hat{P}^\circ(\boldy)$.
Each auxiliary sampled string $\boldX_j$ has distribution $P$,
so averaging $W_k^\circ(\boldy\mid\boldX_j)$ over its
possible values provides
\begin{align}\label{eq:expectation of Wk circ}
 \bE[W_k^\circ(\boldy\mid\boldX_j)]
 &=\sum_{\boldx\in\{0,1\}^k}
 P(\boldx)W_k^\circ(\boldy\mid\boldx)
 =\widetilde{P}^\circ(\boldy).
\end{align}
By the definition of $\hat{P}^\circ(\boldy)$ and linearity
of expectation,
\begin{align*}
 \bE[\hat{P}^\circ(\boldy)]
 =\bE\left[\frac1{J_k}\sum_{j=1}^{J_k}
 W_k^\circ(\boldy\mid\boldX_j)\right] 
 =\frac1{J_k}\sum_{j=1}^{J_k}
 \bE[W_k^\circ(\boldy\mid\boldX_j)] 
 =\frac1{J_k}\sum_{j=1}^{J_k}\widetilde{P}^\circ(\boldy)
 =\widetilde{P}^\circ(\boldy).
\end{align*}
We next compute the variance of $\hat{P}^\circ(\boldy)$. 
Each
$W_k^\circ(\boldy\mid\boldX_j)$ depends only on $\boldX_j$.
Since $\boldX_1,\ldots,\boldX_{J_k}$ are independent and
identically distributed, the random variables $W_k^\circ(\boldy\mid\boldX_j)$, $j\in[J_k]$ are also independent
and identically distributed. 
For distinct indices $i\neq j$, independence therefore provides
\begin{align}\label{eq:distinct indice expectation}
 \bE&\left[
 \bigl(W_k^\circ(\boldy\mid\boldX_i)-\widetilde{P}^\circ(\boldy)\bigr)
 \bigl(W_k^\circ(\boldy\mid\boldX_j)-\widetilde{P}^\circ(\boldy)\bigr)
 \right]\notag\\ &=
 \bE\left[W_k^\circ(\boldy\mid\boldX_i)-\widetilde{P}^\circ(\boldy)\right]
 \bE\left[W_k^\circ(\boldy\mid\boldX_j)-\widetilde{P}^\circ(\boldy)\right]
 =0.
\end{align}
Thus, we have
\begin{align}
 \Var(\hat{P}^\circ(\boldy))
 &=\frac1{J_k^2}\bE\left[
 \left(\sum_{j=1}^{J_k}
 \bigl(W_k^\circ(\boldy\mid\boldX_j)-\widetilde{P}^\circ(\boldy)\bigr)
 \right)^2\right] \notag\\
 &=\frac1{J_k^2}\sum_{j=1}^{J_k}
 \bE\left[
 \bigl(W_k^\circ(\boldy\mid\boldX_j)-\widetilde{P}^\circ(\boldy)\bigr)^2
 \right]\notag\\&+\frac{2}{J_k^2}\sum_{1\leq i<j\leq J_k}
 \bE\left[
 \bigl(W_k^\circ(\boldy\mid\boldX_i)-\widetilde{P}^\circ(\boldy)\bigr)
 \bigl(W_k^\circ(\boldy\mid\boldX_j)-\widetilde{P}^\circ(\boldy)\bigr)
 \right]\notag\\
 &\overset{\eqref{eq:distinct indice expectation}}=\frac1{J_k^2}\sum_{j=1}^{J_k}
 \bE\left[
 \bigl(W_k^\circ(\boldy\mid\boldX_j)-\widetilde{P}^\circ(\boldy)\bigr)^2
 \right] \notag\\&=\frac1{J_k}\Var(W_k^\circ(\boldy\mid\boldX_1))=\frac1{J_k}\bigl(\bE[W_k^\circ(\boldy\mid\boldX_1)^2]
   -(\widetilde{P}^\circ(\boldy))^2\bigr)\notag\\
 &\leq\frac1{J_k} \bE[W_k^\circ(\boldy\mid\boldX_1)^2]\overset{(\dagger)}\leq\frac{2^{k(\kappa+\epsilon/2)}}{J_k}
 2^{-k}\bE[W_k^\circ(\boldy\mid\boldX_1)]
 \overset{(\dagger\dagger)}=\frac{2^{k(\kappa+\epsilon/2)}}{J_k}
 2^{-k}\widetilde{P}^\circ(\boldy). \label{eq:variance of p hat circ}
\end{align}
where $(\dagger)$ follows since~\eqref{eq:typical-row-bound} holds for every binary
string before the BSC, so
$0\leq W_k^\circ(\boldy\mid\boldX_1)
\leq2^{k(\kappa+\epsilon/2)}2^{-k}$ and $(\dagger\dagger)$ follows from~\eqref{eq:expectation of Wk circ}.
\end{proof}

The next lemma shows that the collection in Construction~\ref{con:sparse-family} contains a distribution close to the distribution of one noisy string. Lemma~\ref{lem:sparse} then turns this closeness into a relative-entropy bound.
\begin{lemma}
\label{lem:fixed-mixture-approximation}
For any real-valued function $f$ on $\{0,1\}^k$, define
$\|f\|_1=\sum_{\boldy\in\{0,1\}^k}|f(\boldy)|$.
Then, for every distribution $P$ on $\{0,1\}^k$, we have
$\bE[\|\hat{P}-\widetilde{P}\|_1]\leq2\tau_k+2^{-k\epsilon/4}$, where the expectation is over the strings
$\boldX_1,\ldots,\boldX_{J_k}$.
Consequently, there exists $Q\in\cV_k$ with
$\|\widetilde{P}-Q\|_1\leq2\tau_k+2^{-k\epsilon/4}$ and
$\frac12\|\widetilde{P}-Q\|_1\leq v_k$, where
\begin{equation}
 v_k=\min\left\{1,\,
 2e^{-2ks^2}+2^{-k\epsilon/4-1}\right\}.
 \label{eq:tv-sequence}
\end{equation}
The sequence $(v_k)$ with $s$ chosen as in Lemma~\ref{lem:truncated-bsc} depends only on $p,\epsilon$ and tends
to zero as $k\to\infty$.
\end{lemma}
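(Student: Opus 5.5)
We bound $\bE[\|\hat{P}-\widetilde{P}\|_1]$ by splitting each difference into a truncated part and a remainder. For every $\boldy$, write
\[
\hat{P}(\boldy)-\widetilde{P}(\boldy)
=\bigl(\hat{P}^\circ(\boldy)-\widetilde{P}^\circ(\boldy)\bigr)
+\bigl(\hat{P}(\boldy)-\hat{P}^\circ(\boldy)\bigr)
-\bigl(\widetilde{P}(\boldy)-\widetilde{P}^\circ(\boldy)\bigr).
\]
Apply the triangle inequality and sum over $\boldy$. The last two terms are nonnegative functions of $\boldy$. By Lemma~\ref{lem:truncated-bsc}, their sums are exactly
\[
\frac1{J_k}\sum_j\sum_{\boldy}\bigl(W_k-W_k^\circ\bigr)(\boldy\mid\boldX_j)=\tau_k
\qquad\text{and}\qquad
\sum_{\boldx}P(\boldx)\,\tau_k=\tau_k,
\]
because $\tau_k$ does not depend on the input string. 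These two terms therefore contribute exactly $2\tau_k$, deterministically.

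For the truncated part, we use Jensen/Cauchy--Schwarz pointwise together with Lemma~\ref{lem:sampled-mixture-variance}:
\[
\bE\bigl|\hat{P}^\circ(\boldy)-\widetilde{P}^\circ(\boldy)\bigr|
\le\sqrt{\Var(\hat{P}^\circ(\boldy))}
\le\sqrt{\frac{2^{k(\kappa+\epsilon/2)}}{J_k}\,2^{-k}\,\widetilde{P}^\circ(\boldy)}.
\]
Summing over the $2^k$ strings $\boldy$ and applying Cauchy--Schwarz again gives
\[
\sum_{\boldy}\sqrt{\widetilde{P}^\circ(\boldy)}\le\sqrt{2^k\sum_{\boldy}\widetilde{P}^\circ(\boldy)}\le 2^{k/2}.
\]
The total is then at most $\sqrt{2^{k(\kappa+\epsilon/2)}/J_k}$. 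Since $J_k\ge 2^{k(\kappa+\epsilon)}$, this is at most $2^{-k\epsilon/4}$. Combining the pieces yields $\bE[\|\hat{P}-\widetilde{P}\|_1]\le 2\tau_k+2^{-k\epsilon/4}$.

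For the consequence, note that every realization of $(\boldX_1,\ldots,\boldX_{J_k})$ is an ordered list of strings in $\{0,1\}^k$. Each realized $\hat{P}$ is therefore some $Q_{\boldx}\in\cV_k$ from Construction~\ref{con:sparse-family}. Because a random variable attains a value at most its mean with positive probability, some realization gives $Q\in\cV_k$ with $\|\widetilde{P}-Q\|_1\le 2\tau_k+2^{-k\epsilon/4}$. Halving this and using $\tau_k\le 2e^{-2ks^2}$ from Lemma~\ref{lem:truncated-bsc} bounds the total variation by $2e^{-2ks^2}+2^{-k\epsilon/4-1}$. Since the total variation between distributions never exceeds $1$, we may also take the minimum with $1$, which gives $v_k$.

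Finally, $s$ depends only on $(p,\epsilon)$, so $v_k$ depends only on $(p,\epsilon)$, and both exponential terms vanish as $k\to\infty$. The main step is the variance-to-$\ell_1$ conversion. The key point there is that summing square roots over $2^k$ strings costs exactly the factor $2^{k/2}$ that cancels the $2^{-k}$ inside the variance bound. This cancellation is what makes the choice $J_k\approx 2^{k(\kappa+\epsilon)}$ sufficient.
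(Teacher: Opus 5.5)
Your proposal is correct and follows essentially the same route as the paper's proof. Both use the same three-term triangle-inequality split, in which each truncation remainder contributes exactly $\tau_k$. Both convert the variance bound to an $\ell_1$ bound by applying Cauchy--Schwarz twice, giving $2^{-k\epsilon/4}$. Both then argue that some realization of $\hat{P}$, which always lies in $\cV_k$, is at most the mean, and finish by halving and capping at $1$.
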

\begin{proof}

We first bound the expected absolute difference at a fixed
$\boldy$. Applying Cauchy--Schwarz to
$|\hat{P}^\circ(\boldy)-\widetilde{P}^\circ(\boldy)|$ provides
\begin{align*}
 \bE\left[
 |\hat{P}^\circ(\boldy)-\widetilde{P}^\circ(\boldy)|
 \right]
 &\leq
 \sqrt{\bE\left[
 (\hat{P}^\circ(\boldy)-\widetilde{P}^\circ(\boldy))^2
 \right]\bE[1^2]} =\sqrt{\Var(\hat{P}^\circ(\boldy))}
 \overset{\eqref{eq:variance of p hat circ}}\leq
 \sqrt{\frac{2^{k(\kappa+\epsilon/2)}}{J_k}}
 \sqrt{2^{-k}\widetilde{P}^\circ(\boldy)}.
\end{align*}
Summing this bound over all possible noisy strings provides
\begin{align}
 \bE\left[\|\hat{P}^\circ-\widetilde{P}^\circ\|_1\right]
 &=\sum_{\boldy\in\{0,1\}^k}
 \bE\left[
 |\hat{P}^\circ(\boldy)-\widetilde{P}^\circ(\boldy)|
 \right] \leq
 \sqrt{\frac{2^{k(\kappa+\epsilon/2)}}{J_k}}
 \sum_{\boldy\in\{0,1\}^k}
 \sqrt{2^{-k}\widetilde{P}^\circ(\boldy)}\notag\\
 &\overset{(\dagger)}\leq
 \sqrt{\frac{2^{k(\kappa+\epsilon/2)}}{J_k}}\sqrt{
 \left(\sum_{\boldy\in\{0,1\}^k}2^{-k}\right)
 \left(\sum_{\boldy\in\{0,1\}^k}\widetilde{P}^\circ(\boldy)\right)}
 \notag\\
 &\overset{(\dagger\dagger)}\leq
 \sqrt{\frac{2^{k(\kappa+\epsilon/2)}}{J_k}}
 \sqrt{1-\tau_k} \overset{(\dagger\dagger\dagger)}\leq
 \sqrt{\frac{2^{k(\kappa+\epsilon/2)}}{2^{k(\kappa+\epsilon)}}}
 =2^{-k\epsilon/4}.
 \label{eq:mixture-l1}
\end{align}
Here, $(\dagger)$ follows from applying Cauchy--Schwarz to the
two finite lists
$\{\sqrt{2^{-k}}\}_{\boldy\in\{0,1\}^k}$ and
$\{\sqrt{\widetilde{P}^\circ(\boldy)}\}_{\boldy\in\{0,1\}^k}$;
$(\dagger\dagger)$ follows from
$\sum_{\boldy\in\{0,1\}^k}2^{-k}=1$ and $\sum_{\boldy\in\{0,1\}^k}\widetilde{P}^\circ(\boldy)\overset{\eqref{eq:expectation of Wk circ}}=\sum_{\boldy\in\{0,1\}^k}\bE[W_k^\circ(\boldy\mid\boldX_j)]\overset{\eqref{eq:tau_k}}=1-\tau_k$;
$(\dagger\dagger\dagger)$ follows from $1-\tau_k\leq1$ and
$J_k\geq2^{k(\kappa+\epsilon)}$ in Construction~\ref{con:sparse-family}.

For every binary string $\boldx$, the difference
$W_k(\boldy\mid\boldx)-W_k^\circ(\boldy\mid\boldx)$ is
nonnegative for every $\boldy$, and its sum over all possible
noisy strings equals $\tau_k$. Hence,
$\widetilde{P}(\boldy)-\widetilde{P}^\circ(\boldy)\geq0$:
\begin{align}
 \|\widetilde{P}-\widetilde{P}^\circ\|_1
 &=\sum_{\boldy\in\{0,1\}^k}
 \bigl(\widetilde{P}(\boldy)-\widetilde{P}^\circ(\boldy)\bigr) \notag\\
 &=\sum_{\boldy\in\{0,1\}^k}
 \sum_{\boldx\in\{0,1\}^k}P(\boldx)
 \bigl(W_k(\boldy\mid\boldx)
       -W_k^\circ(\boldy\mid\boldx)\bigr) \notag\\
 &=\sum_{\boldx\in\{0,1\}^k}P(\boldx)
 \sum_{\boldy\in\{0,1\}^k}
 \bigl(W_k(\boldy\mid\boldx)
       -W_k^\circ(\boldy\mid\boldx)\bigr) \notag\\
 &\overset{\eqref{eq:tau_k}}=\sum_{\boldx\in\{0,1\}^k}P(\boldx)\tau_k
 =\tau_k.
\label{eq:l1 norm between mu and mu circ}
\end{align}

For every 
$\boldX_1,\ldots,\boldX_{J_k}$, each difference
$W_k(\boldy\mid\boldX_j)-W_k^\circ(\boldy\mid\boldX_j)$
is also nonnegative and sums to $\tau_k$ over $\boldy$.
Consequently,
\begin{align}
 \|\hat{P}-\hat{P}^\circ\|_1
 &=\sum_{\boldy\in\{0,1\}^k}
 \bigl(\hat{P}(\boldy)-\hat{P}^\circ(\boldy)\bigr) \notag\\
 &=\sum_{\boldy\in\{0,1\}^k}
 \frac1{J_k}\sum_{j=1}^{J_k}
 \bigl(W_k(\boldy\mid\boldX_j)
       -W_k^\circ(\boldy\mid\boldX_j)\bigr) \notag\\
 &=\frac1{J_k}\sum_{j=1}^{J_k}
 \sum_{\boldy\in\{0,1\}^k}
 \bigl(W_k(\boldy\mid\boldX_j)
       -W_k^\circ(\boldy\mid\boldX_j)\bigr) \notag\\
 &\overset{\eqref{eq:tau_k}}=\frac1{J_k}\sum_{j=1}^{J_k}\tau_k
 =\tau_k.
\label{eq:l1 norm between P and P circ}
\end{align}

We now bound the difference between the full distributions of noisy strings
$\hat{P}$ and $\widetilde{P}$ by writing
$\hat{P}-\widetilde{P}=(\hat{P}-\hat{P}^\circ)
+(\hat{P}^\circ-\widetilde{P}^\circ)+(\widetilde{P}^\circ-\widetilde{P})$
and applying the triangle inequality.
\begin{align}
 \bE\left[\|\hat{P}-\widetilde{P}\|_1\right]
 &\leq
 \bE\left[\|\hat{P}-\hat{P}^\circ\|_1\right]
 +\bE\left[\|\hat{P}^\circ-\widetilde{P}^\circ\|_1\right]
 +\|\widetilde{P}^\circ-\widetilde{P}\|_1 \notag\\
 &\overset{\eqref{eq:l1 norm between mu and mu circ},~\eqref{eq:l1 norm between P and P circ}}=\tau_k+
 \bE\left[\|\hat{P}^\circ-\widetilde{P}^\circ\|_1\right]+\tau_k \notag\\
 &\overset{~\eqref{eq:mixture-l1}}\leq2\tau_k+2^{-k\epsilon/4},
 \label{eq:full-mixture-l1}
\end{align}

Equation~\eqref{eq:full-mixture-l1} bounds the expected
approximation error $\|\hat{P}-\widetilde{P}\|_1$, where the expectation
is over the auxiliary random list
$\boldX_1,\ldots,\boldX_{J_k}$.
Thus, the error averaged over
all sampled lists is at most $2\tau_k+2^{-k\epsilon/4}$, so  at
least one fixed list produces a distribution $Q\in\cV_k$
satisfying
$\|Q-\widetilde{P}\|_1\leq2\tau_k+2^{-k\epsilon/4}$.

Dividing this bound by two provides
\begin{align*}
 \frac12\|\widetilde{P}-Q\|_1
 &\leq \frac12\left(2\tau_k+2^{-k\epsilon/4}\right)=\tau_k+2^{-k\epsilon/4-1}\leq 2e^{-2ks^2}+2^{-k\epsilon/4-1},
\end{align*}
where the last inequality uses
$\tau_k\leq2e^{-2ks^2}$ from
Lemma~\ref{lem:truncated-bsc}.

We also have a second bound that holds for any two probability
distributions.
To verify this directly, nonnegativity provides
$|\widetilde{P}(\boldy)-Q(\boldy)|
\leq\widetilde{P}(\boldy)+Q(\boldy)$ for every possible noisy string $\boldy$.
\begin{align*}
 \frac12\|\widetilde{P}-Q\|_1
 &=\frac12\sum_{\boldy\in\{0,1\}^k}
 |\widetilde{P}(\boldy)-Q(\boldy)|\leq\frac12\left(
 \sum_{\boldy\in\{0,1\}^k}\widetilde{P}(\boldy)
 +\sum_{\boldy\in\{0,1\}^k}Q(\boldy)\right)
 =\frac12(1+1)=1.
\end{align*}
Both upper bounds hold simultaneously, so we may take the
smaller one:
\begin{align*}
 \frac12\|\widetilde{P}-Q\|_1
 &\leq\min\left\{1,\,
 2e^{-2ks^2}+2^{-k\epsilon/4-1}\right\}
 =v_k.
\end{align*}

The choice of $s$ uses only $p$ and $\epsilon$, so $v_k$
does not depend on the distribution $P$. Moreover,
$s>0$ and $\epsilon>0$ remain fixed as $k$ increases.
Thus, both $e^{-2ks^2}$ and $2^{-k\epsilon/4}$ decrease
exponentially, and consequently $v_k\to0$ exponentially.
The same vanishing error bound therefore applies to every
distribution $P$ on the binary strings before the BSC.
\end{proof}

\begin{lemma}
\label{lem:sparse}
For the collections $\cV_k$ in
Construction~\ref{con:sparse-family}, there exists a
nonnegative deterministic sequence $(a_k)_{k\geq1}$,
depending only on $p,\epsilon$, with $a_k\to0$, such that
for every integer $k\geq1$ and every distribution $P$
on $\{0,1\}^k$, some $Q\in\cV_k$ satisfies
$D(\widetilde{P}\|Q)\leq a_k$.
In particular, $A_0=\sup\{a_k\}_{k\geq1}$ is finite and depends only on $p$ and $\epsilon$, and the same $Q$ satisfies $D(\widetilde{P}\|Q)\leq A_0$ for every $k\geq1$ and every distribution $P$ on $\{0,1\}^k$.
\end{lemma}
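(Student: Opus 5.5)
The plan is to turn the total-variation approximation of Lemma~\ref{lem:fixed-mixture-approximation} into a relative-entropy bound. The key extra ingredient is that $\widetilde{P}$ and every $Q\in\cV_k$ are both mixtures of BSC output rows, so their likelihood ratio is bounded uniformly. Fix $k$ and $P$, and take the $Q\in\cV_k$ supplied by Lemma~\ref{lem:fixed-mixture-approximation}, so that $\|\widetilde{P}-Q\|_1\le 2v_k$.

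\emph{Bounded likelihood ratio.} By~\eqref{eq:bsc-kernel}, for all $\boldx,\boldx',\boldy\in\{0,1\}^k$ we have $p^k\le W_k(\boldy\mid\boldx)\le(1-p)^k$. Hence
\[
W_k(\boldy\mid\boldx)\le\Bigl(\tfrac{1-p}{p}\Bigr)^k W_k(\boldy\mid\boldx').
\]
Averaging over $\boldx\sim P$ and over the list entries defining $Q$ preserves this inequality. It follows that
\[
\widetilde{P}(\boldy)\le r_k\,Q(\boldy), \qquad r_k=\bigl((1-p)/p\bigr)^k,
\]
for every $\boldy$. In particular $Q(\boldy)>0$ everywhere, so $D(\widetilde{P}\|Q)$ is finite.

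\emph{Reverse-Pinsker step.} Since both $\widetilde{P}$ and $Q$ sum to one, we can write
\[
D(\widetilde{P}\|Q)=\sum_{\boldy}Q(\boldy)\,g\bigl(\widetilde{P}(\boldy)/Q(\boldy)\bigr),
\qquad g(t)=t\log t-(t-1)\log e\ge0 .
\]
For $t\in[0,1]$ we have $g(t)\le(1-t)\log e$. For $t\in(1,r_k]$, the inequality $\ln t\ge 1-1/t$ gives $g(t)\le(t-1)\log t\le(t-1)\log r_k$. Multiplying by $Q(\boldy)$ and summing yields
\[
D(\widetilde{P}\|Q)\le(\log e+k\log\tfrac{1-p}{p})\,\|\widetilde{P}-Q\|_1\le 2\bigl(\log e+k\log\tfrac{1-p}{p}\bigr)v_k=:a_k .
\]

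\emph{Limit and uniform bound.} The sequence $a_k$ depends only on $p$ and $\epsilon$, because $v_k$ does (Lemma~\ref{lem:fixed-mixture-approximation}). Since $v_k\le 2e^{-2ks^2}+2^{-k\epsilon/4-1}$ decays exponentially while the prefactor grows only linearly in $k$, we get $a_k\to0$. A convergent sequence of nonnegative reals is bounded, so $A_0=\sup_k a_k<\infty$, and $A_0$ depends only on $p$ and $\epsilon$; the last claim of the lemma follows.

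The main obstacle is the reverse-Pinsker step. A naive convexity bound of the form $D\le \frac{r\log r}{r-1}\,\mathrm{TV}$ loses a factor $r_k$ that is exponential in $k$, which is too large. The centered function $g$ avoids this and yields only a $\log r_k=O(k)$ factor, which the exponential decay of $v_k$ absorbs.
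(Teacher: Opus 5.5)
Your proof is correct, but it takes a different route from the paper's.

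The paper uses only the one-sided bound $Q(\boldy)\ge p^k$. It writes $D(\widetilde{P}\|Q)=H(Q)-H(\widetilde{P})+\sum_{\boldy}(Q(\boldy)-\widetilde{P}(\boldy))\log Q(\boldy)$ and controls the entropy difference with the cited continuity bound (Lemma~\ref{lem:entropy-continuity}). This gives $D(\widetilde{P}\|Q)\le h_2(d)+kd+2kd\log(1/p)$ with $d=\frac12\|\widetilde{P}-Q\|_1$. Because this expression involves $h_2$, which is not monotone on $[0,1]$, the paper then defines $a_k$ as a maximum over $t\in[0,v_k]$.

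You instead use the two-sided bound $p^k\le W_k\le(1-p)^k$. This gives a uniform likelihood-ratio bound $\widetilde{P}\le((1-p)/p)^kQ$, after which an elementary reverse-Pinsker argument with the centered function $g$ finishes the job. The result is linear in the total-variation distance and monotone in $d$. In fact your own computation gives $D(\widetilde{P}\|Q)\le d\bigl(\log e+k\log\frac{1-p}{p}\bigr)$, a factor of two better than what you state, because the positive and negative parts of $\widetilde{P}-Q$ each sum to $d$.

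Your route is self-contained and needs neither the external continuity theorem nor the maximization step. The paper's route needs only a lower bound on $Q$, not a bound on the ratio. Both end with an $O(k\,v_k)$ bound, since $\log(1/v_k)=O(k)$ in the paper's $h_2(v_k)$ term, and both vanish because $v_k$ decays exponentially.

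One small slip: for $t>1$, the bound $g(t)\le(t-1)\log t$ follows from $\ln t\le t-1$, since $(t-1)\log t-g(t)=(t-1)\log e-\log t$. It does not follow from $\ln t\ge 1-1/t$, which only yields $g\ge 0$.
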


\begin{proof}
We convert the approximation bound from
Lemma~\ref{lem:fixed-mixture-approximation} into a bound on
$D(\widetilde{P}\|Q)$. Choose the fixed distribution $Q\in\cV_k$
provided by Lemma~\ref{lem:fixed-mixture-approximation}, and let
$d=\frac12\|\widetilde{P}-Q\|_1$. Then $d\leq v_k$, and
\begin{align}\label{eq:l1 norm equals 2d}
 \sum_{\boldy\in\{0,1\}^k}
 |\widetilde{P}(\boldy)-Q(\boldy)|
 &=2d.
\end{align}
This controls the absolute differences between the probabilities
assigned to noisy strings. 

Then, we establish a lower bound on $Q(\boldy)$. Each of
the $k$ factors in $W_k(\boldy\mid\boldx)$ is either $p$
or $1-p$. Since $p<1/2$, both $p$ and~$1-p$ are at least $p$,
so $W_k(\boldy\mid\boldx)\geq p^k$ for every pair of binary
strings. By Construction~\ref{con:sparse-family}, $Q=Q_{\boldx}$ is an
equal-weight linear combination associated with some fixed list
$\boldx=(\boldx_1,\ldots,\boldx_{J_k})$. Hence, for every $\boldy$,
\begin{align*}
 Q(\boldy)
 &=\frac1{J_k}\sum_{j=1}^{J_k}
 W_k(\boldy\mid\boldx_j)
 \geq\frac1{J_k}\sum_{j=1}^{J_k}p^k
 =p^k.
\end{align*}
Also, $Q(\boldy)\leq1$, as $Q$ is a probability
distribution. Therefore,
\begin{align}\label{eq: abstract value of log qy less than klog1p}
 0\leq-\log Q(\boldy)
 &\leq-\log p^k
 =k\log(1/p),
\end{align}
and thus $|\log Q(\boldy)|\leq k\log(1/p)$.

We next bound the difference between the entropies of
$Q$ and $\widetilde{P}$. Applying
Lemma~\ref{lem:entropy-continuity} in Appendix~\ref{app:bernoulli} yields
\begin{align}\label{eq:hq minus hpwk}
 |H(Q)-H(\widetilde{P})|
 &\leq h_2(d)+d\log(2^k)
 =h_2(d)+kd.
\end{align}

Finally, we have
\begin{align}
 D(\widetilde{P}\|Q)
 &=\sum_{\boldy\in\{0,1\}^k}
 \widetilde{P}(\boldy)
 \log\frac{\widetilde{P}(\boldy)}{Q(\boldy)}
 \notag\\
 &=-H(\widetilde{P})
 -\sum_{\boldy\in\{0,1\}^k}
 \widetilde{P}(\boldy)\log Q(\boldy)
 \notag\\
 &=-H(\widetilde{P})
 -\sum_{\boldy\in\{0,1\}^k}
 Q(\boldy)\log Q(\boldy)
 +\sum_{\boldy\in\{0,1\}^k}
 \bigl(Q(\boldy)-\widetilde{P}(\boldy)\bigr)\log Q(\boldy)
 \notag\\
 &=H(Q)-H(\widetilde{P})
 +\sum_{\boldy\in\{0,1\}^k}
 \bigl(Q(\boldy)-\widetilde{P}(\boldy)\bigr)\log Q(\boldy)
 \notag\\
 &\leq |H(Q)-H(\widetilde{P})|
 +\sum_{\boldy\in\{0,1\}^k}
 |Q(\boldy)-\widetilde{P}(\boldy)|\,|\log Q(\boldy)|
 \notag\\
 &\overset{\eqref{eq: abstract value of log qy less than klog1p},~\eqref{eq:hq minus hpwk}}\leq h_2(d)+kd
 +k\log(1/p)\sum_{\boldy\in\{0,1\}^k}
 |Q(\boldy)-\widetilde{P}(\boldy)|
 \notag\\
 &\overset{\eqref{eq:l1 norm equals 2d}}=h_2(d)+kd+2kd\log(1/p).
 \label{eq:reverse-kl}
\end{align}
We now prove that there exists a sequence $(a_k)$,
independent of the distribution $P$, such that
$D(\widetilde{P}\|Q)\leq a_k$ for the selected $Q$
and $a_k\to0$ as $k\to\infty$. 
Since a convergent sequence is bounded, $A_0=\sup_{k\geq1}a_k$ is then finite.

To construct a bound that applies to every $P$, recall
that~\eqref{eq:reverse-kl} bounds $D(\widetilde{P}\|Q)$ in terms
of the distance $d$, and Lemma~\ref{lem:fixed-mixture-approximation} provides
$0\leq d=\frac12\|\widetilde{P}-Q\|_1\leq v_k$. 
We define
\begin{align*}
 a_k
 &=\max_{0\leq t\leq v_k}
 \left\{h_2(t)+kt+2kt\log(1/p)\right\}.
\end{align*}
The maximum exists since $h_2(t)+kt+2kt\log(1/p)$ is continuous in $t$ and $[0,v_k]$ is a closed interval.
Thus, $D(\widetilde{P}\|Q)\leq a_k.$

Now, we turn to prove that $\lim_{k\to\infty}a_k=0$. 
For every $t\in[0,v_k]$, we have
\begin{align*}
 kt\leq kv_k, \qquad
 2kt\log(1/p)\leq2kv_k\log(1/p).
\end{align*}
 Adding these bounds and then taking the maximum over $t\in[0,v_k]$
provides
\begin{align}\label{eq:ak less than three terms}
 a_k
 &\leq kv_k+2kv_k\log(1/p)
 +\max_{0\leq t\leq v_k}h_2(t).
\end{align}
To bound the first two terms $kv_k,2kv_k\log(1/p)$ in~\eqref{eq:ak less than three terms}, recall that \[
0\leq kv_k\overset{\eqref{eq:tv-sequence}}{\leq}2ke^{-2ks^2}+k\,2^{-k\epsilon/4-1}\longrightarrow0\qquad(k\to\infty),
\]
where  $s>0$ and $\epsilon>0$ are fixed as in Lemma~\ref{lem:truncated-bsc} and $\log(1/p)$ is a constant.
Therefore,
$\lim_{k\to\infty}(kv_k+2kv_k\log(1/p))=0.$
To bound the last term $\max_{0\leq t\leq v_k}h_2(t)$ in~\eqref{eq:ak less than three terms},
recall that $v_k\to0$ as $k\to\infty$, so $v_k\leq1/2$
for all sufficiently large $k$. 
Since $h_2$ is increasing on $[0,1/2]$,
$
 \max_{0\leq t\leq v_k} h_2(t)=h_2(v_k)
$
for all sufficiently large $k$. 
Therefore,
\[
 \lim_{k\to\infty}\max_{0\leq t\leq v_k} h_2(t)
 =
 \lim_{k\to\infty}h_2(v_k)
 =
 h_2(0)
 =
 0,
\]
where the convergence follows from the continuity of $h_2$ at zero.
Thus, we have $\lim_{k\to\infty}a_k=0$.
Since $a_k\to0$ as $k\to\infty$, the sequence $(a_k)$
is bounded. 
Thus, let $A_0=\sup_{k\geq1}a_k<\infty$, and
for every $k\geq1$ and every distribution $P$,
 $Q$ satisfies
$D(\widetilde{P}\|Q)\leq a_k\leq A_0$.
\end{proof}

We next compare two joint distributions of $m$ noisy strings
to account for the dependence introduced by random ordering.
Fix $m$ binary strings of length $k$. 
For the first distribution, apply the BSC to each string, and then
arrange the resulting noisy strings in the order given by a permutation
of $\{1,\ldots,m\}$ chosen uniformly at random, independently of the
BSC noise. Each original index is used exactly once, even
when different indices correspond to identical strings.
For the second distribution, choose an original index
uniformly from $\{1,\ldots,m\}$ and apply the BSC to the
string at that index. 
Repeat this selection independently
for each of the $m$ positions, using fresh independent
noise each time. 
An index may therefore be selected several
times or not selected at all. 
We now define these two distributions formally.

\begin{definition}
\label{def:fixed-fragment-output}
Fix positive integers $m,k$ and a tuple
$\boldx=(\boldx_1,\ldots,\boldx_m)$, where
$\boldx_j\in\{0,1\}^k$ for every $j\in[m]$.
From here through the proof of Proposition~\ref{prop:short}, $\boldx$ denotes such a tuple of $m$ strings.
For each $\bolda\in\{0,1\}^k$, define
$m_{\bolda}=|\{j\in[m]:\boldx_j=\bolda\}|$ and
$P_{\boldx}(\bolda)=m_{\bolda}/m$.
The distribution obtained by selecting an index uniformly
and applying the BSC to its string is
\begin{align}
\widetilde{P}_{\boldx}(\boldy)
&=\frac1m\sum_{j=1}^m W_k(\boldy\mid\boldx_j),
\qquad \boldy\in\{0,1\}^k.
\label{eq:empirical-noisy-law}
\end{align}
\begin{itemize}
    \item The first joint distribution, denoted by $\Lambda_{\boldx}$,
is obtained by applying independent BSC noise to the
$m$ fixed strings and then arranging the noisy strings
according to an independent uniform permutation of
$[m]$. 
For every
$(\boldy_1,\ldots,\boldy_m)\in(\{0,1\}^k)^m$,
\begin{align*}
\Lambda_{\boldx}(\boldy_1,\ldots,\boldy_m)
&=\frac1{m!}
\sum_{\pi\text{ a permutation of }[m]}
\prod_{j=1}^m W_k(\boldy_j\mid\boldx_{\pi(j)}).
\end{align*}
Here $\pi(j)$ is the original index assigned to
position $j$. For a fixed permutation, independence
of the BSC noise provides the product of probabilities,
and the factor $1/m!$ averages over all permutations.
Every original index is used exactly once, even when
different indices carry identical strings. The resulting
tuple contains only the noisy strings, without their
original indices.
    \item The second joint distribution, denoted by
$\widetilde{P}_{\boldx}^{\otimes m}$, is obtained by independently
selecting an original index uniformly for each of the
$m$ positions and applying fresh independent BSC noise
to the selected string.
Indices may therefore be selected
repeatedly or omitted. The noisy strings are independent,
each with distribution $\widetilde{P}_{\boldx}$, and
\begin{align*}
\widetilde{P}_{\boldx}^{\otimes m}(\boldy_1,\ldots,\boldy_m)
&=\prod_{j=1}^m\widetilde{P}_{\boldx}(\boldy_j),
\qquad \boldy_1,\ldots,\boldy_m\in\{0,1\}^k.
\end{align*}
\end{itemize}
\end{definition}
The following lemma bounds the relative entropy of the first
joint distribution with respect to the second by $m\log e$.

\begin{lemma}
\label{lem:type}
For the distributions in Definition~\ref{def:fixed-fragment-output},
we have
\begin{equation*}
 D(\Lambda_{\boldx}\|\widetilde{P}_{\boldx}^{\otimes m})\le m\log e.
\end{equation*}
\end{lemma}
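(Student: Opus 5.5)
The plan is to bound the likelihood ratio $\Lambda_{\boldx}/\widetilde{P}_{\boldx}^{\otimes m}$ pointwise by a constant and then use the fact that a relative entropy is an average of the logarithm of this ratio. The key observation is that both distributions are averages of the same product kernels, taken over different index maps.

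First, expand the product distribution by multiplying out $\prod_{j=1}^m\widetilde{P}_{\boldx}(\boldy_j)$ using~\eqref{eq:empirical-noisy-law}. This gives
\begin{align*}
\widetilde{P}_{\boldx}^{\otimes m}(\boldy_1,\ldots,\boldy_m)
=\frac1{m^m}\sum_{\sigma:[m]\to[m]}\prod_{j=1}^m W_k(\boldy_j\mid\boldx_{\sigma(j)}),
\end{align*}
where the sum runs over all $m^m$ functions $\sigma$. Every term is nonnegative, and the $m!$ bijections $\pi$ form a subset of these functions. Discarding the non-bijective terms therefore yields
\begin{align*}
\widetilde{P}_{\boldx}^{\otimes m}(\boldy_1,\ldots,\boldy_m)
\geq\frac{m!}{m^m}\,\Lambda_{\boldx}(\boldy_1,\ldots,\boldy_m)
\end{align*}
for every tuple $(\boldy_1,\ldots,\boldy_m)$. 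In particular, $\Lambda_{\boldx}$ is absolutely continuous with respect to $\widetilde{P}_{\boldx}^{\otimes m}$, so the relative entropy is finite.

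Next, substitute this pointwise ratio bound into the definition of relative entropy:
\begin{align*}
D(\Lambda_{\boldx}\|\widetilde{P}_{\boldx}^{\otimes m})
=\sum\Lambda_{\boldx}\log\frac{\Lambda_{\boldx}}{\widetilde{P}_{\boldx}^{\otimes m}}
\leq\log\frac{m^m}{m!}.
\end{align*}
To finish, use the elementary inequality $m!\geq(m/e)^m$. It follows from $e^m=\sum_{i\geq0}m^i/i!\geq m^m/m!$. Hence $\log(m^m/m!)\leq m\log e$.

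There is no real obstacle in this argument. The only point needing care is the expansion of the product into a sum over all index maps, including maps that repeat indices. That expansion is exactly the "select with replacement" description of $\widetilde{P}_{\boldx}^{\otimes m}$ in Definition~\ref{def:fixed-fragment-output}. The argument also remains valid when several indices carry identical strings, because permutations are counted by index rather than by string value in both sums.
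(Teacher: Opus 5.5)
Your proof is correct. It reaches the same kind of pointwise likelihood-ratio bound as the paper, but through a different and simpler decomposition.

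The paper works with string values rather than indices. Its argument has four steps:
\begin{enumerate}
\item It introduces the type class $\cT_{\boldx}$ of tuples with the same multiplicities $m_{\bolda}$.
\item It shows that a uniform reordering of $\boldx$ is the conditional law of i.i.d.\ selection from $P_{\boldx}$ given $\cT_{\boldx}$.
\item It deduces $\Lambda_{\boldx}\le \widetilde{P}_{\boldx}^{\otimes m}/P_{\boldx}^{\otimes m}(\cT_{\boldx})$.
\item It lower-bounds $P_{\boldx}^{\otimes m}(\cT_{\boldx})\ge e^{-m}$ using $m!\ge(m/e)^m$ and $m_{\bolda}!\le m_{\bolda}^{m_{\bolda}}$.
\end{enumerate}

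You instead expand the product over all $m^m$ index maps and keep only the $m!$ bijections. This avoids counting $|\cT_{\boldx}|$ and avoids the conditional-uniformity step. Repeated strings need no special handling, because both sums count by index. Your derivation of $m!\ge(m/e)^m$ from the exponential series is also a bit quicker than the paper's integral comparison.

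What the paper's route buys is a sharper intermediate bound. The set $\cT_{\boldx}$ also contains non-bijective index maps that happen to produce the same multiset of strings. Because of this, $-\log P_{\boldx}^{\otimes m}(\cT_{\boldx})=\log\frac{m^m}{m!}-\sum_{\bolda:m_{\bolda}>0}\log\frac{m_{\bolda}^{m_{\bolda}}}{m_{\bolda}!}$. This vanishes when all strings coincide, where indeed $\Lambda_{\boldx}=\widetilde{P}_{\boldx}^{\otimes m}$. Your bound, by contrast, is always $\log(m^m/m!)$.

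The paper discards this gain in its last step via $m_{\bolda}!\le m_{\bolda}^{m_{\bolda}}$. Only the final $m\log e$ is used in Proposition~\ref{prop:short}. So your shorter argument loses nothing the paper needs.
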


\begin{proof}

We now prove the relative-entropy bound. 
It is enough to show that
\begin{align*}
\Lambda_{\boldx}(\boldy_1,\ldots,\boldy_m)
&\leq e^m\widetilde{P}_{\boldx}^{\otimes m}
(\boldy_1,\ldots,\boldy_m)
\end{align*}
for every ordered tuple of noisy strings~$\boldy_1,\ldots,\boldy_m$.
To compare the two joint distributions, we first consider
the binary strings before adding noise. 
The fixed tuple
$\boldx$ contains $m_{\bolda}$ copies of each string
$\bolda\in\{0,1\}^k$.
Permuting its indices preserves
exactly these numbers of copies.
In the second joint distribution, the number of selections of a string may differ from its number of copies in \(\mathbf{x}\).
We condition on selecting exactly
$m_{\bolda}$ copies of every string $\bolda$ among the
$m$ selections. 
We will show that, under this condition,
the selected tuple has the same distribution as a
uniform reordering of $\boldx$. 
The probability of
satisfying this condition will then control the ratio
between the two noisy joint distributions.

Let $\cT_{\boldx}$ be the set of ordered tuples containing
exactly $m_{\bolda}$ copies of each string
$\bolda\in\{0,1\}^k$. 
For any fixed tuple in $\cT_{\boldx}$, the
$m_{\bolda}$ indices carrying $\bolda$ can
be assigned to its positions in $m_{\bolda}!$ ways.
Consequently, every tuple in $\cT_{\boldx}$ is produced
by exactly
$\prod_{\bolda:m_{\bolda}>0}m_{\bolda}!$
index permutations, and
\begin{align*}
|\cT_{\boldx}|
&=\frac{m!}{\prod_{\bolda:m_{\bolda}>0}m_{\bolda}!}.
\end{align*}
Since all index permutations are equally likely, it follows that
uniform reordering provides probability
$1/|\cT_{\boldx}|$ to every tuple in $\cT_{\boldx}$.

We next calculate the probability assigned to this set by independent selection.
For every
$\boldu=(\boldu_1,\ldots,\boldu_m)
\in\cT_{\boldx}$,
\begin{align*}
P_{\boldx}^{\otimes m}(\boldu)
&=\prod_{j=1}^m P_{\boldx}(\boldu_j)
=\prod_{\bolda:m_{\bolda}>0}
\left(\frac{m_{\bolda}}m\right)^{m_{\bolda}}
\end{align*}
and
\begin{align*}
P_{\boldx}^{\otimes m}(\cT_{\boldx})
&=\frac{m!}{\prod_{\bolda:m_{\bolda}>0}m_{\bolda}!}
\prod_{\bolda:m_{\bolda}>0}
\left(\frac{m_{\bolda}}m\right)^{m_{\bolda}}.
\end{align*}
Thus, for every $\boldu\in\cT_{\boldx}$,
\begin{align}
\frac{P_{\boldx}^{\otimes m}(\boldu)}
{P_{\boldx}^{\otimes m}(\cT_{\boldx})}
&=\frac1{|\cT_{\boldx}|}.
\label{eq:type-conditional-uniform}
\end{align}

We now compare the probabilities of a fixed noisy tuple
$(\boldy_1,\ldots,\boldy_m)\in(\{0,1\}^k)^m$.
We first express $\Lambda_{\boldx}$ as an average over uniform
reorderings of $\boldx$ which provides
\begin{align*}
\Lambda_{\boldx}(\boldy_1,\ldots,\boldy_m)
&\overset{}{=}
\frac1{|\cT_{\boldx}|}
\sum_{\boldu\in\cT_{\boldx}}
\prod_{j=1}^m W_k(\boldy_j\mid\boldu_j)
\\
&\overset{\eqref{eq:type-conditional-uniform}}{=}
\frac1{P_{\boldx}^{\otimes m}(\cT_{\boldx})}
\sum_{\boldu\in\cT_{\boldx}}
P_{\boldx}^{\otimes m}(\boldu)
\prod_{j=1}^m W_k(\boldy_j\mid\boldu_j)
\\
&\overset{(\dagger)}{\leq}
\frac1{P_{\boldx}^{\otimes m}(\cT_{\boldx})}
\sum_{\boldu\in(\{0,1\}^k)^m}
\prod_{j=1}^m
P_{\boldx}(\boldu_j)
W_k(\boldy_j\mid\boldu_j)
\\
&\overset{}{=}
\frac1{P_{\boldx}^{\otimes m}(\cT_{\boldx})}
\prod_{j=1}^m
\left(
\sum_{\bolda\in\{0,1\}^k}
P_{\boldx}(\bolda)W_k(\boldy_j\mid\bolda)
\right)
\\
&\overset{(\dagger\dagger)}{=}
\frac{\widetilde{P}_{\boldx}^{\otimes m}
(\boldy_1,\ldots,\boldy_m)}
{P_{\boldx}^{\otimes m}(\cT_{\boldx})}
\end{align*}
where $(\dagger)$ follows by extending the sum from $\cT_{\boldx}$ to all ordered
tuples in $(\{0,1\}^k)^m$ and
$(\dagger\dagger)$ follows from Definition~\ref{def:fixed-fragment-output}:
the sum inside the $j$th factor equals
$\widetilde{P}_{\boldx}(\boldy_j)$.
Multiplying these $m$ factors provides the probability
of the noisy tuple under
$\widetilde{P}_{\boldx}^{\otimes m}$.

Thus, taking logarithms of the probability ratio and averaging under \(\Lambda_{\boldx}\) yields
\begin{align}
D(\Lambda_{\boldx}\|\widetilde{P}_{\boldx}^{\otimes m})
&\leq-\log P_{\boldx}^{\otimes m}(\cT_{\boldx}).
\label{eq:type-probability}
\end{align}
It remains to prove
$P_{\boldx}^{\otimes m}(\cT_{\boldx})\geq e^{-m}$.

Note that,
\begin{align*}
\ln(m!)
&=\sum_{j=1}^m\ln j
\geq\int_1^m\ln t\,dt
=m\ln m-m+1
\geq m\ln m-m.
\end{align*}
Thus, $m!\geq(m/e)^m$.
We have
\begin{align}
P_{\boldx}^{\otimes m}(\cT_{\boldx})
&=\frac{m!}{\prod_{\bolda:m_{\bolda}>0}m_{\bolda}!}
\prod_{\bolda:m_{\bolda}>0}
\left(\frac{m_{\bolda}}m\right)^{m_{\bolda}}
\notag\\
&\overset{(\dagger)}{\geq}
\frac{(m/e)^m}
{\prod_{\bolda:m_{\bolda}>0}m_{\bolda}^{m_{\bolda}}}
\prod_{\bolda:m_{\bolda}>0}
\left(\frac{m_{\bolda}}m\right)^{m_{\bolda}}
\notag\\
&\overset{}{=}
\frac{(m/e)^m}
{\prod_{\bolda:m_{\bolda}>0}m_{\bolda}^{m_{\bolda}}}
\frac{\prod_{\bolda:m_{\bolda}>0}m_{\bolda}^{m_{\bolda}}}
{m^{\sum_{\bolda:m_{\bolda}>0}m_{\bolda}}}
\notag\\
&\overset{(\dagger\dagger)}{=}
\frac{(m/e)^m}{m^m}
=\frac{m^m e^{-m}}{m^m}
=e^{-m}.
\label{eq:type-lower}
\end{align}
Here, $(\dagger)$ uses $m!\geq(m/e)^m$ and
$m_{\bolda}!\leq m_{\bolda}^{m_{\bolda}}$ for every
$m_{\bolda}>0$ and
$(\dagger\dagger)$ follows from $\sum_{\bolda:m_{\bolda}>0}m_{\bolda}=m$.
Therefore,
\begin{align*}
D(\Lambda_{\boldx}\|\widetilde{P}_{\boldx}^{\otimes m})
&\overset{\eqref{eq:type-probability},\,\eqref{eq:type-lower}}{\leq}
-\log(e^{-m})=m\log e. \qedhere
\end{align*}
\end{proof}

In the proof of Proposition~\ref{prop:short} below, we show
\begin{align}
D(\Lambda_{\boldx}\|Q^{\otimes m})
&=D(\Lambda_{\boldx}\|\widetilde{P}_{\boldx}^{\otimes m})
+mD(\widetilde{P}_{\boldx}\|Q)
\label{eq:product-marginal-identity}.
\end{align}
Here $Q\in\cV_k$ is a distribution provided by Lemma~\ref{lem:sparse} for $P=P_{\boldx}$, so that $D(\widetilde{P}_{\boldx}\|Q)\leq A_0$.
The first term measures the dependence among the noisy
strings and is bounded by Lemma~\ref{lem:type}.
The second term adds the approximation error from
Lemma~\ref{lem:sparse} over the $m$ positions.
This provides a bound on $D(\Lambda_{\boldx}\|Q^{\otimes m})$
for each fixed tuple $\boldx$.

For each fixed tuple $\boldx$, we have bounded
$D(\Lambda_{\boldx}\|Q^{\otimes m})$ for a suitable
$Q\in\cV_k$. However, different tuples may require
different choices of $Q$. To turn these bounds into
a mutual-information bound, we introduce one distribution
that is fixed independently of $\boldx$:
\begin{align}
R(\boldy_1,\ldots,\boldy_m)
&=\frac1{|\cV_k|}
\sum_{Q'\in\cV_k}
(Q')^{\otimes m}(\boldy_1,\ldots,\boldy_m).
\label{eq:reference-mixture}
\end{align}
Thus, for each noisy tuple, $R$ averages the probabilities
assigned by all the product distributions in the collection.
Whatever approximation $Q$ is selected for $\boldx$,
its product distribution appears in this average with
weight $1/|\cV_k|$.
All other terms are nonnegative, so
\begin{align*}
R(\boldy_1,\ldots,\boldy_m)
&\geq
\frac{Q^{\otimes m}(\boldy_1,\ldots,\boldy_m)}
{|\cV_k|}.
\end{align*}
Taking logarithms of the probability ratios and averaging
under $\Lambda_{\boldx}$ therefore provides
\begin{align*}
D(\Lambda_{\boldx}\|R)
&\leq D(\Lambda_{\boldx}\|Q^{\otimes m})+\log|\cV_k|.
\end{align*}
This provides a bound relative to the same $R$ for every
tuple $\boldx$.
To explain why this yields an information bound, let
the random tuple $\boldX$ (whose realizations are the fixed tuples~$\boldx$ above) have any joint distribution, let $\boldY$
be its randomly ordered noisy tuple, and denote the
distribution of $\boldY$ by $P_{\boldY}$. Conditionally on $\boldX=\boldx$, the tuple $\boldY$ has distribution $\Lambda_{\boldx}$.
We will use
the identity~\eqref{eq:information-radius}
\begin{align*}
\bE_{\boldX}\!\left[D(\Lambda_{\boldX}\|R)\right]
&=I(\boldX;\boldY)+D(P_{\boldY}\|R)
\geq I(\boldX;\boldY).
\end{align*}
Finally, $\cY$ is obtained
from $\boldY$ by removing its order, so
$I(\boldX;\cY)\leq I(\boldX;\boldY)$.
The resulting bound holds for every joint distribution
of $\boldX$, which proves Proposition~\ref{prop:short}.

\begin{proposition}
\label{prop:short}
For every $\epsilon>0$, there exists a finite constant $A$,
depending only on $p$ and $\epsilon$, such that for all
  integers $m\ge1$ and $k\ge 0$ ,
\begin{equation}
 \Csh(m,k)
 \leq k\left\lceil 2^{k(\kappa+\epsilon)}\right\rceil+Am,
 \label{eq:short-bound}
\end{equation}
where $\Csh$ is defined in Definition~\ref{definition:Csh}
\end{proposition}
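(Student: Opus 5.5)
The plan is to follow the roadmap sketched just before the statement and bound $I(\boldX;\cY)$ through a single reference distribution $R$ that does not depend on $\boldX$. First dispose of $k=0$: then $\Csh(m,0)=0$ and the right-hand side of \eqref{eq:short-bound} is $Am\geq0$. For $k\geq1$, fix any joint distribution of $\boldX=(\boldX_1,\ldots,\boldX_m)$. Let $\boldY$ be the tuple obtained by applying the BSC to the entries of $\boldX$ and then reordering them by an independent uniform permutation. Since $\cY$ is obtained from $\boldY$ by forgetting the order, the data-processing inequality gives $I(\boldX;\cY)\leq I(\boldX;\boldY)$. Conditionally on $\boldX=\boldx$, the tuple $\boldY$ has distribution $\Lambda_{\boldx}$ from Definition~\ref{def:fixed-fragment-output}.

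Next, I will use the golden formula. For any distribution $R$ on $(\{0,1\}^k)^m$,
\[
\bE_{\boldX}[D(\Lambda_{\boldX}\|R)]=I(\boldX;\boldY)+D(P_{\boldY}\|R)\geq I(\boldX;\boldY).
\]
This follows by expanding $\log(\Lambda_{\boldx}/R)=\log(\Lambda_{\boldx}/P_{\boldY})+\log(P_{\boldY}/R)$ and averaging. All these quantities are finite because every $W_k(\boldy\mid\boldx)\ge p^k>0$, so $R$ has full support. I take $R$ to be the uniform mixture \eqref{eq:reference-mixture} over $\cV_k$. For each fixed $\boldx$, let $Q\in\cV_k$ be the distribution provided by Lemma~\ref{lem:sparse} for $P=P_{\boldx}$, so that $D(\widetilde P_{\boldx}\|Q)\le A_0$. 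Since $R\ge Q^{\otimes m}/|\cV_k|$ pointwise, we get $D(\Lambda_{\boldx}\|R)\le D(\Lambda_{\boldx}\|Q^{\otimes m})+\log|\cV_k|$. The count $|\cV_k|\le 2^{kJ_k}$ then gives $\log|\cV_k|\le k\lceil 2^{k(\kappa+\epsilon)}\rceil$.

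The key step is the chain-rule identity \eqref{eq:product-marginal-identity}. Write $\log\frac{\Lambda_{\boldx}}{Q^{\otimes m}}=\log\frac{\Lambda_{\boldx}}{\widetilde P_{\boldx}^{\otimes m}}+\sum_{j=1}^m\log\frac{\widetilde P_{\boldx}(\boldy_j)}{Q(\boldy_j)}$ and take the expectation under $\Lambda_{\boldx}$. The second sum depends only on the marginals of $\Lambda_{\boldx}$. Under a uniform permutation, each coordinate $\boldy_j$ is the BSC output of a uniformly chosen entry of $\boldx$, so its marginal is exactly $\widetilde P_{\boldx}$ from \eqref{eq:empirical-noisy-law}. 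Hence each of the $m$ terms equals $D(\widetilde P_{\boldx}\|Q)$, which yields \eqref{eq:product-marginal-identity}.

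Combining this with Lemma~\ref{lem:type} and Lemma~\ref{lem:sparse} gives $D(\Lambda_{\boldx}\|Q^{\otimes m})\le m\log e+mA_0$. Averaging over $\boldx$, we obtain $I(\boldX;\cY)\le k\lceil 2^{k(\kappa+\epsilon)}\rceil+(\log e+A_0)m$. Taking the supremum over distributions of $\boldX$ proves \eqref{eq:short-bound} with $A=\log e+A_0$, which depends only on $p$ and $\epsilon$.

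The main subtlety, which I expect to be the only real obstacle, is verifying the marginal claim. One has to note that even with repeated strings in $\boldx$, position $j$ receives index $\pi(j)$ uniformly distributed over $[m]$, so the marginal is exactly $\widetilde P_{\boldx}$. One must also confirm that Lemma~\ref{lem:sparse} applies to the empirical distribution $P_{\boldx}$, which holds because that lemma is valid for an arbitrary distribution $P$.
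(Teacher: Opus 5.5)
Your proposal is correct and follows essentially the same route as the paper's proof. The shared steps are the data-processing bound $I(\boldX;\cY)\le I(\boldX;\boldY)$, the identity $\bE_{\boldX}[D(\Lambda_{\boldX}\|R)]=I(\boldX;\boldY)+D(P_{\boldY}\|R)$ with $R$ the uniform mixture over $\cV_k$, the pointwise bound $R\ge Q^{\otimes m}/|\cV_k|$ together with $|\cV_k|\le 2^{kJ_k}$, and the decomposition \eqref{eq:product-marginal-identity} (via the uniform-permutation marginal) combined with Lemmas~\ref{lem:type} and~\ref{lem:sparse}, yielding $A=\log e+A_0$. The only differences are the order of presentation and your added remark that all divergences are finite because $Q(\boldy)\ge p^k$.
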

\begin{proof}
If $k=0$, then $\Csh(m,0)=0$ and~\eqref{eq:short-bound} holds trivially, so assume $k\geq1$. Fix a tuple $\boldx=(\boldx_1,\ldots,\boldx_m)\in(\{0,1\}^k)^m$, and let $P_{\boldx}$ and $\Lambda_{\boldx}$ be as in Definition~\ref{def:fixed-fragment-output}.
By Lemma~\ref{lem:sparse}, choose $Q\in\cV_k$
satisfying $D(\widetilde{P}_{\boldx}\|Q)\leq A_0$.
We first combine this approximation bound with
Lemma~\ref{lem:type} to bound
$D(\Lambda_{\boldx}\|Q^{\otimes m})$.

For every ordered tuple
$(\boldy_1,\ldots,\boldy_m)\in(\{0,1\}^k)^m$,
where each $\boldy_j$ is one length-$k$ string, we have
\begin{align}
\log\frac{\Lambda_{\boldx}(\boldy_1,\ldots,\boldy_m)}
{Q^{\otimes m}(\boldy_1,\ldots,\boldy_m)}
&=
\log\frac{\Lambda_{\boldx}(\boldy_1,\ldots,\boldy_m)}
{\widetilde{P}_{\boldx}^{\otimes m}(\boldy_1,\ldots,\boldy_m)}
+
\log\frac{\widetilde{P}_{\boldx}^{\otimes m}(\boldy_1,\ldots,\boldy_m)}
{Q^{\otimes m}(\boldy_1,\ldots,\boldy_m)}
\notag\\
&=
\log\frac{\Lambda_{\boldx}(\boldy_1,\ldots,\boldy_m)}
{\widetilde{P}_{\boldx}^{\otimes m}(\boldy_1,\ldots,\boldy_m)}
+
\log\prod_{j=1}^m
\frac{\widetilde{P}_{\boldx}(\boldy_j)}{Q(\boldy_j)}
\notag\\
&=
\log\frac{\Lambda_{\boldx}(\boldy_1,\ldots,\boldy_m)}
{\widetilde{P}_{\boldx}^{\otimes m}(\boldy_1,\ldots,\boldy_m)}
+
\sum_{j=1}^m
\log\frac{\widetilde{P}_{\boldx}(\boldy_j)}{Q(\boldy_j)}.
\label{eq:product-log-split}
\end{align}

We multiply both sides of~\eqref{eq:product-log-split}
by $\Lambda_{\boldx}(\boldy_1,\ldots,\boldy_m)$ and sum over
all $\boldy_1,\ldots,\boldy_m\in\{0,1\}^k$.
We have
\begin{align}
\sum_{\boldy_1,\ldots,\boldy_m\in\{0,1\}^k}
\Lambda_{\boldx}(\boldy_1,\ldots,\boldy_m)
\log\frac{\widetilde{P}_{\boldx}(\boldy_j)}{Q(\boldy_j)}
&\overset{}{=}
\sum_{\boldz\in\{0,1\}^k}
\left(
\sum_{\substack{
\boldy_1,\ldots,\boldy_m\in\{0,1\}^k\\
\boldy_j=\boldz}}
\Lambda_{\boldx}(\boldy_1,\ldots,\boldy_m)
\right)
\log\frac{\widetilde{P}_{\boldx}(\boldz)}{Q(\boldz)}
\notag\\
&\overset{(\dagger)}{=}
\sum_{\boldz\in\{0,1\}^k}
\widetilde{P}_{\boldx}(\boldz)
\log\frac{\widetilde{P}_{\boldx}(\boldz)}{Q(\boldz)}
=D(\widetilde{P}_{\boldx}\|Q),
\label{eq:single-output-divergence}
\end{align}
where $(\dagger)$ follows from the uniform permutation
and~\eqref{eq:empirical-noisy-law}.
Each  index $i\in[m]$ has probability $1/m$
of occupying position $j$. 
Conditional on this choice,
the noisy string at that position equals $\boldz$
with probability $W_k(\boldz\mid\boldx_i)$. Therefore,
\begin{align*}
\sum_{\substack{
\boldy_1,\ldots,\boldy_m\in\{0,1\}^k\\
\boldy_j=\boldz}}
\Lambda_{\boldx}(\boldy_1,\ldots,\boldy_m)
&=\frac1m\sum_{i=1}^m W_k(\boldz\mid\boldx_i)
\overset{\eqref{eq:empirical-noisy-law}}{=}
\widetilde{P}_{\boldx}(\boldz).
\end{align*}
Finally, we have
\begin{align*}
D(\Lambda_{\boldx}\|Q^{\otimes m})
&\overset{\eqref{eq:product-log-split}}{=}
D(\Lambda_{\boldx}\|\widetilde{P}_{\boldx}^{\otimes m})
+
\sum_{j=1}^m
\sum_{\boldy_1,\ldots,\boldy_m\in\{0,1\}^k}
\Lambda_{\boldx}(\boldy_1,\ldots,\boldy_m)
\log\frac{\widetilde{P}_{\boldx}(\boldy_j)}{Q(\boldy_j)}
\\
&\overset{\eqref{eq:single-output-divergence}}{=}
D(\Lambda_{\boldx}\|\widetilde{P}_{\boldx}^{\otimes m})
+mD(\widetilde{P}_{\boldx}\|Q)
\end{align*}
which proves~\eqref{eq:product-marginal-identity}.
Consequently,
\begin{align}
D(\Lambda_{\boldx}\|Q^{\otimes m})
&\overset{\eqref{eq:product-marginal-identity}}{=}
D(\Lambda_{\boldx}\|\widetilde{P}_{\boldx}^{\otimes m})
+mD(\widetilde{P}_{\boldx}\|Q)
\overset{\text{Lemma}~\ref{lem:sparse},~\ref{lem:type}}\leq m\log e+mA_0
=m(\log e+A_0).
\label{eq:product-divergence}
\end{align}

We now compare $\Lambda_{\boldx}$ with the fixed distribution
$R$ defined in~\eqref{eq:reference-mixture}.
The selected $Q$ may depend on $\boldx$, but it always
belongs to $\cV_k$.
Hence, $Q^{\otimes m}$ appears
in the average defining $R$ with weight $1/|\cV_k|$.
Since all other terms are nonnegative, for every
$\boldy_1,\ldots,\boldy_m\in\{0,1\}^k$,
\begin{align}
R(\boldy_1,\ldots,\boldy_m)
&\overset{\eqref{eq:reference-mixture}}{\geq}
\frac1{|\cV_k|}
Q^{\otimes m}(\boldy_1,\ldots,\boldy_m).
\label{eq:reference-component}
\end{align}
Consequently,
\begin{align}
D(\Lambda_{\boldx}\|R)
&=
\sum_{\boldy_1,\ldots,\boldy_m\in\{0,1\}^k}
\Lambda_{\boldx}(\boldy_1,\ldots,\boldy_m)
\log\frac{\Lambda_{\boldx}(\boldy_1,\ldots,\boldy_m)}
{R(\boldy_1,\ldots,\boldy_m)}
\notag\\
&\overset{\eqref{eq:reference-component}}{\leq}
\sum_{\boldy_1,\ldots,\boldy_m\in\{0,1\}^k}
\Lambda_{\boldx}(\boldy_1,\ldots,\boldy_m)
\log\frac{|\cV_k|\Lambda_{\boldx}(\boldy_1,\ldots,\boldy_m)}
{Q^{\otimes m}(\boldy_1,\ldots,\boldy_m)}
\notag\\
&=\log|\cV_k|+D(\Lambda_{\boldx}\|Q^{\otimes m})
\notag\\
&\overset{\eqref{eq:product-divergence}}{\leq}
\log|\cV_k|+m(\log e+A_0)
\notag\\
&\overset{(\dagger)}{\leq}
k\left\lceil2^{k(\kappa+\epsilon)}\right\rceil
+m(\log e+A_0)
\label{eq:reference-divergence}
\end{align}
where $(\dagger)$ follows from
Construction~\ref{con:sparse-family}, which provides
$|\cV_k|\leq2^{kJ_k}$ and
$J_k=\lceil2^{k(\kappa+\epsilon)}\rceil$.

It remains to turn everything into a
mutual-information bound. Allow the random tuple $\boldX$  to have
any joint distribution $P_{\boldX}$.
For its randomly ordered noisy tuple
$\boldY=(\boldY_1,\ldots,\boldY_m)$,  the distribution
satisfies
\begin{align}
P_{\boldY}(\boldy_1,\ldots,\boldy_m)
&=\sum_{\boldx}
P_{\boldX}(\boldx)
\Lambda_{\boldx}(\boldy_1,\ldots,\boldy_m).
\label{eq:noisy-tuple-law}
\end{align}

Then, we have:
\begin{align}
\bE_{\boldX}\!\left[D(\Lambda_{\boldX}\|R)\right]
&=\sum_{\boldx}P_{\boldX}(\boldx)D(\Lambda_{\boldx}\|R)\notag\\
&=
\sum_{\boldx}
\sum_{\boldy_1,\ldots,\boldy_m\in\{0,1\}^k}
P_{\boldX}(\boldx)
\Lambda_{\boldx}(\boldy_1,\ldots,\boldy_m)
\log\frac{\Lambda_{\boldx}(\boldy_1,\ldots,\boldy_m)}
{R(\boldy_1,\ldots,\boldy_m)}
\notag\\
&\overset{}{=}
\sum_{\boldx}
\sum_{\boldy_1,\ldots,\boldy_m\in\{0,1\}^k}
P_{\boldX}(\boldx)
\Lambda_{\boldx}(\boldy_1,\ldots,\boldy_m)
\left[
\log\frac{\Lambda_{\boldx}(\boldy_1,\ldots,\boldy_m)}
{P_{\boldY}(\boldy_1,\ldots,\boldy_m)}
+
\log\frac{P_{\boldY}(\boldy_1,\ldots,\boldy_m)}
{R(\boldy_1,\ldots,\boldy_m)}
\right]
\notag\\
&=
\sum_{\boldx}
\sum_{\boldy_1,\ldots,\boldy_m\in\{0,1\}^k}
P_{\boldX}(\boldx)
\Lambda_{\boldx}(\boldy_1,\ldots,\boldy_m)
\log\frac{\Lambda_{\boldx}(\boldy_1,\ldots,\boldy_m)}
{P_{\boldY}(\boldy_1,\ldots,\boldy_m)}
\notag\\
&\quad+
\sum_{\boldy_1,\ldots,\boldy_m\in\{0,1\}^k}
\left(
\sum_{\boldx}
P_{\boldX}(\boldx)
\Lambda_{\boldx}(\boldy_1,\ldots,\boldy_m)
\right)
\log\frac{P_{\boldY}(\boldy_1,\ldots,\boldy_m)}
{R(\boldy_1,\ldots,\boldy_m)}
\notag\\
&\overset{\eqref{eq:noisy-tuple-law}}{=}
I(\boldX;\boldY)+D(P_{\boldY}\|R)
\geq I(\boldX;\boldY).
\label{eq:information-radius}
\end{align}

Finally, the unordered multiset $\cY$ is obtained
from $\boldY$ by forgetting its order. Thus,
observing $\boldY$ provides at least as much
information about $\boldX$ as observing $\cY$, and
\begin{align*}
I(\boldX;\cY)
&\leq I(\boldX;\boldY)
\overset{\eqref{eq:information-radius}}{\leq}
\bE_{\boldX}\!\left[D(\Lambda_{\boldX}\|R)\right]
\overset{\eqref{eq:reference-divergence}}{\leq}
k\left\lceil2^{k(\kappa+\epsilon)}\right\rceil
+m(\log e+A_0).
\end{align*}
This holds for every joint distribution
$P_{\boldX}$. Taking the supremum proves
\eqref{eq:short-bound} with $A=\log e+A_0$,
which depends only on $p$ and $\epsilon$.
\end{proof}

We next obtain a useful information bound for longer
strings. 
Although Proposition~\ref{prop:short} applies
to every string length~$k$ (in $\Csh(m,k)$, the $m$ strings each have length~$k$), its bound grows exponentially
with the length, making it less useful as~$k$ grows. 
We therefore split each of the $m$
strings of length~$L$ into a prefix of $k$ bits and a
suffix of $L-k$ bits, where $0<k<L$; the boundary cases $k\in\{0,L\}$ are treated in Lemma~\ref{lem:suffix}.
We apply Proposition~\ref{prop:short} to the unordered
noisy prefixes and bound the additional information
provided by the noisy suffixes.
The code may impose relations between prefix and
suffix bits. 
Thus, observing the noisy suffixes may
also help estimate the original prefix bits.
We must bound everything the suffixes reveal after
the noisy prefixes have been observed.
For the proof, we sort the noisy prefixes in lexicographic order, placing equal prefixes in uniform random order.
We arrange the corresponding noisy suffixes in the same order and then consider these two ordered tuples without
revealing the original indices. 
Joining the prefix and suffix at each position recovers every noisy
string. 
 Keeping these
noisy strings, including their multiplicities, without
recording their list positions provides exactly the
multiset observed by the decoder. 
Thus, the two
tuples determine the decoder's entire observation.
They contain at least as much information about
the original strings as that observation, so an
upper bound on their information also applies
to the noisy multiset.

\begin{construction}
\label{con:suffix-sorting}
Fix integers $m\geq1$ and $0<k<L$, and let
$\boldX=(\boldX_1,\ldots,\boldX_m)$ have any joint distribution,
where $\boldX_j\in\{0,1\}^L$ for every $j\in[m]$.
The prefix $\boldX_j[1:k]$ contains the first $k$ bits,
and the suffix $\boldX_j[k+1:L]$ contains the remaining
$L-k$ bits.
Let $\boldZ_j\in\{0,1\}^L$ be the corresponding
noise vector.
All $mL$ noise bits are independent
Bernoulli$(p)$ random variables, and their entire collection
is independent of $\boldX$.
Set $\boldY_j=\boldX_j\oplus\boldZ_j$, where $\oplus$ denotes coordinate-wise
addition modulo two, and let $\cY$ be
the unordered multiset of these $m$ noisy strings. Then
\begin{align}
\boldY_j[1:k]
&=\boldX_j[1:k]\oplus\boldZ_j[1:k],
\qquad
\boldY_j[k+1:L]
=\boldX_j[k+1:L]\oplus\boldZ_j[k+1:L].
\label{eq:y=x+z}
\end{align}
Collect the prefix and suffix noise vectors in their
original index order:
\begin{align*}
\boldZ^{\mathrm{pre}}
&=(\boldZ_1[1:k],\ldots,\boldZ_m[1:k]),
\qquad
\boldZ^{\mathrm{suf}}
=(\boldZ_1[k+1:L],\ldots,\boldZ_m[k+1:L]).
\end{align*}

We then sort the noisy prefixes
$\boldY_j[1:k]$, $j\in[m]$ in lexicographic order, i.e. from smallest to largest
when read as binary numbers. 
If several noisy prefixes are equal, we order their indices~$j$
uniformly at random, independently of $\boldX$ and all BSC noise.
Let $\Pi(i)$ record the original index of the string
placed at position $i$, for $i\in[m]$.
The permutation $\Pi$ is determined by the noisy
prefixes and the independent random choices used
to order equal prefixes. 
Equivalently, it is
determined by $(\boldX,\boldZ^{\mathrm{pre}})$
together with random choices.
Define $\boldA$ as the tuple of noisy prefixes,
$\boldD$ as the tuple of corresponding noisy suffixes,
and $\boldB$ as the tuple of suffixes before noise:
\begin{align}
\boldA
&=(\boldY_{\Pi(1)}[1:k],\ldots,
   \boldY_{\Pi(m)}[1:k]), \notag\\
\boldB
&=(\boldX_{\Pi(1)}[k+1:L],\ldots,
   \boldX_{\Pi(m)}[k+1:L]), \notag\\
\boldD
&=(\boldY_{\Pi(1)}[k+1:L],\ldots,
   \boldY_{\Pi(m)}[k+1:L]).
\label{eq:enhanced-observation}
\end{align}
The decoder forms $\boldA$ and $\boldD$ by
sorting the received noisy strings and separating
each one into its prefix and suffix. 
Joining the entries at position $i$ in $\boldA$
and $\boldD$ provides the complete noisy string
$\boldY_{\Pi(i)}$. Collecting these noisy strings
without their tuple positions provides exactly
$\cY$. Thus, an upper bound on
$I(\boldX;\boldA,\boldD)$ also bounds
$I(\boldX;\cY)$:
\begin{align*}
 I(\boldX;\cY)
 &\leq I(\boldX;\boldA,\boldD).
\end{align*}

\end{construction}

In the setting of Construction~\ref{con:suffix-sorting}, we also collect the suffix noise in the sorted order,
\[
\Pi\boldZ^{\mathrm{suf}}=(\boldZ_{\Pi(1)}[k+1:L],\ldots,\boldZ_{\Pi(m)}[k+1:L]).
\]
Applying~\eqref{eq:y=x+z} at each sorted position shows that $\boldD=\boldB\oplus\Pi\boldZ^{\mathrm{suf}}$. Each $\boldZ_j$ consists of independent Bernoulli$(p)$ bits, but the permutation $\Pi$ depends on the noisy prefixes, and hence on the prefix noise, so it is not immediate that the reordered suffix noise is independent of $(\boldX,\boldA,\boldB)$. The next lemma shows that this independence holds and determines the resulting conditional entropy of~$\boldD$.

\begin{lemma}
\label{lem:sorted-suffix-noise}

Under Construction~\ref{con:suffix-sorting},
\begin{align}
H(\boldD\mid\boldA,\boldB)=H(\boldD\mid\boldX,\boldA,\boldB)=m(L-k)h_2(p).
\label{eq:suffix-noise-entropy}
\end{align}
\end{lemma}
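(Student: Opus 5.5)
The proof rests on one key claim: conditionally on $(\boldX,\boldA,\boldB)$, and even jointly with $\Pi$, the sorted suffix noise $\Pi\boldZ^{\mathrm{suf}}$ is uniformly i.i.d.\ Bernoulli$(p)$. Granting this, $\boldD=\boldB\oplus\Pi\boldZ^{\mathrm{suf}}$ is a bijective shift of the noise given $\boldB$. Hence
\[
H(\boldD\mid\boldX,\boldA,\boldB)=H(\Pi\boldZ^{\mathrm{suf}}\mid\boldX,\boldA,\boldB)=m(L-k)h_2(p).
\]
The same claim gives the same value for $H(\boldD\mid\boldA,\boldB)$, because the conditional law of $\Pi\boldZ^{\mathrm{suf}}$ given $(\boldA,\boldB)$ is again i.i.d.\ Bernoulli$(p)$; this follows by averaging the finer conditional law over $\boldX$. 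Alternatively, one can sandwich:
\[
m(L-k)h_2(p)=H(\boldD\mid\boldX,\boldA,\boldB)\le H(\boldD\mid\boldA,\boldB)=H(\Pi\boldZ^{\mathrm{suf}}\mid\boldA,\boldB)\le H(\Pi\boldZ^{\mathrm{suf}})=m(L-k)h_2(p).
\]
This uses that $\Pi\boldZ^{\mathrm{suf}}$ has marginal law i.i.d.\ Bernoulli$(p)$ over $m(L-k)$ bits.

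To prove the key claim, I would condition on the finer collection $\mathcal F=(\boldX,\boldZ^{\mathrm{pre}},\Sigma)$, where $\Sigma$ denotes the independent tie-breaking randomness. By Construction~\ref{con:suffix-sorting}, $\Pi$ is a deterministic function of $\mathcal F$. So are $\boldA$ (the sorted noisy prefixes) and $\boldB$ (the suffixes of $\boldX$ reindexed by $\Pi$). The suffix noise $\boldZ^{\mathrm{suf}}$ is independent of $\mathcal F$, since all $mL$ noise bits are independent and jointly independent of $\boldX$ and $\Sigma$. Consequently, for any fixed value of $\mathcal F$, the permutation $\Pi=\pi$ is fixed. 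The vector $\pi\boldZ^{\mathrm{suf}}$ is then a fixed reindexing of $m$ independent blocks, each consisting of $L-k$ i.i.d.\ Bernoulli$(p)$ bits, so its conditional law is the product Bernoulli$(p)$ law on $\{0,1\}^{m(L-k)}$. Crucially, this law does not depend on the value of $\mathcal F$.

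It follows that $\Pi\boldZ^{\mathrm{suf}}$ is independent of $\mathcal F$, and therefore independent of any function of $\mathcal F$, in particular of $(\boldX,\boldA,\boldB)$ and of $(\boldA,\boldB)$, with the stated product law. The main obstacle flagged before the lemma is exactly that $\Pi$ depends on the prefix noise. This is handled by conditioning on the prefix noise and tie-breaks, so that only the suffix noise, which $\Pi$ never looks at, remains random. The rest is the routine entropy bookkeeping above, with $H$ of $m(L-k)$ i.i.d.\ Bernoulli$(p)$ bits equal to $m(L-k)h_2(p)$.
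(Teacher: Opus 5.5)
Your proposal is correct and follows essentially the same route as the paper: you show that $\Pi\boldZ^{\mathrm{suf}}$ is i.i.d.\ Bernoulli$(p)$ and independent of a collection that determines $(\boldX,\boldA,\boldB)$, then use that $\boldD=\boldB\oplus\Pi\boldZ^{\mathrm{suf}}$ is a bijective shift of this noise once $\boldB$ is fixed. The only differences are cosmetic: you condition on $(\boldX,\boldZ^{\mathrm{pre}},\Sigma)$ where the paper conditions on $(\boldX,\boldZ^{\mathrm{pre}},\Pi)$, and your sandwich argument is an optional extra check, since the paper gets both equalities directly from the same independence.
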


\begin{proof}
We first verify the distribution of the reordered noise
after fixing the quantities used for sorting.
The suffix noise is independent of the collection consisting
of $\boldX$, the prefix noise, and the fresh sorting randomness.
Since $\Pi$ is a function of that collection,
$\boldZ^{\mathrm{suf}}$ is independent of
$(\boldX,\boldZ^{\mathrm{pre}},\Pi)$.
Fix values $\boldX=\boldx$,
$\boldZ^{\mathrm{pre}}=\boldz^{\mathrm{pre}}$, and
$\Pi=\pi$ with positive joint probability. For every
$\boldz\in\{0,1\}^{m(L-k)}$ and with \(\operatorname{wt}(\mathbf{z})\) being the number of ones in \(\mathbf{z}\),
\begin{align}
\Pr\left(
\Pi\boldZ^{\mathrm{suf}}=\boldz
\mid\boldX=\boldx,\,
\boldZ^{\mathrm{pre}}=\boldz^{\mathrm{pre}},\,
\Pi=\pi\right) \notag
&=
\Pr\left(
\boldZ^{\mathrm{suf}}=\pi^{-1}\boldz
\mid\boldX=\boldx,\,
\boldZ^{\mathrm{pre}}=\boldz^{\mathrm{pre}},\,
\Pi=\pi\right) \notag\\
&\quad\overset{(\dagger)}{=}
\Pr\left(\boldZ^{\mathrm{suf}}=\pi^{-1}\boldz\right)
=p^{\wt(\pi^{-1}\boldz)}
(1-p)^{m(L-k)-\wt(\pi^{-1}\boldz)} \notag\\
&\quad\overset{(\dagger\dagger)}{=}
p^{\wt(\boldz)}(1-p)^{m(L-k)-\wt(\boldz)}.
\label{eq:permuted-noise}
\end{align}
The first equality applies the inverse permutation.
Step $(\dagger)$ uses the independence of
$\boldZ^{\mathrm{suf}}$ from
$(\boldX,\boldZ^{\mathrm{pre}},\Pi)$.
Step $(\dagger\dagger)$ uses
$\wt(\pi^{-1}\boldz)=\wt(\boldz)$, since reordering
preserves the number of bits equal to one.
The final expression is the probability of $\boldz$
under independent Bernoulli$(p)$ noise and does not
depend on the conditioning values. Hence $\Pi\boldZ^{\mathrm{suf}}$ consists of independent Bernoulli$(p)$ bits and is independent of $(\boldX,\boldZ^{\mathrm{pre}},\Pi)$.
The tuples $\boldA$ and $\boldB$ in
\eqref{eq:enhanced-observation} are functions of
$(\boldX,\boldZ^{\mathrm{pre}},\Pi)$, so the reordered
noise is also independent of $(\boldX,\boldA,\boldB)$.

To calculate the conditional entropy of $\boldD$,
apply the BSC relation at each sorted position:
\begin{align*}
\boldY_{\Pi(i)}[k+1:L]
&\overset{\eqref{eq:y=x+z}}{=}
\boldX_{\Pi(i)}[k+1:L]
\oplus\boldZ_{\Pi(i)}[k+1:L],
\qquad i\in[m].
\end{align*}
Thus, $\boldD=\boldB\oplus\Pi\boldZ^{\mathrm{suf}}$.
For each fixed value~$\boldb$ of~$\boldB$, addition by~$\boldb$ provides
a one-to-one correspondence between $\boldD$ and
the reordered noise, preserving their conditional
entropy. 
Since the reordered noise is independent of $(\boldX,\boldA,\boldB)$,
it follows that
\begin{align*}
H(\boldD\mid\boldA,\boldB)
&=H(\Pi\boldZ^{\mathrm{suf}}\mid\boldA,\boldB)
=H(\Pi\boldZ^{\mathrm{suf}})\\
&\overset{\eqref{eq:permuted-noise}}{=}
-\bE\left[\log\left(
p^{\wt(\Pi\boldZ^{\mathrm{suf}})}
(1-p)^{m(L-k)-\wt(\Pi\boldZ^{\mathrm{suf}})}
\right)\right]\\
&=-\bE[\wt(\Pi\boldZ^{\mathrm{suf}})]\log p
-\left(m(L-k)-\bE[\wt(\Pi\boldZ^{\mathrm{suf}})]\right)
\log(1-p)\\
&\overset{(\dagger)}{=}
-m(L-k)p\log p-m(L-k)(1-p)\log(1-p)\\
&=m(L-k)h_2(p).
\end{align*}
Here $(\dagger)$ uses the expected number of noise
bits equal to one. Reordering preserves this number,
and each noise bit equals one with probability $p$, so
\begin{align*}
\bE[\wt(\Pi\boldZ^{\mathrm{suf}})]
&=\bE[\wt(\boldZ^{\mathrm{suf}})]
=\sum_{j=1}^{m}\sum_{t=k+1}^{L}p=m(L-k)p.\qedhere
\end{align*} 
\end{proof}

\begin{lemma}
\label{lem:prefix-suffix-information}
Under Construction~\ref{con:suffix-sorting},
\begin{align}
I(\boldX;\boldA)&\leq\Csh(m,k),
\label{eq:sorted-prefix-information}\\
I(\boldX;\boldD\mid\boldA)&\leq m\kappa(L-k).
\label{eq:sorted-suffix-information}
\end{align}
\end{lemma}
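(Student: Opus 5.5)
For the first inequality~\eqref{eq:sorted-prefix-information}, I would show that $\boldA$ is a function of the unordered multiset of noisy prefixes, up to randomness that does not depend on anything. Write $\cY^{\mathrm{pre}}$ for the multiset $\{\boldY_j[1:k]\}_{j\in[m]}$. By construction, $\boldA$ lists the elements of $\cY^{\mathrm{pre}}$ in lexicographic order. Equal prefixes are identical strings, so the random tie-breaking changes $\Pi$ but not $\boldA$. Hence $\boldA$ is a deterministic function of $\cY^{\mathrm{pre}}$, and conversely $\cY^{\mathrm{pre}}$ is recovered from $\boldA$. This gives $I(\boldX;\boldA)=I(\boldX;\cY^{\mathrm{pre}})$.

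Next I would pass from $\boldX$ to its prefixes. The prefix noise is independent of the suffixes of $\boldX$, so we have the Markov chain $\boldX\to(\boldX_1[1:k],\ldots,\boldX_m[1:k])\to\cY^{\mathrm{pre}}$. By data processing, $I(\boldX;\cY^{\mathrm{pre}})$ is at most the mutual information between the prefix tuple and its unordered noisy multiset. That quantity is at most $\Csh(m,k)$ by Definition~\ref{definition:Csh}, since the prefix tuple is an ordered tuple of $m$ strings of length $k$ with independent BSC noise on all $mk$ bits.

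For the second inequality~\eqref{eq:sorted-suffix-information}, I would write $I(\boldX;\boldD\mid\boldA)=H(\boldD\mid\boldA)-H(\boldD\mid\boldX,\boldA)$ and bound the two terms separately. For the first term, $\boldD\in\{0,1\}^{m(L-k)}$, so $H(\boldD\mid\boldA)\le m(L-k)$. For the second term, conditioning reduces entropy, so $H(\boldD\mid\boldX,\boldA)\ge H(\boldD\mid\boldX,\boldA,\boldB)$, and Lemma~\ref{lem:sorted-suffix-noise} gives $H(\boldD\mid\boldX,\boldA,\boldB)=m(L-k)h_2(p)$. Combining the two bounds yields $I(\boldX;\boldD\mid\boldA)\le m(L-k)(1-h_2(p))=m\kappa(L-k)$.

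The main subtlety is in the first part: I must check that the tie-breaking randomness in $\Pi$ does not leak into $\boldA$. It does not, because tied entries are equal strings. For the second part, the independence of the reordered suffix noise is the delicate point, and Lemma~\ref{lem:sorted-suffix-noise} already supplies it.
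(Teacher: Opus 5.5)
Your proposal is correct and takes essentially the paper's route. In the first part you reduce to the prefix tuple and invoke the definition of $\Csh(m,k)$, and you also make explicit that the sorted tuple $\boldA$ is in one-to-one correspondence with the unordered multiset of noisy prefixes, which the paper leaves implicit. In the second part your bound $H(\boldD\mid\boldX,\boldA)\ge H(\boldD\mid\boldX,\boldA,\boldB)$ is a more direct form of the paper's step $I(\boldX;\boldD\mid\boldA)\le I(\boldX,\boldB;\boldD\mid\boldA)$, and it needs only the second equality in~\eqref{eq:suffix-noise-entropy}.
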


\begin{proof}
Conditional on $(\boldX_j[1:k])_{j=1}^m$,
the distribution of $\boldA$ does not depend
on the remaining bits of $\boldX$. Hence,
\begin{align*}
I(\boldX;\boldA)
&=I\bigl((\boldX_j[1:k])_{j=1}^m;\boldA\bigr)
+I\bigl(\boldX;\boldA\mid(\boldX_j[1:k])_{j=1}^m\bigr)
=I\bigl((\boldX_j[1:k])_{j=1}^m;\boldA\bigr)
\overset{(\dagger)}\leq\Csh(m,k),
\end{align*}
where $(\dagger)$ follows from the definition
of $\Csh(m,k)$.

We next bound $I(\boldX;\boldD\mid\boldA)$:
\begin{align*}
I(\boldX;\boldD\mid\boldA)
&\overset{(\dagger)}{=}
I(\boldX,\boldB;\boldD\mid\boldA)
-I(\boldB;\boldD\mid\boldA,\boldX)\leq
I(\boldX,\boldB;\boldD\mid\boldA)\\
&\overset{(\dagger)}{=}
I(\boldB;\boldD\mid\boldA)
+I(\boldX;\boldD\mid\boldA,\boldB)\overset{\text{Lemma}~\ref{lem:sorted-suffix-noise}}{=}
I(\boldB;\boldD\mid\boldA)\\
&=H(\boldD\mid\boldA)-H(\boldD\mid\boldA,\boldB)\overset{\eqref{eq:suffix-noise-entropy}}{=}
H(\boldD\mid\boldA)-m(L-k)h_2(p)\\
&\overset{(\dagger\dagger)}{\leq}
m(L-k)-m(L-k)h_2(p)
=m\kappa(L-k),
\end{align*}
 where $(\dagger)$ follows from the chain rule, the equality marked by Lemma~\ref{lem:sorted-suffix-noise} holds since~\eqref{eq:suffix-noise-entropy} implies $I(\boldX;\boldD\mid\boldA,\boldB)=H(\boldD\mid\boldA,\boldB)-H(\boldD\mid\boldX,\boldA,\boldB)=0$,
and  $(\dagger\dagger)$ follows from
$H(\boldD\mid\boldA)\leq m(L-k)$, since
$\boldD$ contains $m(L-k)$ bits.
\end{proof}

\begin{lemma}
\label{lem:suffix}
For every integer $m\geq1$ and integers $0\leq k\leq L$,
\begin{align}
\Csh(m,L)&\leq\Csh(m,k)+m\kappa(L-k).
\label{eq:suffix-bound}
\end{align}
\end{lemma}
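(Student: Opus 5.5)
The plan is to treat the interior case $0<k<L$ through Construction~\ref{con:suffix-sorting}, and to dispose of the two boundary cases $k\in\{0,L\}$ separately.

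For $k=L$ the bound reads $\Csh(m,L)\le\Csh(m,L)$, which is trivial. For $k=0$ we have $\Csh(m,0)=0$, so it suffices to show $\Csh(m,L)\le m\kappa L$. To see this, fix any distribution of $\boldX$ and let $\boldY$ be the ordered tuple of noisy strings. Since $\cY$ is a function of $\boldY$,
\[
I(\boldX;\cY)\le I(\boldX;\boldY)=H(\boldY)-H(\boldY\mid\boldX)\le mL-mLh_2(p)=m\kappa L .
\]
Here $H(\boldY\mid\boldX)=mLh_2(p)$ because the noise is independent of $\boldX$ and consists of $mL$ independent Bernoulli$(p)$ bits. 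Taking the supremum over distributions of $\boldX$ gives the boundary case.

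For $0<k<L$, fix an arbitrary joint distribution of $\boldX=(\boldX_1,\ldots,\boldX_m)$ and form $\boldA$ and $\boldD$ as in Construction~\ref{con:suffix-sorting}. The construction gives $I(\boldX;\cY)\le I(\boldX;\boldA,\boldD)$. By the chain rule,
\[
I(\boldX;\boldA,\boldD)=I(\boldX;\boldA)+I(\boldX;\boldD\mid\boldA).
\]
Lemma~\ref{lem:prefix-suffix-information} bounds the first term by $\Csh(m,k)$ via \eqref{eq:sorted-prefix-information}, and the second by $m\kappa(L-k)$ via \eqref{eq:sorted-suffix-information}. Both bounds are uniform in the distribution of $\boldX$, so taking the supremum yields \eqref{eq:suffix-bound}.

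I do not expect a real obstacle, since the substantive work (independence of the reordered suffix noise) is already done in Lemma~\ref{lem:sorted-suffix-noise}. The one point to check is the inequality $I(\boldX;\cY)\le I(\boldX;\boldA,\boldD)$. This requires $\cY$ to be a deterministic function of $(\boldA,\boldD)$, which holds by concatenating entries position-wise and forgetting the order. The extra randomness used to break ties in $\Pi$ is independent of $\boldX$ and enters only through $(\boldA,\boldD)$, so the data-processing step is valid.
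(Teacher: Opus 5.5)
Your proposal is correct and follows essentially the same route as the paper. Like the paper, you dispose of $k=L$ trivially, handle $k=0$ by revealing the original order to get $\Csh(m,L)\le m\kappa L$, and treat $0<k<L$ through Construction~\ref{con:suffix-sorting} together with the chain rule and the two bounds of Lemma~\ref{lem:prefix-suffix-information}, taking the supremum at the end.
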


\begin{proof}
We first verify the boundary cases. If $k=L$, the
inequality reduces to $\Csh(m,L)\leq\Csh(m,L)$.
If $k=0$, all prefixes are empty and $\Csh(m,0)=0$.
To prove $\Csh(m,L)\leq mL\kappa$ in this case,
fix any joint distribution of
$\boldX=(\boldX_1,\ldots,\boldX_m)$, with
$\boldX_j\in\{0,1\}^L$ for every $j\in[m]$,
and reveal the original
string indices to the decoder. Let $\boldY$
be the length-$mL$ vector obtained by concatenating
the noisy strings in their original order.
This vector determines $\cY$, so
$I(\boldX;\cY)\leq I(\boldX;\boldY)$.
Moreover, $H(\boldY)\leq mL$, and the independent
BSC errors provide $H(\boldY\mid\boldX)=mLh_2(p)$.
Hence,
\begin{align*}
I(\boldX;\cY)
&\leq I(\boldX;\boldY)
=H(\boldY)-H(\boldY\mid\boldX)
\leq mL-mLh_2(p)=mL\kappa.
\end{align*}
Taking the supremum over all joint distributions
provides $\Csh(m,L)\leq mL\kappa
=\Csh(m,0)+m\kappa(L-0)$.

For $0<k<L$, fix any joint distribution of $\boldX$
and apply Construction~\ref{con:suffix-sorting}.
The pair $(\boldA,\boldD)$ determines $\cY$ by
joining corresponding prefixes and suffixes and
forgetting the order of the resulting noisy strings. Thus,
\begin{align}
I(\boldX;\cY)
&\leq I(\boldX;\boldA,\boldD)
=I(\boldX;\boldA)+I(\boldX;\boldD\mid\boldA) \overset{\eqref{eq:sorted-prefix-information}}{\leq}
\Csh(m,k)+I(\boldX;\boldD\mid\boldA)
\overset{\eqref{eq:sorted-suffix-information}}{\leq}
\Csh(m,k)+m\kappa(L-k).
\label{eq:suffix-information-proof}
\end{align}
Taking the supremum over all joint distributions
of $\boldX$ in~\eqref{eq:suffix-information-proof}
proves~\eqref{eq:suffix-bound}.
\end{proof}

We now combine Proposition~\ref{prop:short} and
Lemma~\ref{lem:suffix} into a bound for
$\Csh(m,L)$ that holds for every fragment count
$m$ and length $L$. 
\begin{proof}[Proof of Proposition~\ref{prop:uniform}]
To choose a prefix length to which the short-multiset bound in Proposition~\ref{prop:short} applies,
set~$k=\min\{L, \lfloor  b\log m\rfloor\}$ and then we have
\begin{equation*}
 \left\lceil2^{k(\kappa+\epsilon)}\right\rceil
 \le2m^{1-\delta},
 \qquad
 k\left\lceil2^{k(\kappa+\epsilon)}\right\rceil
 \le2b\,m^{1-\delta}\log(m+1).
\end{equation*}
The first inequality follows from $k(\kappa+\epsilon)\le b(\kappa+\epsilon)\log m=(1-\delta)\log m$, which shows $1\leq2^{k(\kappa+\epsilon)}\leq m^{1-\delta}$, and from $\lceil x\rceil\leq2x$ for $x\geq1$; the second inequality then follows from $k\leq b\log m\leq b\log(m+1)$. 
Combining
Proposition~\ref{prop:short} with Lemma~\ref{lem:suffix} provides
\begin{align*}
\Csh(m,L)
&\overset{\text{Lemma}~\ref{lem:suffix}}{\leq}
\Csh(m,k)+m\kappa(L-k)\\
&\overset{}{\leq}
k\left\lceil2^{k(\kappa+\epsilon)}\right\rceil
+Am+m\kappa(L-k)\\
&\overset{}{\leq}
2b\,m^{1-\delta}\log(m+1)+Am+m\kappa(L-k)\\
&\overset{(\dagger)}{\mathrel{\leq}}
m(\kappa L-\lambda\log m)^+
+(A+\kappa)m+2b\,m^{1-\delta}\log(m+1).
\end{align*}
where $(\dagger)$ follows from
$L-k\leq(L-b\log m)^++1$ and $\lambda=\kappa b$. 
Indeed, if $L\leq\lfloor b\log m\rfloor$, then $k=L$ and $L-k=0$. Otherwise, $k=\lfloor b\log m\rfloor>b\log m-1$ and $L\geq\lfloor b\log m\rfloor+1>b\log m$, so $L-k<(L-b\log m)+1=(L-b\log m)^++1$. Multiplying by $m\kappa$ shows $m\kappa(L-k)\leq m(\kappa L-\lambda\log m)^++\kappa m$. Hence~\eqref{eq:uniform-shuffling} holds with $c_2=A+\kappa$, which depends only on $p$ and $\epsilon$. 
\end{proof}

We now derive Theorem~\ref{thm:nsc} from Proposition~\ref{prop:uniform}.
\begin{proof}[Proof of Theorem~\ref{thm:nsc}]
By Definition~\ref{def:nsc}, the codeword of the NSC consists of
$K_n=n/\ell_n$ fragments of length~$\ell_n$, so we write its input as
$\boldX=(\boldX_1,\ldots,\boldX_{K_n})$, where
$\boldX_j\in\{0,1\}^{\ell_n}$ is the $j$th fragment. The output $\cY$
is the unordered multiset of the $K_n$ noisy fragments. This is exactly
the setting of Definition~\ref{definition:Csh}, and therefore
$\sup_{P_{\boldX}}I(\boldX;\cY)=\Csh(K_n,\ell_n)$.

Fix $\epsilon>0$ and $\delta\in(0,1)$, and let $b$, $\lambda$, and $c_2$
be as in Proposition~\ref{prop:uniform}; in particular,
$\lambda=(1-h_2(p))(1-\delta)/(1-h_2(p)+\epsilon)$. Applying
Proposition~\ref{prop:uniform} with $m=K_n$ and $L=\ell_n$, and dividing
by $n=K_n\ell_n$, yields
\begin{equation}
\frac{1}{n}\sup_{P_{\boldX}}I(\boldX;\cY)
=\frac{\Csh(K_n,\ell_n)}{n}
\leq\Bigl(1-h_2(p)-\lambda\frac{\log K_n}{\ell_n}\Bigr)^+
+\frac{c_2}{\ell_n}
+\frac{2bK_n^{-\delta}\log(K_n+1)}{\ell_n}.
\label{eq:nsc-bound}
\end{equation}
By Assumption~\ref{remark:assumptions}, $\log n/\ell_n\to\alpha\in(0,\infty)$,
so $\ell_n\to\infty$ and $K_n=n/\ell_n\to\infty$ as $n\to\infty$.
Moreover, $\log K_n/\ell_n=\log n/\ell_n-\log\ell_n/\ell_n\to\alpha$.
Hence, as $n\to\infty$, the first term on the right-hand side
of~\eqref{eq:nsc-bound} tends to $(1-h_2(p)-\lambda\alpha)^+$, and the
second and third terms tend to zero, which shows that
\[
\limsup_{n\to\infty}\frac{1}{n}\sup_{P_{\boldX}}I(\boldX;\cY)
\leq\bigl(1-h_2(p)-\lambda\alpha\bigr)^+.
\]
This bound holds for every $\epsilon>0$ and $\delta\in(0,1)$, and
$\lambda\to1$ as $\epsilon,\delta\to0$, which
proves~\eqref{eq:nsc-converse}.

Fano's inequality, as in the proof of
Corollary~\ref{cor:capacity-limit}, shows that every achievable rate is
at most $(1-h_2(p)-\alpha)^+$. 
Conversely, the index-based scheme
of~\cite{shomorony2019capacity} achieves every rate below
$1-h_2(p)-\alpha$ when this value is positive. Hence the capacity of
the NSC is $(1-h_2(p)-\alpha)^+$.
\end{proof}

\section{Discussion}
\label{sec:scope}
Our converse relies on properties of the binary symmetric channel, and
it would be interesting to extend it to other memoryless channels
acting on the fragments, or to channels with insertions and deletions,
where the noise also changes the lengths of the fragments seen by the
decoder.
The model of~\cite{ravi2024recovering} also allows fragments to be
lost, and a natural next step is to determine the capacity when
fragments can be both lost and corrupted by substitution errors.
In applications such as DNA storage, each fragment may also be read
several times, which suggests extending the converse to noisy fragments
observed with random multiplicities.
More broadly, the two main ingredients of our proof, approximating the
output distributions by a small fixed family and sorting the fragments
by their prefixes, may apply to other channels whose output is an
unordered collection of noisy strings.

\printbibliography
\appendices

\section{Omitted proofs}
\label{app:information}
\label{app:length-entropy}
\label{app:bernoulli}
\label{app:permutation}
\label{app:uniformity}

The next lemma bounds the sums over fragment lengths that appear in the
proof of Lemma~\ref{lem:partition}, when Proposition~\ref{prop:uniform}
is applied to each set of equal-length fragments.
Similar sums appear in~\cite[Appendix~G, eqs.~(89)--(90)]{ravi2024recovering}.
\begin{lemma}
\label{lem:partition-counts}
Fix integers $n\geq2$ and $1\leq K\leq n$, and partition
$n$ into $K$ positive integer lengths. Let $m_L$ be the
number of fragments of length $L$ and
$\cL=\{L\in[n]:m_L>0\}$.
Thus, we have $\sum_{L\in\cL}m_L=K$ and
$\sum_{L\in\cL}Lm_L=n$.
\begin{enumerate}
    \renewcommand{\labelenumi}{(\Alph{enumi})}
    \renewcommand{\theenumi}{(\Alph{enumi})}
\item\label{lemma:A}  We have $|\cL|(|\cL|+1)/2\leq n$, and hence $|\cL|\leq\sqrt{2n}$.
\item\label{lemma:B} 
\begin{align}
\sum_{L\in\cL}m_L\log\frac{n}{m_L}
&\leq 2K\log\frac{n}{K}+K\log e.
\label{eq:counts-entropy}
\end{align}
\item\label{lemma:C} For every fixed $\delta\in(0,1)$,
\begin{align}
\sum_{L\in\cL}m_L^{1-\delta}\log(m_L+1)
&\leq\log(n+1)(2n)^{\delta/2}K^{1-\delta}.
\label{eq:counts-power}
\end{align}
\end{enumerate}
\end{lemma}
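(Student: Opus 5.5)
Parts~(A) and~(C) follow from elementary counting and concavity. For part~(B), I would split off a $K\log(n/K)$ term and bound what remains by a maximum-entropy argument, using the constraint $\sum_L Lm_L=n$ rather than only the bound on $|\cL|$.

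For~\ref{lemma:A}, the lengths in $\cL$ are $|\cL|$ distinct positive integers, and each appears at least once. Hence
\[
n=\sum_{L\in\cL}Lm_L\geq\sum_{L\in\cL}L\geq1+2+\cdots+|\cL|=\frac{|\cL|(|\cL|+1)}2 .
\]
This gives $|\cL|^2\leq 2n$, so $|\cL|\leq\sqrt{2n}$.

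For~\ref{lemma:B}, I would write $\log(n/m_L)=\log(n/K)+\log(K/m_L)$. This gives
\[
\sum_{L\in\cL}m_L\log\frac{n}{m_L}=K\log\frac nK+K\,H(\Lambda),
\]
where $\Lambda$ is the positive-integer random variable with $\Pr(\Lambda=L)=m_L/K$. Its mean is $\bE[\Lambda]=\sum_L Lm_L/K=n/K=:\mu\geq1$.

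The key step is the maximum-entropy bound for a positive-integer random variable with mean $\mu$. The geometric distribution on $\{1,2,\ldots\}$ with parameter $1/\mu$ is the maximizer, so
\[
H(\Lambda)\leq \mu\,h_2(1/\mu).
\]
I would verify this via $D(P_\Lambda\|\mathrm{Geom}(1/\mu))\geq0$; the log-likelihood of the geometric distribution is linear in $L$, so only the mean enters.

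Next I would use $\mu h_2(1/\mu)=\log\mu+(\mu-1)\log\frac{\mu}{\mu-1}$ together with $(\mu-1)\log\bigl(1+\frac1{\mu-1}\bigr)\leq\log e$. This gives $H(\Lambda)\leq\log(n/K)+\log e$, and substituting it above yields~\eqref{eq:counts-entropy}. This is the one step I expect to need care: a naive bound $H(\Lambda)\leq\log|\cL|\leq\frac12\log(2n)$ is too weak when $K$ is large.

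For~\ref{lemma:C}, each $m_L\leq K\leq n$, so $\log(m_L+1)\leq\log(n+1)$. By concavity of $x\mapsto x^{1-\delta}$ (equivalently, H\"older's inequality with exponents $1/\delta$ and $1/(1-\delta)$),
\[
\sum_{L\in\cL}m_L^{1-\delta}\leq|\cL|^{\delta}\Bigl(\sum_{L\in\cL}m_L\Bigr)^{1-\delta}=|\cL|^\delta K^{1-\delta}.
\]
Combining this with $|\cL|\leq\sqrt{2n}$ from~\ref{lemma:A} gives $|\cL|^\delta\leq(2n)^{\delta/2}$. Multiplying the two bounds yields~\eqref{eq:counts-power}.
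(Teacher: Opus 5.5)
Your proposal is correct and follows the paper's proof essentially step for step: the same distinct-lengths counting for (A); for (B), the same decomposition into $K\log(n/K)+KH(V)$ with $V$ the length of a uniformly chosen fragment, bounded via $D(P_V\|\mathrm{Geom}(1/\mu))\geq0$ and $(\mu-1)\log\bigl(1+\frac{1}{\mu-1}\bigr)\leq\log e$; and for (C), the same concavity bound $\sum_L m_L^{1-\delta}\leq|\cL|^{\delta}K^{1-\delta}$ combined with (A). The only detail to add is the degenerate case $\mu=n/K=1$, where the expression $(\mu-1)\log\frac{\mu}{\mu-1}$ is undefined unless read as $0$; the paper handles this case separately by noting that $V=1$ almost surely, so $H(V)=0$.
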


\begin{proof}[Proof of Lemma~\ref{lem:partition-counts}. \ref{lemma:A}]
List the distinct fragment lengths in increasing order
as $L_1<\cdots<L_{|\cL|}$. 
Since the lengths are distinct
positive integers, $L_j\geq j$ for every
$j\in\{1,\ldots,|\cL|\}$. 
Moreover, each length occurs
at least once, so $m_L\geq1$ for every $L\in\cL$.
Therefore,
\begin{align*}
n=\sum_{L\in\cL}Lm_L
&\geq\sum_{j=1}^{|\cL|}L_j
\geq\sum_{j=1}^{|\cL|}j
=\frac{|\cL|(|\cL|+1)}{2}. \qedhere
\end{align*}
\end{proof}
\begin{proof}[Proof of Lemma~\ref{lem:partition-counts}. \ref{lemma:B}]
Choose one of the $K$ fragments uniformly and let $V$
be its length. 
Then $\Pr(V=L)=m_L/K$ for every
$L\in\cL$, and
$\bE[V]=\sum_{L\in\cL}Lm_L/K=n/K$.
Using $\sum_{L\in\cL}m_L=K$, we obtain
\begin{align*}
\sum_{L\in\cL}m_L\log\frac{n}{m_L}
&=\sum_{L\in\cL}m_L
\left(\log\frac{n}{K}+\log\frac{K}{m_L}\right)=K\log\frac{n}{K}+KH(V)
\overset{(\dagger)}{\leq}
2K\log\frac{n}{K}+K\log e.
\end{align*}
Here $(\dagger)$ follows from
Lemma~\ref{lem:integer-entropy} in Appendix~\ref{app:bernoulli}, applied to the
positive integer random variable $V$ with mean $n/K$.
\end{proof}

\begin{proof}[Proof of Lemma~\ref{lem:partition-counts}. \ref{lemma:C}]
Fix $\delta\in(0,1)$. Since $m_L\leq n$ for every
$L\in\cL$, we have
\begin{align*}
\sum_{L\in\cL}m_L^{1-\delta}\log(m_L+1)
&\leq\log(n+1)\sum_{L\in\cL}m_L^{1-\delta}
\overset{(\dagger)}{\leq}
\log(n+1)|\cL|^\delta K^{1-\delta}\overset{\text{Lemma}~\ref{lem:partition-counts}. \ref{lemma:A}}{\leq}
\log(n+1)(2n)^{\delta/2}K^{1-\delta}.
\end{align*}
Here $(\dagger)$ follows from the concavity of
$x^{1-\delta}$ and $\sum_{L\in\cL}m_L=K$, which provide
\begin{align*}
\frac1{|\cL|}\sum_{L\in\cL}m_L^{1-\delta}
&\leq
\left(\frac1{|\cL|}\sum_{L\in\cL}m_L\right)^{1-\delta}
=\left(\frac{K}{|\cL|}\right)^{1-\delta}. \qedhere
\end{align*}
\end{proof}
We next state the entropy continuity bound used in the proof of
Lemma~\ref{lem:sparse}.
 \begin{lemma}
\label{lem:entropy-continuity}
If $P$ and $Q$ are distributions on the same alphabet of size $M$
and $d=\frac12\|P-Q\|_1$, then
\begin{equation}
 |H(P)-H(Q)|\le h_2(d)+d\log M.
 \label{eq:entropy-continuity}
\end{equation}
\end{lemma}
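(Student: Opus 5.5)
This is the standard continuity bound for entropy (Fano-type), and I would prove it by a coupling argument. Since $d=\frac12\|P-Q\|_1$ is the total variation distance, there is a coupling $(U,V)$ with $U\sim P$, $V\sim Q$ and $\Pr(U\neq V)=d$. Concretely, put mass $\min\{P(y),Q(y)\}$ on the diagonal pairs $(y,y)$, which gives total diagonal mass $1-d$. When $d>0$, distribute the remaining mass as the product $(P-Q)^+(Q-P)^+/d$ off the diagonal. The cases $d=0$ (then $P=Q$) and $d=1$ are trivial or can be handled directly, so the bound is nontrivial only for $d\in(0,1)$.

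Let $E=\mathbf{1}_{\{U\neq V\}}$. I would write
\[
H(U)\le H(U,V)=H(V)+H(U\mid V)\le H(V)+H(E)+H(U\mid V,E).
\]
Here $H(E)=h_2(d)$. When $E=0$ we have $U=V$, so that branch contributes zero. When $E=1$, $U$ ranges over at most $M$ values (at most $M-1$ in fact), so $H(U\mid V,E)\le d\log M$. This gives $H(P)-H(Q)\le h_2(d)+d\log M$, and swapping the roles of $P$ and $Q$ gives the absolute value.

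The only step needing care is the existence of the optimal coupling, that is, checking that the constructed joint distribution has the correct marginals. The first marginal at $y$ is $\min\{P,Q\}(y)+(P-Q)^+(y)\cdot\frac{\sum_z (Q-P)^+(z)}{d}=P(y)$, using $\sum_z(Q-P)^+(z)=d$; the second marginal follows symmetrically. The off-diagonal mass is placed only on pairs with $y\neq z$, because $(P-Q)^+$ and $(Q-P)^+$ have disjoint supports. Hence $\Pr(U\neq V)=d$ exactly.

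Everything else is the chain rule together with the bound that entropy is at most the log of the alphabet size. I do not expect any step to be a serious obstacle, since this is the textbook proof (Cover--Thomas / Csiszár--Körner).
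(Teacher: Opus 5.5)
Your proof is correct and is essentially the paper's route. The paper cites Sason's bound $|H(P)-H(Q)|\le h_2(d)+d\log(M-1)$, which that reference derives by maximal coupling, and relaxes $M-1$ to $M$; your maximal coupling plus the Fano-type chain-rule step reproduces that argument, including the $\log(M-1)$ refinement you mention, so your write-up simply makes the citation self-contained (the coupling also works verbatim for $d=1$, so no separate edge-case treatment is needed).
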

\begin{proof}
The claim is trivial when $M=1$. For $M\geq2$, the standard
entropy continuity bound~\cite[Theorem~3]{Sason2013Entropy}
provides, with all logarithms taken to base two,
\[
 |H(P)-H(Q)|
 \leq h_2(d)+d\log(M-1)
 \leq h_2(d)+d\log M.\qedhere
\] 
\end{proof}
The next lemma bounds the entropy of a positive integer random variable
in terms of its mean; it is used in the proof of
Lemma~\ref{lem:partition-counts}. \ref{lemma:B}.
\begin{lemma}
\label{lem:integer-entropy}
Let $V$ be a random variable taking only finitely many
positive integer values and let
$\mu=\bE[V]\in[1,\infty)$ denote its mean.
Then $H(V)\leq\log\mu+\log e$.
\end{lemma}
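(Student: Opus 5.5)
The statement is Lemma~\ref{lem:integer-entropy}: for a positive integer random variable $V$ with finite support and mean $\mu$, we want $H(V)\leq\log\mu+\log e$. I would prove it by comparing the law of $V$ with a geometric reference distribution on the positive integers whose mean equals $\mu$, using nonnegativity of relative entropy (Gibbs' inequality).

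\textbf{Degenerate case.} If $\mu=1$, then $V=1$ almost surely. Hence $H(V)=0\leq\log e$ and there is nothing to prove.

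\textbf{Main case, $\mu>1$.} Set $q=1/\mu\in(0,1)$ and $G(j)=q(1-q)^{j-1}$ for $j\geq1$.
\begin{itemize}
\item Since $V$ has finite support and $G(j)>0$ everywhere, $D(P_V\|G)\geq0$. This gives
\begin{align*}
H(V)\leq-\bE[\log G(V)]=-\log q-(\bE[V]-1)\log(1-q)=\log\mu+(\mu-1)\log\frac{\mu}{\mu-1}.
\end{align*}
\item It remains to show $(\mu-1)\log\frac{\mu}{\mu-1}\leq\log e$. Write $(\mu-1)\ln\bigl(1+\frac{1}{\mu-1}\bigr)$ and apply $\ln(1+x)\leq x$ with $x=1/(\mu-1)$. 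The product is then at most $1$ in natural units, i.e., at most $\log e$ in bits.
\end{itemize}

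The only delicate points are bookkeeping. The base of the logarithm is handled by the conversion factor $\log e$. The case $\mu=1$ must be separated because there $q=1$ and $\log(1-q)$ is undefined. Gibbs' inequality applies without trouble, since $G$ has full support on the positive integers.

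The main obstacle, if any, is recognising that the geometric law is the right comparison distribution. Once that choice is made, the rest is a one-line calculation.
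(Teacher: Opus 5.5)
Your proposal is correct and follows the paper's proof. Both compare $P_V$ with the geometric law of mean $\mu$ via $D(P_V\|Q)\geq 0$ and then bound $(\mu-1)\log\bigl(1+\frac{1}{\mu-1}\bigr)\leq\log e$ using $\ln(1+x)\leq x$, with the case $\mu=1$ handled separately.
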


\begin{proof}
If $\mu=1$, then $V=1$ with probability one, so
$H(V)=0$. 
For $\mu>1$, let $P_V$ denote the distribution of $V$
and introduce the geometric distribution
\begin{align*}
Q(j)&=\frac1{\mu}\left(1-\frac1{\mu}\right)^{j-1},
\qquad j=1,2,\ldots.
\end{align*}
We choose $Q$ to have the same mean $\mu$ as $V$.

Since $V$ takes only finitely many values, the sum
defining $D(P_V\|Q)$ contains finitely many nonzero
terms. Moreover, $Q(j)>0$ for every positive integer
$j$, so each nonzero term is finite. Expanding the
logarithm and using nonnegativity of relative entropy
provides
\begin{align*}
0\leq D(P_V\|Q)
&=\sum_{j:P_V(j)>0}P_V(j)
\log\frac{P_V(j)}{Q(j)}
=-H(V)-\bE[\log Q(V)].
\end{align*}
Rearranging therefore yields
$H(V)\leq-\bE[\log Q(V)]$.

To evaluate this upper bound, the definition of $Q$
provides
$\log Q(V)=-\log\mu+(V-1)\log(1-1/\mu)$.
Since $\bE[V-1]=\mu-1$, we obtain
\begin{align*}
H(V)
&\leq-\bE[\log Q(V)]
=\log\mu-\bE[V-1]\log(1-1/\mu)\\
&=\log\mu-(\mu-1)\log(1-1/\mu)
=\log\mu+(\mu-1)\log\left(1+\frac1{\mu-1}\right)\\
&\overset{(\dagger)}{\leq}\log\mu+\log e.
\end{align*}
Here $(\dagger)$ follows from
$\log(1+x)\leq x\log e$ for $x\geq0$, applied with
$x=1/(\mu-1)$.
\end{proof}
Finally, we state the tail bounds for binomial random variables used in
the proof of Lemma~\ref{lem:truncated-bsc}.
\begin{lemma}
\label{lem:bernoulli}
Let $k$ be a positive integer and let $p\in(0,1)$. Let $Z$ be
Binomial$(k,p)$, that is, the sum of $k$ independent
Bernoulli$(p)$ bits.
For $s>0$ with $p+s<1$,
$\Pr(Z\ge k(p+s))\le e^{-2ks^2}$.
If also $p-s>0$, then
$\Pr(|Z-kp|>ks)\le2e^{-2ks^2}$.
\end{lemma}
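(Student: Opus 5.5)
This is the standard Hoeffding tail bound for a sum of independent bounded variables, so I will prove it by the Chernoff method with Hoeffding's lemma.

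Write $Z=\sum_{i=1}^k B_i$, where the $B_i$ are independent Bernoulli$(p)$. For any $t>0$, Markov's inequality applied to $e^{tZ}$, together with independence, gives
\[
\Pr(Z\ge k(p+s))\le e^{-tk(p+s)}\bigl(\bE[e^{tB_1}]\bigr)^k=\Bigl(e^{-t(p+s)}(1-p+pe^t)\Bigr)^k .
\]
I will then invoke Hoeffding's lemma for a variable taking values in $[0,1]$, which states that $\bE[e^{t(B_1-p)}]\le e^{t^2/8}$. This yields the bound $e^{-tks+kt^2/8}$. Choosing $t=4s$ gives $e^{-2ks^2}$.

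If I prove Hoeffding's lemma inline, I would set $\phi(t)=\ln(1-p+pe^t)-tp$ and check three facts. First, $\phi(0)=\phi'(0)=0$. Second, $\phi''(t)=q(1-q)\le 1/4$, where $q=pe^t/(1-p+pe^t)\in(0,1)$. Third, Taylor's theorem with remainder then gives $\phi(t)\le t^2/8$. This is the only step with any content, and it is routine. The hypothesis $p+s<1$ is used only to make the event nontrivial; it is not needed for the estimate itself.

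For the two-sided bound, I apply the same argument to $k-Z$, which is Binomial$(k,1-p)$ and has mean $k(1-p)$. The event $\{Z\le k(p-s)\}$ equals $\{k-Z\ge k((1-p)+s)\}$. Since $(1-p)+s<1$ exactly when $p-s>0$, the first part applies and gives probability at most $e^{-2ks^2}$. The event $|Z-kp|>ks$ is contained in $\{Z\ge k(p+s)\}\cup\{Z\le k(p-s)\}$, so a union bound gives $\Pr(|Z-kp|>ks)\le 2e^{-2ks^2}$.
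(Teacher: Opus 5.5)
Your proof is correct and follows the same route as the paper: Hoeffding's bound for each tail with deviation $ks$, followed by a union bound. The only difference is that you derive the bound inline, via the Chernoff method with Hoeffding's lemma, and obtain the lower tail by reflecting to $k-Z\sim$ Binomial$(k,1-p)$; the paper instead cites Hoeffding's inequality directly for both tails.
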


\begin{proof}
Write $Z=\sum_{i=1}^{k} X_i$, where the $X_i$ are independent
Bernoulli$(p)$ random variables. Since $0\leq X_i\leq1$ and
$\bE[Z]=kp$, Hoeffding's inequality~\cite{Hoeffding1963} provides,
for every $t>0$,
\[
 \Pr(Z-kp\geq t)\leq e^{-2t^2/k},
 \qquad
 \Pr(Z-kp\leq-t)\leq e^{-2t^2/k}.
\]
Taking $t=ks$ proves the upper-tail bound. Adding the
upper- and lower-tail bounds provides
$\Pr(|Z-kp|>ks)\leq2e^{-2ks^2}$.\qedhere

\end{proof}
\end{document}